\numberwithin{equation}{section}
\newtheorem{thm}{Theorem}[section]
\newtheorem{prop}[thm]{Proposition}
\newtheorem{rem}{Remark}[section]
\newcommand{\eq}[1]{(\ref{#1})}
\renewcommand{\Re}{\operatorname{\rm Re}}
\renewcommand{\Im}{\operatorname{\rm Im}}
\newcommand{\beqast}{\begin{eqnarray*}}
\newcommand{\eqast}{\end{eqnarray*}}
\newcommand{\beqa}{\begin{eqnarray}}
\newcommand{\eqa}{\end{eqnarray}}
\newcommand{\bbe}{\begin{equation}}
\newcommand{\ee}{\end{equation}}
\renewcommand{\Re}{\operatorname{\rm Re}}
\renewcommand{\Im}{\operatorname{\rm Im}}
\newcommand{\bC}{{\mathbb C}}
\newcommand{\bE}{{\mathbb E}}
\newcommand{\bQ}{{\mathbb Q}}
\newcommand{\bR}{{\mathbb R}}
\newcommand{\bZ}{{\mathbb Z}}
\newcommand{\cK}{{\mathcal K}}
\newcommand{\cT}{{\mathcal T}}
\newcommand{\cL}{{\mathcal L}}
\newcommand{\cM}{{\mathcal M}}
\newcommand{\cN}{{\mathcal N}}
\newcommand{\cC}{{\mathcal C}}
\newcommand{\cU}{{\mathcal U}}
\newcommand{\hG}{{\hat G}}
\newcommand{\hf}{{\hat f}}
\newcommand{\al}{\alpha}
\newcommand{\be}{\beta}
\newcommand{\De}{\Delta}
\newcommand{\de}{\delta}
\newcommand{\eps}{\epsilon}
\newcommand{\ka}{\kappa}
\newcommand{\la}{\lambda}
\newcommand{\lp}{\lambda_+}
\newcommand{\lm}{\lambda_-}
\newcommand{\La}{\Lambda}
\newcommand{\mum}{\mu_-}
\newcommand{\mup}{\mu_+}
\newcommand{\sg}{\sigma}
\newcommand{\om}{\omega}
\newcommand{\ze}{\zeta}
\newcommand{\ga}{\gamma}
\newcommand{\gap}{\gamma_+}
\newcommand{\gam}{\gamma_-}
\newcommand{\Ga}{\Gamma}
\newcommand{\dd}{\partial}
\newcommand{\hh}{\hat h}
\newcommand{\bfo}{{\bf 1}}
\begin{document}

\title[Reliable pricing in and calibration of the rough Heston model]
{Correct implied volatility shapes  and reliable pricing in  the rough Heston model}

\author[
Svetlana Boyarchenko and
Sergei Levendorski\u{i}]
{
Svetlana Boyarchenko and
Sergei Levendorski\u{i}}

\begin{abstract}
We use 
modifications of  the Adams method   and 
very fast and accurate sinh-acceleration method of the Fourier inversion (iFT)
(S.Boyarchenko and Levendorski\u{i}, IJTAF 2019, v.22) to evaluate prices of vanilla options; for options of moderate and long maturities and strikes not very far from the spot, thousands of prices can be calculated in several msec. with relative errors of the order of 0.5\% and smaller running Matlab on a Mac with moderate characteristics.  We demonstrate that for the calibrated set of parameters  in 
Euch and Rosenbaum, Math. Finance 2019, v. 29, the correct implied volatility surface is significantly flatter
and fits the data very poorly,  hence,   the calibration results in op.cit.
 is an example of the {\em ghost calibration} (M.Boyarchenko and Levendorki\u{i},
Quantitative Finance 2015, v. 15): the errors of the model and numerical method almost cancel one another. 
We explain how calibration errors of this sort
are generated by each of popular versions of numerical realizations of iFT 
(Carr-Madan, Lipton-Lewis and COS methods) with
prefixed parameters of a numerical method, resulting in spurious volatility  smiles and skews.    We suggest a general {\em Conformal Bootstrap principle} which allows one to avoid ghost calibration errors. 
 We outline schemes of application of  Conformal Bootstrap principle and the method of the paper to the design of accurate and fast calibration procedures.

\end{abstract}

\thanks{
\emph{S.B.:} Department of Economics, The
University of Texas at Austin, 2225 Speedway Stop C3100, Austin,
TX 78712--0301, {\tt sboyarch@utexas.edu} \\
\emph{S.L.:}
Calico Science Consulting. Austin, TX.
 Email address: {\tt
levendorskii@gmail.com}}
\maketitle

\noindent
{\sc Key words:} rough Heston model, fractional Adams method, Fourier transform, sinh-acceleration, CM method, COS method, Lewis method, calibration, conformal bootstrap principle

\noindent
{\sc MSC2020 codes:} 60-08,60E10,60G10, 60G22,65C20,65D30,65G20,91G20,91G60\tableofcontents

\section{Introduction}\label{s:intro}
Starting with the celebrated Heston model \cite{heston-model}, affine models have become one of the most popular class of stochastic
volatility models, term structure models, and models in FX. The popularity is due to the fact that
the characteristic function in an affine model can be explicitly calculated solving an associated system
of generalized Riccati equations \cite{DFS}, hence, the Fourier transform technique allows
one to express prices of options of the European type as oscillatory integrals. However, many affine diffusion models fail to
accurately reproduce the volatility dynamics (models with jumps can reproduce wide variety of volatility surfaces but are not popular among practitioners because simple hedging is impossible).
 As a remedy, rough volatility models are suggested. In a number of publications, it is stated that the important advantage
of rough volatility models is their ability to accurately reproduce the volatility surface. In particular, contrary to the Heston model and many other affine models, the ATM skew
in rough volatility models explodes as the maturity goes to 0.  See \cite{BayerFritzGatheral2015,GatheralJaissonRosenbaum2018,JacquierMartiniMuguruza2018,EuchRosenbaum2019,FordeZhang2017,FordeZhang2017,FordeSmithViitasaary2021,FrizGassiatPigato2021,FrizGassiatPigato2022} and the bibliographies therein. However, in several recent empirical publications, e.g.,
\cite{Romer2022,JaberLi2024}, the authors find that rough volatility models do not reproduce the volatility surface
accurately and certain affine diffusion models perform the task better. In particular, one of the general conclusions
in the abstract of \cite{JaberLi2024} is ``The skew of rough volatility models increases too fast on the short end, and decays too slow on the longer end\ldots". The first aim of the paper is to analyze to which extent incorrect shapes of implied volatility surfaces are caused not by models per se but by inaccurate numerical methods used for pricing in the calibration procedure.

Among the host of rough volatility models,
we consider the model in \cite{EuchRosenbaum2019} 
where the log-characteristic function has an affine structure as in standard affine models but can be calculated
only numerically solving the fractional Riccati equation. Hence, accurate calculations are difficult.
The last remark pertains to calculations in a host of affine models as well if the log-characteristic function can be calculated only numerically solving the system of generalized
 Riccati equations  - see the analysis in \cite{pitfalls}.
 In the case of the fractional Riccati equations, the potential for serious errors is greater. 
 The errors are not detected
 by calibration algorithms. 
In particular, the problem of a reliable calibration 
 of deep neural networks (DNN) in application to option prices has two sources of unreliability: DNN itself and a pricing algorithm. The former problem is extensively studied in the literature: see, e.g.,
 \cite{Berneretall2022}. The aim of the paper is to study the latter problem, and suggest remedies.

 Serious discrepancies between the empirical volatility curves and
 curves in the Heston model calibrated using  the Carr-Madan (CM method) \cite{carr-madan-FFT} and COS method
 \cite{COS} method are well-documented
 in an extensive empirical study \cite{HestonCalibMarcoMeRisk}. In particular, it was demonstrated
 that if CM or COS methods are used, the implied volatility curves significantly differ from the correct ones for short and long maturities; for maturities from 3 month to 1 year the errors are sizable but not very large (of the order of 10\%). 
 Earlier, the errors of CM, COS and Lewis \cite{lewisFT} methods
in applications to pricing in the Heston model and CIR2 model were demonstrated with stylized examples
in \cite{paraHeston,pitfalls}. Thus,  calibration errors due to insufficiently accurate pricing methods can be rather large even in the case of the Heston model, for which the explicit formula for the characteristic exponent is available; if the characteristic exponent can be calculated only numerically, the errors can snowball \cite{pitfalls}, and popular inaccurate methods cannot produce accurate results
close and far from maturity and for deep OTM options.  In the case of the rough Heston model,
  the fractional Adams method  is typically used to evaluate
 the characteristic function. The method is prone to large errors, hence, one can expect large total errors.  
 In the present paper, we demonstrate that 
 the correct ATM skew in the model with the parameters 
\bbe\label{parEuRos}
\al=0.62,\ \ga=	0.1, \	 \rho=-0.681,\  \theta=0.3156, \ \nu=0.331, \ v_0=0.0392,
\ee
calibrated to the real data in 
 \cite{EuchRosenbaum2019}, is several times lower than the one produced by the numerical method in \cite{EuchRosenbaum2019}  and decays rather fast at $T$ increases to $T=5$ (cf. \cite[Fig. 5.2]{EuchRosenbaum2019} and Fig.~\ref{Set1Skews}); the implied volatility curves are also incorrect (cf. \cite[Fig. 5.1]{EuchRosenbaum2019} and curves in Fig.~\ref{Set1Curves}). 
 In \cite{EuchRosenbaum2019}, the
 Lewis method for the Fourier inversion and Adams method are used, as in a number of later publications. See, e.g.,  \cite{Imperial2020}. We produce the correct implied volatility curves for one of examples in \cite{Imperial2020} (see Fig.~\ref{ImperialCurves}),
 which shows that the errors in \cite{Imperial2020} are also sizable,  in the wings especially.
 In both papers, the curves are produced for moderate maturities only. We will show that the errors of the Lewis 
 and Adams methods increase as the maturity decreases. The same observations, to  larger degrees, holds for  CM and COS methods.
 In particular, the errors inherent in CM method can produce spurious nice curves when the correct curves are
 almost straight slopes (see Fig. \ref{Set1ImpVolsurfacesXiT152}).
 Thus, it is possible that the drawbacks of the rough Heston models documented
 in  \cite{JaberLi2024} are artifacts of inaccurate pricing procedures used for the calibration; the veracity of the conclusions about the performance of affine diffusion models in  \cite{Romer2022,JaberLi2024} also strongly depends on the quality
 of the numerical methods.

If an insufficiently accurate pricing algorithm is used, a certain set of parameters of the model can be rejected only because for this set, the chosen numerical method cannot calculate prices even remotely accurately (the effect of {\em sundial calibration}
 \cite{paraHeston}). On the other hand, at the boundary of the region of the parameter space where the errors of the method are not too large, the ``true calibration error" (the error that would be calculated using an error-free pricer) and error of the numerical method can cancel one another,
 and an incorrect model declared a good fit: {\em the ghost calibration} \cite{one-sidedCDS} (at the boundary of a well-lighted region, one sees ghosts). Therefore, the calibration result in \cite{EuchRosenbaum2019} is, apparently, the result of the ghost calibration. In Sect. \ref{s:numer}, we produce  a series of numerical examples which demonstrate that   the popular  methods
 are prone to sizable errors with serious implications for model calibration;  the study of errors is applicable to any model that uses the Fourier transform technique, not to the rough Heston model only.
 
 To demonstrate these effects, we use a very accurate and fast method for the Fourier inversion
 (sinh-acceleration \cite{SINHregular}), and
simple novel modifications of  the fractional
Adams method for the solution of the fractional Riccati equation, amenable to more accurate calculations
in the presence of a large spectral parameter
(Sect. ~\ref{s:improved}). The standard (fractional) Adams method does not take this factor into account, thereby introducing large errors when options of short maturities are priced.
 Large errors of
the Adams method are documented
in \cite{RoughNotTough}, where the fractional Riccati equation is solved using
 the asymptotic expansion of the solution of the fractional
Riccati equation near 0 and the Richardson-Romberg extrapolation \cite{Pages2007} farther from 0, and CM method
is applied.
For the example considered in \cite{RoughNotTough}, the method in the present paper is significantly faster and, for options of short maturities, produces more accurate results than the hybrid  and CM methods taken together. Table \ref{table:T=0.5:5} demonstrates that for options of moderate maturities and strikes not far from the spot, thousands of prices can be calculated in less than 2 msec. with relative errors of the order of 0.5\% and smaller running Matlab on a Mac with moderate characteristics. The reader observes that the accuracy
decreases as the maturity decreases and as the strike moves from the spot. 
In Sect.~\ref{s:numer}, the reader can find additional tables which show that: 1) the relative errors
of implied volatilities are smaller, for far OTM options, much smaller; 2) for strikes in narrower ranges
around the spot, the same accuracy can be achieved several times faster using smaller grids for the Fourier inversion and smaller number of time steps;  3) as the maturity decreases, the range around the spot where a fixed error tolerance can be achieved using grids of moderate sizes shrinks, and the lengths of the grids increase; 4) however, in the region where the OTM option prices are larger than 0.001\% of the spot price, the same error tolerance can be satisfied in several hundred of milliseconds.
 \begin{table}
\caption{\small Simultaneous calculation for  $(K,T)\in [600, 1400] \times[0.5,5]$ using  SINH-acceleration. Prices of OTM and ATM options in the rough Heston model with the parameters \eq{parEuRos}, and absolute and relative errors of prices. Strike $S_0=1000$.
}
{\small 
\begin{tabular}{c|ccccccccc}
\hline\hline
$K$ & 600 & 700 & 800 & 900 & 1000 & 1100 & 1200 & 1300 &1400
\\\hline
T=5 & 61.0845 &	96.9915 &	141.184&	192.939 &	251.436 &	215.848 &
	185.394 &	159.363&	137.122\\
err. &-0.0067 &	-0.0084 &	-0.0097 &	-0.011 &	-0.011 &	-0.010 &	-0.011 &	-0.011 &	-0.011 
\\
  rel.err. & -1.1E-04 &	-8.6E-05 &	-6.9E-05 &	-5.6E-05 &	-4.5E-05 &	-4.8E-05 &	-5.9E-05 &	-7.1E-05 &	-8.4E-05
\\\hline
T=0.5 & 0.06386 &	0.90464 &	6.086 &	23.898&	63.439 &	27.207 &	9.738 &	2.9247 &	0.74450\\
err. & -8.4E-04 &	-0.0065 &	-0.024 &	-0.047 &	-0.056 &	-0.043 &	-0.022 &	-0.0084 &	-0.0022 \\
 rel.err. & -0.013 &	-0.0071 &	-0.0040  &	-0.0020 &	-8.8E-04 &	-0.0016 &	-0.0023 &	-0.0029 &	-0.0030
   \\\hline

\end{tabular}
}
\begin{flushleft}{\tiny
Sinh-acceleration with $\om_1=0.4293$, $	b=0.8687$, $\om=0.1$, 	
$\ze=0.2405$, $N=11$ for OTM and ATM puts and $\om_1=-1.4293$, $	b=0.8687$, $\om=-0.1$, 	
$\ze=0.2405$, $N=11$ for OTM calls. The characteristic function is evaluated at points $(\xi,t)$, where
$\xi$ are on the grids used in the sinh-acceleration, and $t=j\De, j=1,2,\ldots, 80$, $\De=5/80$,
using Modification III of Adams method with 2 iterations at each time step.  Total CPU time, average over 1000 runs: approx 7.7 msec at 3,200 points.
We show only prices at $T=0.5, 5$ and several strikes. 
}
\end{flushleft}
\label{table:T=0.5:5}
 \end{table}
 
We are able to obtain reliable error bounds and produce results with relative errors of the order of E-05
and better
using 
  {\em Conformal Bootstrap principle} formulated in Sect.~\ref{ss: conformal bootstrap} and used in numerical examples in Sect.~\ref{s:numer}. The principle is used to
reliably assess the total error of the numerical methods for the Fourier inversion and  evaluation
of the integrand.  The principle is especially useful in  complicated models such as the rough Heston model, where the analytical properties of the integrand necessary for the derivation of accurate error bounds in a Fourier inversion algorithm are unknown, and accurate bounds for errors of numerical calculations of the characteristic function are lacking.


The rest of the paper is organized as follows. The modified Adams method, several methods for the Fourier inversion
(the conformal bootstrap principle  including) 
and numerical examples are  in Sect. \ref{s:improved}, \ref{s:FT} and \ref{s:numer}, respectively. 
In Sect.~\ref{s:calibration}, we outline  several schemes of  reliable and fast calibration procedures based
on  the Conformal Bootstrap principle and sinh-acceleration. The same schemes can be applied to any model where the characteristic function is calculated using the inverse Fourier transform. Sect. \ref{s:concl}
concludes. The modification of the Adams method with non-uniform grids and
additional Figures and Tables and are  in Appendix.


\section{Formulas for characteristic function and modifications of fractional Adams method}\label{s:improved}

\subsection{Formulas for the characteristic function}\label{ss: char_function}
\subsubsection{The rough Heston model constructed in \cite{EuchRosenbaum2019}} 
Let $\al\in (0,1)$, $v, \ga,\theta,\nu>0$ and $\rho\in (-1/\sqrt{2}, 1/\sqrt{2})$; in \cite{EuchRosenbaum2017},
it is proved that $\rho\in (-1,1)$ is admissible. The (conditional) characteristic function 
of the log-price  $\Phi_\al(t,T,v,\xi):=\bE[e^{i\xi X_T}\ |\ X_t=0, V_t=v]$ in the rough Heston model is of the form
\bbe\label{chFRough}
\Phi_\al(t,T; v,\xi)=\exp[g_1(\xi,\tau)+vg_2(\xi,\tau)],
\ee
where $\tau=T-t$, 
\bbe\label{eq:g1g2}g_1(\xi,\tau)=\theta\ga\int_0^\tau h(\xi,s)ds,\ g_2(\xi,\tau)=I^{1-\al}h(\xi,\tau),\ee
and $h(\xi,\cdot)$ is the solution of the fractional Riccati equation
\bbe\label{RiccRough}
D^\al_t h(\xi, t)=-\frac{1}{2}(\xi^2+i\xi)+\ga(i\xi\rho\nu-1)h(\xi,t)+\frac{(\ga\nu)^2}{2}h(\xi,t)^2,
\ee
subject to $I^{1-\al}h(\xi,0)=0$. Recall that, for $\al\in (0,1)$, $I^a$ and $D^\al$ are the fractional integral and differential operators:
\beqa\label{defIal}
I^\al u(t)&=&\frac{1}{\Ga(\al)}\int_0^t (t-s)^{\al-1}u(s)ds,\\
D^\al u(t)&=&\frac{1}{\Ga(1-\al)}\frac{d}{dt}\int_0^t (t-s)^{-\al}u(s)ds.
\eqa
Introduce the notation
\bbe\label{defF}
F(\xi,h)=-\frac{1}{2}(\xi^2+i\xi)+\ga(i\xi\rho\nu-1)h+\frac{(\ga\nu)^2}{2}h^2.
\ee
Equation \eq{RiccRough} subject to $I^{1-\al}h(\xi,0)=0$ is equivalent to the following  Volterra equation
\bbe\label{Volterra}
h(\xi,t)=I^\al F(\xi,t)=\frac{1}{\Ga(\al)}\int_0^t(t-s)^{\al-1}F(\xi,h(\xi,s))ds.
\ee
In \cite{EuchRosenbaum2019}, \eq{Volterra} is solved (numerically) using the fractional Adams method. It is 
not explained how $g_1$ and $g_2$ are evaluated. Presumably, using the piece-wise linear interpolation as in the fractional Adams method: the trapezoid rule and fractional trapezoid rule, respectively. Since $h$ is not smooth at 0 and an additional fractional integral needs to be evaluated, the errors increase. 
We use the following version of \eq{chFRough}, thereby avoiding additional errors.
\begin{prop}\label{prop:newRoughHestonChExp}
Let $\al\in (0,1)$, $v, \ga,\theta,\nu>0$, $\rho\in (-1, 1)$, and let $h(\xi,t)$ be
the solution of \eq{Volterra}. Then
\bbe\label{chFRough2}
\Phi_\al(t,T,v,\xi)=\exp\left[\int_0^\tau (\ga\theta h(\xi,s)+vF(\xi,h(\xi,s)))ds\right].
\ee
\end{prop}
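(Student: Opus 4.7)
The plan is essentially to recognize that the new formula \eq{chFRough2} is just \eq{chFRough} after evaluating the fractional integral $I^{1-\al}h(\xi,\tau)$ explicitly via the semigroup property of the Riemann–Liouville integrals. Since the first summand in the exponent of \eq{chFRough} already reads $g_1(\xi,\tau)=\theta\ga\int_0^\tau h(\xi,s)\,ds$, which matches the $\ga\theta h$--term in \eq{chFRough2}, everything comes down to showing
\bbe\label{pf:target}
g_2(\xi,\tau)\;=\;I^{1-\al}h(\xi,\tau)\;=\;\int_0^\tau F(\xi,h(\xi,s))\,ds.
\ee

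To establish \eq{pf:target} I would proceed as follows. First, invoke the defining Volterra relation \eq{Volterra}, which asserts $h(\xi,\cdot)=I^\al F(\xi,h(\xi,\cdot))$ as functions on $[0,\tau]$. Apply $I^{1-\al}$ to both sides. The Riemann–Liouville operators satisfy the semigroup identity $I^a I^b=I^{a+b}$ for $a,b>0$ whenever the inner integrand is locally integrable, so the right-hand side becomes $I^{1-\al}I^\al F(\xi,h(\xi,\cdot))=I^1 F(\xi,h(\xi,\cdot))$, and $I^1$ is simply the ordinary antiderivative starting at $0$. Therefore
\[
I^{1-\al}h(\xi,t)=\int_0^t F(\xi,h(\xi,s))\,ds,
\]
which is \eq{pf:target}. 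Substituting this and the formula for $g_1$ into \eq{chFRough} then yields \eq{chFRough2}.

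The only technical point to check is the applicability of the semigroup property, i.e.\ that $s\mapsto F(\xi,h(\xi,s))$ is locally integrable on $[0,\tau]$. This is immediate once one knows that $h(\xi,\cdot)$ is continuous (indeed Hölder of order $\al$) on $[0,\tau]$: since $F$ is a polynomial of degree two in $h$ with coefficients depending on the fixed parameters $\xi,\ga,\rho,\nu$, the composition $F(\xi,h(\xi,\cdot))$ is continuous and hence integrable on any bounded interval, so the identity $I^{1-\al}I^\al=I^1$ (which in this regime is a classical consequence of Fubini applied to the Beta-function convolution $(t-s)^{-\al}*(s-r)^{\al-1}$) goes through without subtlety. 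The regularity of $h$ itself is ensured by standard well-posedness results for the fractional Riccati equation \eq{RiccRough} recalled in \cite{EuchRosenbaum2019}.

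Thus the proof is short: it is a one-line manipulation using $I^{1-\al}I^\al=I^1$, with the only mild obstacle being the verification that the integrand has enough regularity for that identity to apply — and this is automatic from the continuity of the solution to \eq{Volterra}.
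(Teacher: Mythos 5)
Your proof is correct and matches the paper's approach: the paper's entire proof is the one-line observation that $I^{1-\al}I^\al=I^1$, which is exactly the identity you apply to the Volterra equation to convert $g_2=I^{1-\al}h$ into $\int_0^\tau F(\xi,h(\xi,s))\,ds$. The additional regularity remarks you supply are a reasonable fleshing-out of what the paper leaves implicit.
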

\begin{proof} It suffices to note that $I^{1-\al}I^\al=I^1$.
\end{proof}
Assuming that $h$ is evaluated sufficiently accurately, \eq{chFRough2} 
allows one to use only the trapezoid rule
and usual quadratures of higher order outside an appropriate vicinity of 0, and since we avoid the additional
fractional integration, the accuracy of the final result increases. 

\begin{rem}{\rm In \cite{RoughNotTough}, the modified version is rejected as being more complicated for
the asymptotic method used. For the modifications of the Adams method that we use, the advantages
of \eq{chFRough2} are significant, for options of short maturity especially.
}
\end{rem}

\subsubsection{A generalization of \eq{chFRough2}}
In \cite[Theorem 4.3 and Example 7.2]{AbiJaberLarssonPulido10} (see also \cite[Eq. (9)-(10), (12)]{RoughNotTough}), one finds a generalization of the rough Heston model \cite{EuchRosenbaum2019}. Assuming $\Im\xi\in [-1,0]$ and $\Im\eta<0$, the characteristic 
function of the joint distribution of $(X_T, V_T)$ 
\[
\Phi_\al(t,T,v,\xi,\eta):=\bE[e^{i\xi X_T+i\eta V_T}\ |\ X_t=0, v_T=v]
\]
admits the representation
\bbe\label{chFRoughV}
\Phi_\al(t,T,v,\xi,\eta)=\exp[g_1(\xi,\eta,\tau)+vg_2(\xi,\eta, \tau)],
\ee
where $\tau=T-t$, $g_1$ and $g_2$ are defined via a function $h(\xi,\eta,\tau)$ as above.
For $(\xi,\eta)$ fixed, $h(\xi,\eta, t)$ is the solution to the Volterra equation
\bbe\label{VolterraV}
h(\xi,\eta, t)=\frac{1}{\Ga(\al)}\left(i\eta t^{\al-1}+\int_0^t(t-s)^{\al-1}F(\xi,h(\xi,\eta, s))ds\right),
\ee
where $F(\xi,h)$ is given by \eq{defF}. Applying $I^{1-\al}$ to \eq{VolterraV}
and taking into account that $I^{1-\al}t^{\al-1}=\Ga(\al)$,
we obtain the following analog of 
Proposition \ref{prop:newRoughHestonChExp}.
\begin{prop}\label{prop:newRoughHestonChExpV}
Let $\al\in (0,1)$, $v, \ga,\theta,\nu>0$, $\rho\in (-1, 1)$, and let $h(\xi,\eta, t)$ be
the solution of \eq{VolterraV}. Then
\beqast\label{chFRoughV2}
\Phi_\al(t,T,v,\xi,\eta)
&=&\exp\left[i\eta \left(v+\frac{\ga\theta t^\al}{\Ga(\al+1)}\right)+\int_0^\tau (\ga\theta h(\xi,\eta, s)+vF(\xi,h(\xi,\eta, s)))ds\right].
\eqast
\end{prop}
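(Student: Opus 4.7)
The plan mirrors the one-line argument of Proposition~\ref{prop:newRoughHestonChExp} and differs only in tracking the extra initial-condition driver $i\eta t^{\al-1}/\Ga(\al)$ that \eq{VolterraV} carries when $\eta\neq 0$. The workhorse identity remains the fractional semigroup relation $I^{1-\al}I^\al=I^1$, supplemented by the boundary identity $I^{1-\al}t^{\al-1}=\Ga(\al)$ flagged in the text immediately before the statement.

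Applying $I^{1-\al}$ to both sides of \eq{VolterraV} and using these two identities gives
\[
I^{1-\al}h(\xi,\eta,\tau)=i\eta+\int_0^\tau F(\xi,h(\xi,\eta,s))\,ds,
\]
the natural joint analog of $g_2$ in Proposition~\ref{prop:newRoughHestonChExp}: the singular boundary term collapses to the clean constant $i\eta$, and the fractional integral of $F$ becomes an ordinary one. Combined with the companion expression for $g_1$ coming from \cite[Thm.~4.3, Ex.~7.2]{AbiJaberLarssonPulido10} and the representation $\log\Phi_\al=g_1+vg_2$ of the joint characteristic function, this yields the exponent as a sum of the constant $i\eta v$ picked up from $vg_2$, the deterministic mean-variance drift $i\eta\,\ga\theta\tau^\al/\Ga(\al+1)$ inherited by $g_1$ from the $i\eta$-weighted initial condition, and the ordinary integral $\int_0^\tau(\ga\theta h+vF)\,ds$ assembled from the remaining contributions. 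Grouping the $i\eta$-prefactors into $i\eta\bigl(v+\ga\theta\tau^\al/\Ga(\al+1)\bigr)$ then produces the claimed identity.

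The argument is essentially mechanical once \eq{VolterraV} has been hit with $I^{1-\al}$. No analytic subtlety arises beyond the fractional semigroup property already used in Proposition~\ref{prop:newRoughHestonChExp}; the only point deserving care is keeping the two distinct sources of $i\eta$-prefactors separate, so that the deterministic drift $i\eta\,\ga\theta\tau^\al/\Ga(\al+1)$ is counted exactly once and the constant $i\eta v$ is not absorbed into it.
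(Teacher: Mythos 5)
Your high-level plan matches the paper's one-line argument: apply $I^{1-\al}$ to \eq{VolterraV}, invoke $I^{1-\al}I^\al=I^1$ together with $I^{1-\al}t^{\al-1}=\Ga(\al)$, and substitute into $\log\Phi_\al=g_1+vg_2$. The intermediate identity $I^{1-\al}h(\xi,\eta,\cdot)(\tau)=i\eta+\int_0^\tau F(\xi,h(\xi,\eta,s))\,ds$ is derived correctly.

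The final assembly, however, double-counts the $i\eta$-drift. With $g_1$ and $g_2$ defined ``as above'', i.e.\ $g_1=\ga\theta\int_0^\tau h\,ds$ and $g_2=I^{1-\al}h(\cdot)(\tau)$, a direct substitution gives
\[
g_1+vg_2=\ga\theta\int_0^\tau h\,ds+v\Bigl(i\eta+\int_0^\tau F\,ds\Bigr)=i\eta v+\int_0^\tau\bigl(\ga\theta h+vF\bigr)\,ds,
\]
with no separate term $i\eta\ga\theta\tau^\al/\Ga(\al+1)$. You produce that drift by decomposing $\int_0^\tau h\,ds=\frac{i\eta\tau^\al}{\Ga(\al+1)}+\int_0^\tau I^\al F\,ds$, but once the $\frac{i\eta\tau^\al}{\Ga(\al+1)}$ piece has been pulled out of $g_1$, the residual feeding the ordinary integral must be $\ga\theta\int_0^\tau I^\al F\,ds$, not the full $\ga\theta\int_0^\tau h\,ds$; retaining both the extracted drift and the full integral of $h$ is precisely the double-count you flag in your final sentence as the pitfall to avoid. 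Carried out consistently, the computation yields $i\eta v+\int_0^\tau(\ga\theta h+vF)\,ds$, which reduces to Proposition~\ref{prop:newRoughHestonChExp} at $\eta=0$ and, at $\xi=0$, reproduces the correct $\bE[V_T]$ both for $\al=1$ (classical CIR) and for general $\al$, whereas the displayed formula would add a spurious $i\eta\ga\theta\tau^\al/\Ga(\al+1)$ to the mean. So your argument reaches the displayed expression only via the bookkeeping slip, not by a valid derivation, and the slip masks what appears to be a stray term in the stated formula itself.
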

Note that an accurate numerical solution of \eq{VolterraV} requires different and more involved
modifications of the Adams method than the ones in the present paper; we will consider \eq{VolterraV} in a separate publication.

\subsection{Modifications of the fractional Adams method}\label{ss: Adams}
In the Adams method, 
 one 
 fixes a uniform grid $(t_j)_{j\in \bZ_+}$, $t_j=j\De$, and 
calculates the approximations $\hh(\xi,t_k), k=1,2,\ldots,$ in two steps. First, the predictor
$\hh^P(\xi,t_k), k=1,2,\ldots,$ is calculated, and then the more accurate approximation $\hh(\xi,t_k), k=1,2,\ldots.$
To calculate the former, the rectangular rule is used. At this step, in the region of large $|\xi|$ and small $t_j$,
significant errors appear. The errors are especially clearly seen
at the first step of the induction procedure, which we write explicitly:
\bbe\label{hhP10}
\hh^P(\xi, t_1)=b_{0,1}F(\xi,\hh(\xi,t_0)=b_{0,1}(-0.5(\xi^2+i\xi)),
\ee
where $b_{0,1}=\De^\al/\Ga(\al+1)$. The RHS of \eq{hhP10} is of the order of $\De^\al|\xi|^2$, however,
\bbe\label{Ash0}
\hh(\xi, t)= \frac{-0.5(\xi^2+i\xi)}{\Ga(\al+1)}t^\al(1+o(1)), 
\ee
uniformly in $(\xi,t)$ in the region $\{(\xi,t)\ |\ 0\le t^\al |\xi|^2< c\}$. See \cite{RoughNotTough},
where the full asymptotic expansion is calculated. We construct modifications of the Adams method
changing the prediction step so that the asymptotics \eq{Ash0} is taken into account.
We use the same coefficients $a_{j,k}$ as in the fractional Adams method. 
For $k=0,1,\ldots, M-1$, set 
\[
a_{k+1,k+1}=\frac{\De^\al}{\Ga(\al+2)}, \
a_{0,k+1}=\frac{\De^\al}{\Ga(\al+2)}(k^{\al+1}-(k-\al)(k+1)^\al),\]
and, in the cycle $j=1,2,\ldots, k$, calculate
\[
a_{j,k+1}=\frac{\De^\al}{\Ga(\al+2)}((k-j+2)^{\al+1}+(k-j)^{\al+1}-2(k-j+1)^{\al+1}).
\]
\subsubsection{Modification I}\label{sss:modif1} 
\begin{enumerate}[I.]
\item
Fix a small  $c>0$, e.g., $c=0.1$, and the number $n$ of iterations at each prediction
step $k$.
\item
Set $k=0$ and $\hh(\xi, 0)=0$. While $|\hh(\xi,t_k)|\le |\xi|/10$, 
\begin{enumerate}[(a)]
\item calculate
\beqa\label{eq:As1}
\hh_{as}(\xi, t_{k+1})&=& \frac{-0.5(\xi^2+i\xi)}{\Ga(\al+1)}t_{k+1}^\al,\\\label{hh0}
\hh_0(\xi,t_{k+1})&=&\sum_{0\le j\le k}a_{j,k+1}F(\xi,\hh(\xi,t_j)),\\\label{hh1}
\hh_1(\xi,t_{k+1})&=&\hh_0(\xi,t_{k+1})+a_{k+1,k+1}F(\xi,\hh_{as}(\xi,t_{k+1})).
\eqa
\item For $j=1,\ldots, n$, reassign 
\bbe\label{iter}
\hh_1(\xi,t_{k+1})=\hh_0(\xi,t_{k+1})+a_{k+1,k+1}F(\xi,\hh_1(\xi,t_{k+1})).
\ee
\item
Set $\hh(\xi,t_{k+1})=\hh_1(\xi,t_{k+1})$ and reassign $k:=k+1$. 
\end{enumerate}
\item
While $k<M$, calculate $\hh_0(\xi,t_{k+1})$ using \eq{hh0}, then set
\[
\hh_1(\xi,t_{k+1})=\hh_0(\xi,t_{k+1})+a_{k+1,k+1}F(\xi,\hh(\xi,t_{k})),
\]
and repeat steps (b), (c) above.

\end{enumerate} 
\subsubsection{Modification II}\label{sss:modif2} 
The piece-wise linear interpolation is inaccurate near 0
because $\dd_t\hh(\xi,t)$ is unbounded as $t\downarrow 0$. To decrease the interpolation error, we calculate numerically $\hh^1(\xi,t_{k+1}):=\hh(\xi,t_{k+1})-\hh_{as}(\xi, t_{k+1})$.
 Instead of the function
$F(\xi,h)$ given by \eq{defF}, the function
\bbe\label{defFas1}
F_{as1}(\xi,h_{as},h^1)=\ga(i\xi\rho\nu-1)(h_{as}+h^1)+\frac{(\ga\nu)^2}{2}(h_{as}+h^1)^2
\ee
is used. Set  $\hh^1(\xi, 0)=0$, and then, in the cycle $k=0,1,\ldots, M-1$
\begin{enumerate}[I.]
\item Calculate $\hh_{as}(\xi, t_{k+1})$ using \eq{eq:As1}, then evaluate
\beqa\label{eq:Ask}
\hh_0(\xi,t_{k+1})&=&\sum_{0\le j\le k}a_{j,k+1}F_{as1}(\xi,\hh_{as}(\xi,t_j),\hh^1(\xi,t_j)),\\\label{hh1k}
\hh^1(\xi,t_{k+1})&=&\hh_0(\xi,t_{k+1})+a_{k+1,k+1}F_{as1}(\xi,\hh_{as}(\xi,t_{k}),\hh^1(\xi,t_{k}))).
\eqa
\item For $j=1,\ldots, n$, reassign 
\bbe\label{iterk}
\hh^1(\xi,t_{k+1})=\hh_0(\xi,t_{k+1})+a_{k+1,k+1}F_{as1}(\xi,\hh_{as}(\xi,t_{k}),\hh^1(\xi,t_{k}))).
\ee
\item
Set  $\hh(\xi,t_{k+1}):=\hh^1(\xi,t_{k+1})+\hh_{as}(\xi, t_{k+1})$.
\item
Calculate  the integral on the RHS \eq{chFRough2} using the trapezoid rule.
\end{enumerate}

\begin{rem}{\rm One can use  asymptotic expansions of higher orders but in our numerical experiments
with the two-term asymptotic expansion, the latter brings no advantages.}
\end{rem}

\subsubsection{Modification III}\label{sss:modif3} Introduce $\tilde\hh(\xi,t):=(1+|\xi|)^{-1}\hh(\xi,t)$,\\
$\tilde\hh_{as}(\xi, t)=(1+|\xi|)^{-1}\hh_{as}(\xi, t)$, $\tilde\hh^1(\xi,t_{k+1}):=\tilde\hh(\xi,t_{k+1})-\tilde\hh_{as}(\xi, t_{k+1})$,
\[
\tilde F_{as1}(\xi,\tilde h_{as},\tilde h^1)=\ga(i\xi\rho\nu-1)(\tilde h_{as}+\tilde h^1)+(1+|\xi|)\frac{(\ga\nu)^2}{2}(\tilde h_{as}+\tilde h^1)^2,
\]
and then
\begin{enumerate}[I.]
\item
calculate $\tilde \hh^1(\xi,t_k), 0\le k\le M$, following the steps in Sect.~\ref{sss:modif2} with
$\tilde F_{as1}(\xi,\tilde h_{as},\tilde h^1)$ in place of $F_{as1}(\xi,h_{as},h^1)$;
\item
 set $\hh(\xi,t_k)=(1+|\xi|)(\tilde \hh^1(\xi, t_k)+
\tilde\hh_{as}(\xi, t_k))$, $0\le k\le M$;
\item
calculate  the integral on the RHS \eq{chFRough2} using the trapezoid rule.
\end{enumerate}
\begin{figure}
\begin{tabular}{cc}

 \begin{subfigure}[h]{0.45\textwidth}

 \centering
    \includegraphics[width=0.9\textwidth,height=0.4\textheight]{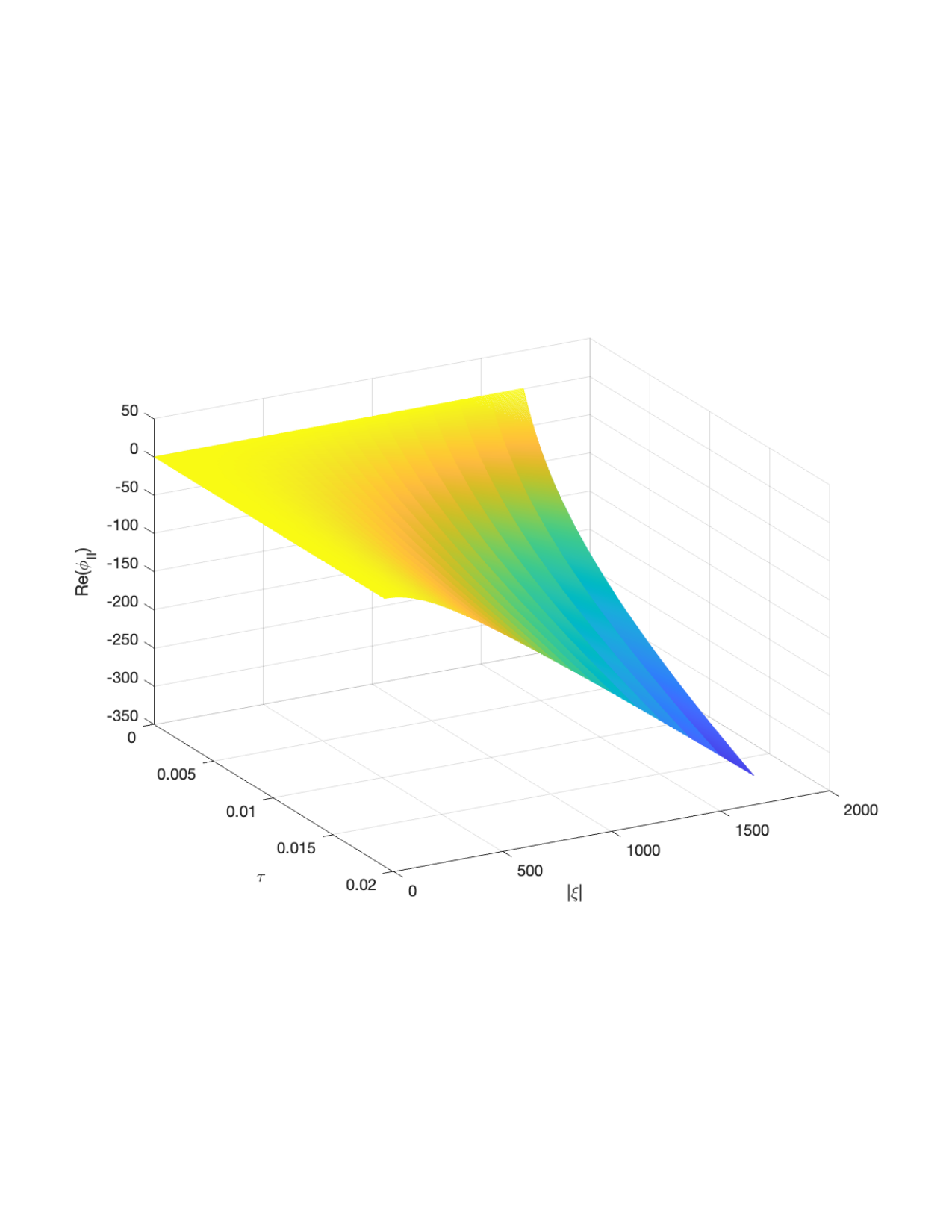}
    \caption{}\label{RePhiII}
\end{subfigure}
&
\begin{subfigure}[h]{0.45\textwidth}
\centering
    \includegraphics[width=0.9\textwidth,height=0.4\textheight]{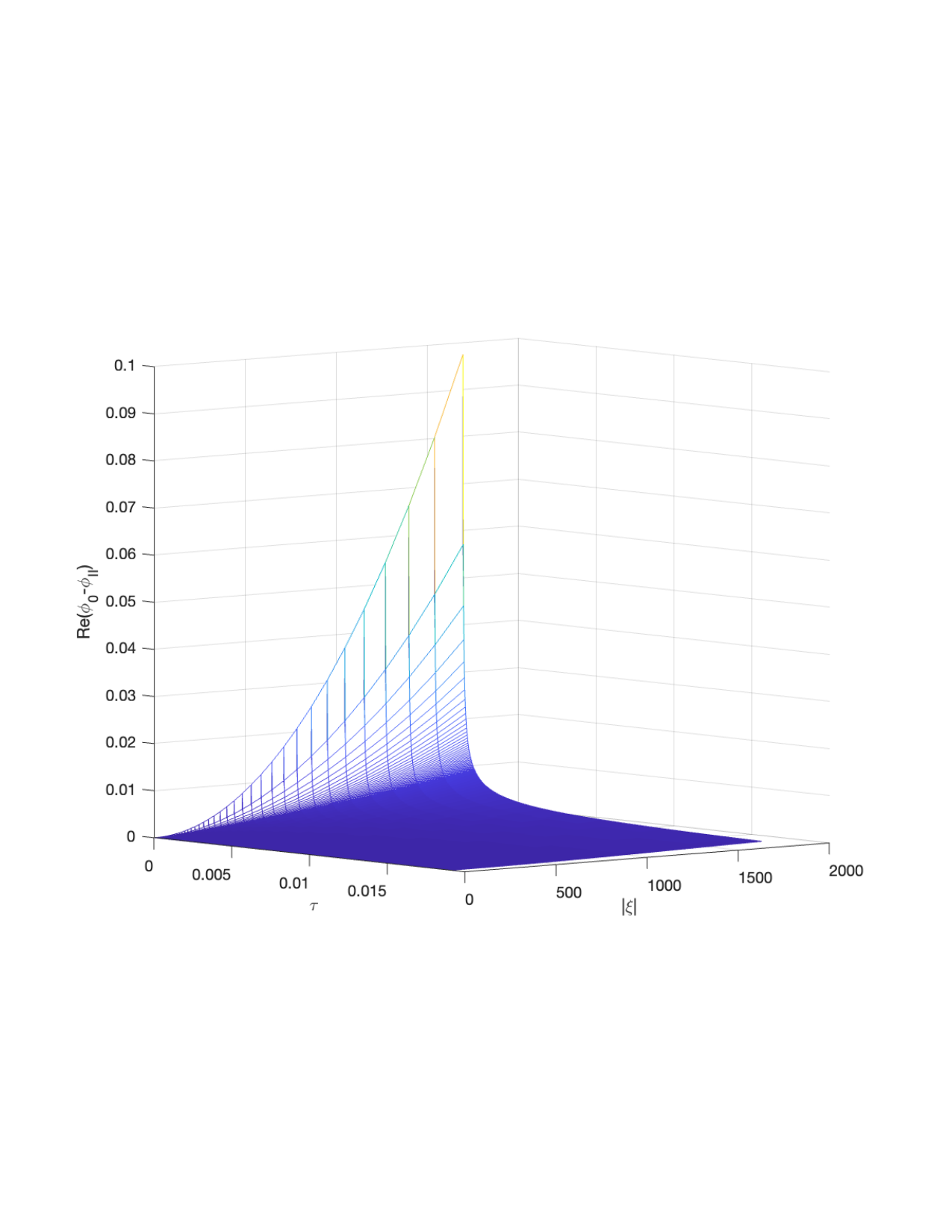}\caption{} \label{DifPhi1vsXia}
\end{subfigure}

\end{tabular}
\caption{\small Parameters of the model $\al=0.62$, $\ga=	0.1$, 	$\rho=-0.681$,
$\theta=	0.3156$, $\nu=0.331$, 
$v=0.0392$. (A): $\Re \phi$ calculated using modification II of the Adams method; (B): the difference between  $\Re\phi$ produced by the method 
 \cite{EuchRosenbaum2019} and $\Re\phi$ on Panel (A). In both cases, $T=1/52$ and $M=1000$.   The nodes $\xi$ are on the line
$\{\Im\xi=-1.5\}$. The differences shown on Panel (A) translate into the relative errors of evaluation of
the terms in the quadrature used. If $|\xi|$ are not small, it is seen that even at $T=1/52$, the errors are not negligible,
and for $T=1/252$, the errors are large. Note that even marginally accurate evaluation of options of short maturity requires
long grids.
 } 
\label{phi}
\end{figure}
\begin{rem}\label{rem:characteristic of modifications}
{\rm
Simple modifications that we use are in a very good agreement: in our numerical experiments, the difference between the results is, typically, less than $10^{-11}$. On the other hand, the differences with 
the results produced by the standard Adams method are sizable, and large for short maturitities and large $|\xi|$. See Panel  (B) in Fig.~\ref{phi}.
For options of moderate maturities, the real part of $\phi(t,T,v,\xi):=\ln \Phi_\al(t,T,v,\xi)$ in \eq{chFRough2}
decreases fast as the time to maturity $\tau:=T-t$ increases. See Panel  (A) in Fig.~\ref{phi}.
This explains why for options of moderate maturities,  the error of the simplified trapezoid rule with 
several dozens of terms is
smaller than E-06.
In the case of $\phi(t,T,v,\xi,\eta):=\ln \Phi_\al(t,T,v,\xi,\eta)$ in \eq{chFRoughV}, $\Re \phi(t,T,x,v,\xi,\eta)$ decreases
if $\Im\eta\to +\infty$ as well, hence, evaluation of more complicated options using the Fourier inversion in 2D
can be made fast as well: one needs to apply the sinh-acceleration w.r.t. $\eta$.
}
\end{rem}

\begin{rem}\label{rem:ModIIvsModIII}
{\rm
Modification III has the following two advantages: 
\begin{enumerate}[1.]
\item  if $|\xi|$ is very large, the stability of  the numerical solution of the fractional Riccati equation improves. In our numerical experiments, we encountered situations when a trajectory calculated using Modification II with $M=1000$ ``blows down" ($\Re\psi_{II}(\xi,t)\to -\infty$ as $t\to 0.5$)
but the ``blow down" does not occur if we increase $M=2000$ or use Modification III with $M=1000$. In the latter two cases,
$\phi_{II}(\xi,0.5)$ and $\phi_{III}(\xi,0.5)$ are very close;
\item
$\phi_{II}(\xi,t)$ is an analytic function of $\xi$ in the region where the blow-up or ``blow down" does not occur.  This makes $\phi_{II}(\xi,t)$ not quite suitable for the efficient error control
using the conformal bootstrap principle: one cannot efficiently control 
the errors of the evaluation of $\phi_{II}(\xi,t)$ because the error is an analytic function of $\xi$. $\phi_{III}(\xi,t)$ is not an analytic function.
\end{enumerate}
}
\end{rem}

\begin{rem}\label{rem:non-uniform}
{\rm 
The accuracy of calculations can be increased using  grids $\{t_k\}$ depending on $\xi$ (see Sect.~\ref{ss:xi0dependentgrids}).
Non-uniform grids can be indispensable for pricing options of long maturities.

}
\end{rem}

\section{Several methods for Fourier inversion}\label{s:FT}

\subsection{Flat iFT and simplified trapezoid rule}\label{ss:Flat iFT and simpl. trap}
 In popular models, the characteristic function admits  analytic continuation to a strip around the real axis.
This implies that 
the following scheme (standard from the viewpoint of  Analysis) suggested in \cite{BL-FT,genBS,KoBoL} is more efficient than the scheme in 
\cite{heston-model} based on the L\'evy inversion formula.
Let the riskless rate $r\ge 0$ be constant, and let $S_T=S_0e^{X_T}$ be the price of the underlying non-dividend
paying asset (or index) at time $T$. Let $\Phi(\xi,T)=\bE[e^{i\xi X_T}] $ be the characteristic function of $X_T$ under a no-arbitrage measure $\bQ$ chosen for pricing (the expectation is conditioned on the spot values of additional factors as in SV models).
Then $\Phi(0,T)=1$, and if $\bE^\bQ[e^{X_T}]<\infty$, $\Phi(-i,T)=e^{rT}$. 
Assume
that there exist $\mum(T)<-1<0<\mup(T)$ s.t. for $\be\in (-\mup(T),-\mum(T))$, the exponential
moments $\bE^\bQ[e^{\be X_T}]$ are finite. 
Equivalently, $\Phi(\xi,T)$ admits analytic continuation to a strip 
$S_{(\mum(T),\mup(T))}:=\{\xi\ |\ \Im\xi\in (\mum(T),\mup(T))\}$.  Then the price of the call option
with strike $K$ and maturity $T$ can be calculated as follows. 
The payoff function $G(S_0,K,x)=(S_0e^x-K)_+$ admits a represention
\bbe\label{hG}
G(S_0,K; x)=\frac{1}{2\pi}\int_{\Im\xi=\om_1}e^{ix\xi}\hG(S_0,K;\xi)d\xi,
\ee
where  $\om_1\in (\mum(T),-1)$ is arbitrary, and $\hG(S_0,K;\xi)=-Ke^{i\xi\ln (S_0/K)}/(\xi(\xi+i))$ is the Fourier transform of
$G(S_0,K; x)$ w.r.t. $x$. We  substitute the integral representation \eq{hG} of $G(S_0,K; X_T)$ into the pricing formula $V_{call}(S_0, K;T)=e^{-rT}\bE[(S_0e^{X_T}-K)_+]$, and change the order of integration and summation
(the use of the Fubini theorem can be justified in all popular models). The result is 
\bbe\label{EuroCall}
V_{call}(S_0,K;T)=-\frac{Ke^{-rT}}{2\pi}\int_{\Im\xi=\om_1}\frac{e^{i\xi\ln (S_0/K)}\Phi(\xi,T)}{\xi(\xi+i)}d\xi.
\ee
Similarly, the price of the put is given by the RHS of \eq{EuroCall} with arbitrary $\om_1\in (0,\mup(T))$
(repeat the proof for the call starting with $G(x)=(K-e^x)_+$ or use the put-call parity on the LHS of \eq{EuroCall} and
the residue theorem on the RHS). The price of the covered call is given by the RHS with $\om_1\in (-1,0)$.
Since $\overline\Phi(\xi,T)=\Phi(-\bar{\xi},T)$ and $\overline{\hG(\xi)}=\hG(-\bar\xi)$, an equivalent form of
\eq{EuroCall} is
\bbe\label{EuroCallSym}
V_{call}(S_0,K;T)=-\frac{Ke^{-rT}}{\pi}\Re\int_{\Im\xi=\om_1}\frac{e^{i\xi\ln (S_0/K)}\Phi(\xi,T)}{\xi(\xi+i)}d\xi.
\ee
After truncation, the integral on the RHS of \eq{EuroCall} (or \eq{EuroCallSym}) can be calculated using either trapezoid rule or Simpson rule.

However, since the integrand on the RHS of \eq{EuroCall} is analytic in a strip
$S_{(\lm,\lp)}$ around the line of integration ($\lm=\mum(T), \lp=-1$ in the case of calls, $\lm=-1, \lp=0$ in the case of the covered call,
and $\lm=0, \lp=\mup(T)$ in the case of puts),
it is significantly more efficient to use the infinite trapezoid rule and then truncate the sum. The reason is
an exponential decay of the discretization error of the infinite trapezoid rule as the function of $\ze$, where $\ze$ is
the step. In Mathematical Finance, Lee \cite{Lee04} and Feng and Linetsky \cite{feng-linetsky08} were the first  to use this important property of the infinite trapezoid rule.
Let $H^1(S_{(\lm,\lp)})$ denote the Hardy space of functions analytic in the strip $S_{(\lm,\lp)}$ such that
\[
\int_{\lm}^{\lp} |f(\eta+i\om)|d\om \to 0\quad {\rm as}\ (\bR\ni)\eta\to\pm\infty\]
and the following analog of the Hardy norm 
is finite:
\begin{equation}\label{Hardynorm}
||f||_{S_{(\lm,\lp)}}:=\lim_{\om\uparrow \lp}\int_\bR|f(\eta+i\om)|d\eta+
\lim_{\om\downarrow \lm}\int_\bR|f(\eta+i\om)|d\eta<\infty.
\end{equation}
Fix $\om_1\in (\lm, \lp)$, and denote $d(\om_1)=\min\{\om_1-\lm, \lp-\om_1\}$. For $\ze>0$, construct a grid $\xi=i\om_1+\ze\bZ$,
and denote by $E_{\rm disc}(\zeta, \infty)$ the error of the infinite trapezoid rule
\[
\int_{\Im\xi=\om}f(\xi)d\xi\approx \ze\sum_{j\in\bZ}f(\xi_j)
\]
The following bound is proved in \cite{stenger-book} using the heavy machinery of the
sinc-functions (a simple proof can be found in \cite{paraHeston}):
\begin{equation}\label{discerrbound}
 \left|E_{\rm disc}(\zeta, \infty)\right|\le \frac{e^{-2\pi d(\om_1)/\zeta}}{1-e^{-2\pi d(\om_1)/\zeta}}||f||_{S_{(\lm,\lp)}}.
 \end{equation}
Let the error tolerance $\eps>0$ for the discretization error be small, and let $|\mu_\pm|$ be not too large.
 Then we 
choose $\om_1=(\lm+\lp)/2$, set $d(\om_1)=k_d(\lp-\lm)/2$, where $k_d<1$ is close to 1, e.g., $k_d=0.95$,
and use the following approximate recommendation: 
\bbe\label{ze_inf}
\ze=2d(\om_1)/\ln(100/\eps).
\ee
If the strip of analyticity is very wide, we choose a substrip around the line of integration with moderately large $|\la_\pm|$ and apply the prescription above. 

Once $\ze$ is chosen and the sum is truncated, we have the pricing formula. In the case of \eq{EuroCall}, 
\bbe\label{EuroCallSimp}
V_{call}(S_0,K; T)=-\frac{Ke^{-rT}\ze}{2\pi}\sum_{|j|\le N}\frac{e^{i\xi_j\ln (S_0/K)}\Phi(\xi_j,T)}{\xi_j(\xi_j+i)},
\ee
where $\xi_j=i\om_1+j\ze$. The number of terms can be decreased almost two-fold: similarly
to  \eq{EuroCallSym},
\bbe\label{EuroCallSimpSym}
V_{call}(S_0,K; T)=-\frac{Ke^{-rT}\ze}{\pi}\Re\sum_{0\le j\le  N}(1-\de_{j0}/2)\frac{e^{i\xi_j\ln (S_0/K)}\Phi(\xi_j,T)}{\xi_j(\xi_j+i)},
\ee
where $\de_{jk}$ is the Kronecker symbol. We call this method Flat iFT (flat inverse Fourier transform) method.
To choose $N$ so  that the truncation error is sufficiently small, it is necessary to know the rate of decay
of $\Phi(\xi,T)$ as $\xi\to\infty$ along the contour of integration. 
Let $\Phi(\xi,T)=\exp[\phi(\xi,T)]$, 
 and let 
an upper bound for $\Re \phi(\xi,T)$ be known:
\bbe\label{upperphi0}
\Re \phi(\xi,T)<-g(|\xi|,T),
\ee
where $g(|\xi|,T)$ is a monotonically increasing function of $|\xi|$.
Then the truncation of the series
at $|\xi|=\La_0$ introduces the error of the order of  $e^{-g(\La_0,T)}/\La_0$.
If an analytic formula for $\phi(\xi,T)=\ln\Phi(\xi,T)$ is available,
then an efficient bound \eq{upperphi0} can be derived. See \cite{paraHeston,pitfalls}. In the case of the rough Heston model, an analytic formula is not available.
In Sect.~\ref{ss:bound_psi},
we derive an approximate upper bound \eq{ash} for $\Re\phi(\xi,t)$ in the rough Heston model, and use the bound to derive a prescription for the choice of $\La_0$.  In the case of Flat iFT, \eq{ash} and the prescription 
should be used with $\om=0$. After $\La_0$ is calculated, we
set $N=\mathrm {ceil}\,\La_0/\ze$.


\subsection{Carr-Madan method}\label{ss:CM method}
The implementation of Flat iFT is very simple, and can be easily parallelized if the option prices for  several dozen of strikes need to be calculated. Nevertheless, in noughties, the unnecessary complicated (and slower and less accurate)  CM method
 became popular. The main idea of the method is to use the Fast Fourier transform (FFT) to evaluate
the option prices at several strikes. However, FFT produces the results at points of uniformly spaced 
grids in the $\ln(K)$-space. Therefore, in order to evaluate the option prices for given strikes,
an interpolation procedure needs to be employed. To satisfy even a moderate error tolerance,
a fine grid $x_j=x_0+j\De$, $j=1,2,\ldots, M=2^m$, with $\De<<1$ is necessary; to make an accurate Fourier inversion,
a small step $\ze$ in the dual space must be used (in \cite{carr-madan-FFT}, $\ze=0.25$ or $\ze=0.125$ are recommended). The Nyquist  relation $\De\ze=2\pi/M$ requires $M$ to be of the order of several thousand. In \cite{carr-madan-FFT},
the basic recommendation is $M=4,096$ and it is mentioned that larger $M=8,192$ or $M=16,384$ may be needed. Hence, the calculations become computationally more costly as compared to
Flat iFT, and an unnecessary interpolation error is introduced. Table \ref{table: iFT-FFT}
illustrates the adverse impact of the interpolation errors on the quality of calibration: the number of
strikes for which the calculated prices are outside the no-arbitrage bounds  increases because of the interpolation.
In the case of the rough Heston model, the evaluation of $\Phi(\xi,T)$ for $\xi$ large in absolute value, accurate calculations are especially difficult and time consuming.
The implied volatility surface produced by CM method 
can significantly differ from the correct one 
(see Fig.~ \ref{Set1ImpVolsurfacesXiT152}).
In particular, essentially flat volatility curves
can become nice volatility smiles, and changing the dampening factor (the line of integration) in CM method and keeping the same step size
 and the grid size recommended in CM method one can significantly change the smiles and surface. The  implied volatility surface can significantly change as one changes the step and/or grid size. Furthermore, the errors are systematic,
 and, in many cases, prices of deep OTM options produced by CM method are ``prices" of the systematic errors of the method, which can be ``useful".\footnote{See Sect.~\ref{DistortionCM} for  
explanations for the popularity of the CM method in noughties. During
the financial crisis, CM method  caused problems in the financial industry, and has net been used by practitioners ever since.}

\subsection{Lewis method}\label{ss:Lewis}
The specific choice $\om_1=-1/2$ was suggested later by A.Lewis and A.Lipton 
\cite{Lewisbook,liptonFX}, and the formula for the covered call was rewritten in the  form
\bbe\label{EuroPriceLL}
V(S_0, K; T) = -\frac{(K/S_0)^{1/2}}{\pi}\Re\int_0^{+\infty}
\frac{e^{iy\ln(S_0/K)}\Phi(T,-i/2+y)}{y^2+0.25}dy.
\ee
In the Lewis method \cite{lewisFT}, it is recommended to change the variable in order to reduce to the integral over $(0,1)$, and then apply the Gauss-Legendre quadrature. In our numerical examples, we use the change of variables $y=u/(1-u)$,
and demonstrate that the Lewis method requires the evaluation of more terms
than the sinh-acceleration and the CPU time is several times larger even in a favorable region not too close to maturity and not too far from the spot;  the accuracy better than E-08 is almost impossible to achieve using double precision arithmetic. The reason is that the Lewis method
does not take into account the fact that an accurate method should not use too many nodes $u_j$ close to 1.
The corresponding $y_j$ are very large, hence, the characteristic function is very difficult to evaluate accurately. If adaptive quadratures are used, the efficiency of the method can be improved but the problem does not vanish and can become worse since larger $y_j$ can appear. In addition, as it is demonstrated in an extensive
empirical study of the calibration of the Heston model in \cite{HestonCalibMarcoMeRisk}, this approach is slower than the fractional-parabolic deformation technique
(the sinh-acceleration used in the present paper is more efficient than the  fractional-parabolic deformation),  and a reliable error control is almost impossible especially if the integrand cannot be evaluated very accurately.  
We produce numerical examples to show that the sinh-acceleration is faster and more accurate that the Lewis method
when applied to the rough Heston model.

Finally, note that the performance of Lewis method strongly depends on the choice of the change of variables.
In the example in the present paper, the errors  increased greatly when we used $y=-\ln u$ instead of $y=u/(1-u)$.

\subsection{COS method}\label{ss:COS}
As we explained in \cite{iFT0,iFT}, 
 ``COS method \cite{COS} is based on an approximation of the pdf by a linear combination of cosines.
In the case of pricing European options,  after several additional approximation steps, the explicit pricing formulas
are derived. 
 The resulting formulas of COS method  can be derived differently and simpler. In the case of the call option
with strike $K=1$, 1) choose $a<0<b$; 2) define $G_{ap}(x)=(e^x-1)_+, a\le x\le b$, $G_{ap}(x)=(e^{2b-x}-1)_+, b\le x\le 2b-a$, and
extend $G_{ap}$ as a periodic function with the period $2(b-a)$; 3) replace the payoff of the call with $G_{ap}(X_T)$; 4)
expand $G_{ap}(x)$ into the Fourier series, substitute $G_{ap}(X_T)$ into the standard iFT formula ({\em with the integration over the real line}: if a different line is used, the errors significantly increase), and apply the 
simplified trapezoid rule. The result is of the form  \eq{EuroCallSimpSym} with $\xi_j=j\ze$, $\ze=\pi/(b-a)$, and
$\hG(\xi_j)$ in place of $-1/(\xi_j(\xi_j+i)$, where $G(x)=\bfo_{(a,2b-a)}(x)G_{ap}(x)$.
The procedure is applied to the put as well. In the case of the call, the modified payoff has a rather sharp kink at $x=b$,
therefore, typically, the resulting payoff modification error and discretization error are 
larger in the case of calls. This explains the recommendation
in \cite{COS} to calculate put prices, and then use the put-call parity to calculate prices of calls. However, depending
on the properties of the characteristic exponent, it is possible that the calculation of calls entails smaller errors."

From the complex-analytical point of view, the advantage of COS method is that the Fourier transform of the modified payoff
is an entire function, hence, the strip of analyticity is wider, and the discretization error smaller. On the other hand,
as compared with flat iFT, COS method has an additional source of errors: the payoff modification error. Furthermore,
the line of integration $\bR$ is fixed. If the upper boundary of the strip of analyticity of $\Phi(\xi,T)$ is close to $\bR$: $\mup\in (0,0.5)$, then the Lipton-Lewis choice of the line of integration $\{\xi\ |\ \Im\xi=-0.5\}$ leads to a smaller discretization error
than the one of COS method (hence, less of CPU time is wasted),
and no additional error is introduced. 
Next, the ad-hoc error estimates in \cite{COS} are complicated, formulated in terms of moments 
(which cannot adequately assess the discretization error of the infinite trapezoid rule), and a sufficiently good $N$ is chosen using the doubling procedure. A simplified prescription $[a,b]=[-L\sqrt{T},L\sqrt{T}], L\in [6,12]$, in \cite{COS}
implies $\ze=\pi/(2L\sqrt{T})\in T^{-1/2}[0.1309,0.2618]$. Thus, at the first step, the approximation and discretization errors are fixed, and the papers using COS method implicitly presume that these errors are small. However, it is evident
that unless the strip of analyticity is very wide, the resulting discretization error is very large for options of short maturity,
and, for options of long maturity, significantly larger $\ze$'s are admissible if the Flat iFT with the Lipton-Lewis choice of the line of integration is applied.
   In \cite{COS} and other papers that use  COS method,
the ``efficiency" of the method is illustrated as follows. (1) Choose two parameters of COS method which control   the approximation  and discretization errors, without accurate error bounds. Thus, both errors are fixed. (2) Increase $N$ two-fold until the difference becomes smaller 
than the desired error tolerance.
In \cite{iFT0,iFT}, we produced several examples that show that even in the case of L\'evy processes,
it is very difficult to satisfy the error tolerance better than E-07 although the ``doubling error" becomes negligible. 
The same scheme is used in \cite{KamuranEmreErkan2020} to ``prove" that the absolute errors of the ``prices"
presented in Tables in \cite[Section 6]{KamuranEmreErkan2020} are of the order of 9E-04 if $N=160$, and 0 if $N=320$; the CPU time of the order of 1 sec. per point (see \cite[Table 4 on p.55]{KamuranEmreErkan2020}).
In fact, the errors are much larger. See Table \ref{table: COS1_K},
where we reproduce the results  in \cite[Table 1 on p.52]{KamuranEmreErkan2020} and show more accurate results produced using the method of the present paper at the  CPU cost thousands of times smaller. The total CPU time
for calculation option prices at thousands of points $(K/S_0,T)\in [0.8, 1.2]\times [0.5, 1]$ is less than several msec. The calculated quantities are
\bbe\label{VCOSex}
V_{``call"}(S_0,K,T)=e^{-rT}\bE[(S_T-K)_+\ |\ S_0, V_0=v],
\ee
where $S_0=100, K=80,100, 120$ and $(S_t,V_t)$ is the process in the rough Heston model with 
 the parameters 
 $ \al=0.6,  \ga= 0.1,\ \theta = 0.3156,  \nu = 0.331, \rho=-0.681,  v = 0.0392$,
 $r=0.3$.
 In \cite{KamuranEmreErkan2020}, $V(S_0,K,T)$ given by \eq{VCOSex} is called the call option price.
 However, in the rough Heston model, by construction, $\bE[S_T]=S_0$, hence, \eq{VCOSex} with $r>0$ violates
 the no-arbitrage principle. Since we are interested in the numerical performance of COS method and not in the calibration
 of the model to the real data, we use this example nonetheless. The quality of a method  manifests itself in
  the relative errors of the calculation of OTM options. In Table \ref{table: COS1_K}, we show the errors and relative
  errors of the ATM and OTM ``calls" given by \eq{VCOSex}
and OTM ``put" given by   $V_{``put"}(S_0,K,T)=V_{``call"}(S_0,K,T)-e^{-rT}(S_0-K).$
In the numerical example in  \cite[Section 6.2]{KamuranEmreErkan2020}  (parameters $\al=0.6$, $\ga=2$, 	
$\theta=	0.025$, $\nu=0.2$, $\rho=-0.6$, $v_0=0.025$; $K/S_0\in [0.8,1.2]$), as in \cite{EuchRosenbaum2019}, the implied volatility curves are presented in the form of graphs which we cannot reproduce. 
 In Fig. \ref{Set2Curves}, we show implied volatility curves calculated using the method of the present paper.
 The curves constructed in \cite{KamuranEmreErkan2020} using the COS method are different (the reader can download \cite{KamuranEmreErkan2020} and compare the curves in
 \cite{KamuranEmreErkan2020} and  Fig.~\ref{Set2Curves}
 in Appendix),
and even the curves for moderate maturities (for moderate maturities, accurate calculations using Flat iFT are easy)
are rather irregular.
Furthermore, apparently, the COS method used in \cite{KamuranEmreErkan2020} fails
to produce results for the maturity 1 day and for log-strikes outside a small
vicinity $\ln K\in [-0.06,0.03]$, if $T=0.01,0.02,01$.
For $T=1/252$, the curve obtained using the sinh-acceleration is reliable for $\ln (K/S_0)\in [-0.21,0.075]$, and for $T=0.01$, in the region  
$\ln (K/S_0)\in [\ln(0.8),0.13]$. In the other cases, the curves are reliable in the region  $K/S_0\in [0.8,1.2]$.


\subsection{Flat iFT-BM and Flat iFT-NIG methods}\label{ss:Flat iFT-BM} A number of numerical examples in \cite{iFT0,iFT}
demonstrate that COS method is slower than Flat iFT method although in rare exceptions (processes very close to the Brownian motion (BM), when about 20 terms suffice to satisfy a very small error tolerance),  COS method is marginally faster: the number of terms decreases by 1 or 2 - but if chosen by hand and not using the universal recommendations.
The additional error of  COS method is partially compensated by the increase of the width of the strip of analyticity
around the line of integration: instead of one of the three strips $S_{(\mum(T),-1)}, S_{(-1,0)}, S_{(0,\mup(T))}$,
the strip $S_{(\mum(T),\mup(T))}$ can be used. In this section, we demonstrate that the same effect is achievable without introducing additional errors. We use the same straightforward idea as in \cite{ConfAccelerationStable}, where
we eliminated the zero at $\xi=0$ of the integrand in the formula for the cumulative probability distribution function of a stable L\'evy process. In the current setting, we eliminate two zeros, at $\xi=0$ and $\xi=-i$.
Let $\Phi_{ad}(\xi,T)$ be the characteristic function in
a model, where vanilla prices can be calculated faster than in the initial model. Denote by $V_{call}(\Phi; S_0,K;T)$
the call price in the model with the characteristic function $\Phi$; as above, the asset pays no dividends
and interest rate $r$ is constant. 
\begin{prop}\label{prop: Flat iFT-BM}
Let $\Phi(\xi,T)$ and $\Phi_{ad}(\xi,T)$ admit analytic continuation to a strip $S_{(\mum(T),\mup(T))}$,
where $\mum(T)<-1<0<\mup(T)$,
and let $\Phi(-i,T)=\Phi_{ad}(-i,T)=e^{rT}$. 

Then, for any $\om_1\in (\mum(T),\mup(T))$,
\bbe\label{eq:Flat iFT-add}
V_{call}(\Phi; S_0,K;T)=V_{call}(\Phi_{ad}; S_0,K;T)-\frac{Ke^{-rT}}{2\pi}\int_{\Im\xi=\om_1}\frac{e^{i\xi\ln (S_0/K)}
(\Phi(\xi,T)-\Phi_{ad}(\xi,T))}{\xi(\xi+i)}d\xi.
\ee
The equality \eq{eq:Flat iFT-add} is valid for put and covered calls as well.
\end{prop}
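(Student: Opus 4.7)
The plan is to reduce the statement to an application of Cauchy's theorem, using the key observation that the normalization conditions $\Phi(0,T)=\Phi_{ad}(0,T)=1$ and $\Phi(-i,T)=\Phi_{ad}(-i,T)=e^{rT}$ turn the two apparent poles of the integrand into removable singularities.

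First, I would fix $\om_1 \in (\mum(T),-1)$ and write \eq{EuroCall} for both $\Phi$ and $\Phi_{ad}$ on the same contour $\Im\xi=\om_1$. Subtracting the two identities gives
\bbe\label{eq:subtract}
V_{call}(\Phi;S_0,K;T)-V_{call}(\Phi_{ad};S_0,K;T)=-\frac{Ke^{-rT}}{2\pi}\int_{\Im\xi=\om_1}\frac{e^{i\xi\ln(S_0/K)}(\Phi(\xi,T)-\Phi_{ad}(\xi,T))}{\xi(\xi+i)}d\xi,
\ee
which is the desired formula, but only on the restricted contour.

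Next, I would show that the integrand in \eq{eq:subtract} is in fact analytic throughout the full strip $S_{(\mum(T),\mup(T))}$. By hypothesis, both $\Phi$ and $\Phi_{ad}$ are analytic there, so the difference $\Psi(\xi):=\Phi(\xi,T)-\Phi_{ad}(\xi,T)$ is analytic. The normalizations $\Phi(0,T)=\Phi_{ad}(0,T)=1$ (from the definition of a characteristic function) and $\Phi(-i,T)=\Phi_{ad}(-i,T)=e^{rT}$ (the no-arbitrage/martingale condition assumed in the proposition) give $\Psi(0)=\Psi(-i)=0$. Hence $\Psi(\xi)/(\xi(\xi+i))$ is analytic at both $\xi=0$ and $\xi=-i$, the two candidate singularities lying in the strip, and the whole integrand is an entire function of $\xi$ on $S_{(\mum(T),\mup(T))}$.

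Now I would invoke Cauchy's theorem to shift the contour from $\Im\xi=\om_1$ to $\Im\xi=\om_1'$ for any $\om_1' \in (\mum(T),\mup(T))$. For this the vertical sides of the truncated rectangle at $\Re\xi=\pm R$ must contribute $0$ as $R\to\infty$; this follows from the assumption that $\Phi,\Phi_{ad}$ belong to the Hardy-type space described before \eq{Hardynorm}, so that the integrand is integrable on horizontal lines in the strip and tends to $0$ on the sides. Since both sides of \eq{eq:subtract} are contour-independent within the strip, the identity extends to arbitrary $\om_1 \in (\mum(T),\mup(T))$. The same argument applies verbatim to the put (using $\om_1 \in (0,\mup(T))$ as the starting contour) and to the covered call ($\om_1 \in (-1,0)$), since the pricing formula has the same form and the contour-independence argument is identical.

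The only genuinely nontrivial step is the justification of the contour shift, i.e., the decay of $\Phi-\Phi_{ad}$ on vertical lines; however this is built into the Hardy-space framework \eq{Hardynorm} under which the Fourier inversion itself is carried out in the paper, so no extra work is needed beyond citing that hypothesis.
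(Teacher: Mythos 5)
Your proof is correct and follows essentially the same route as the paper: start with the subtraction identity on a contour $\Im\xi=\om_1\in(\mum(T),-1)$, observe that $\Phi(\xi,T)-\Phi_{ad}(\xi,T)$ vanishes at $\xi=0$ and $\xi=-i$ so the integrand is analytic throughout the strip, and then shift the contour by Cauchy's theorem. (Only a terminological nit: you call the integrand ``an entire function of $\xi$ on $S_{(\mum(T),\mup(T))}$''; it is analytic in the strip, but ``entire'' conventionally means analytic on all of $\bC$.)
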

\begin{proof} Let $\om_1\in (\mum(T),-1)$. Then \eq{eq:Flat iFT-add} is valid. The apparent singularities of the integrand are removable because $\Phi(\xi,T)-\Phi_{ad}(\xi,T)$ is analytic in the strip $S_{(\mum(T),\mup(T))}$
and $\Phi(\xi,T)-\Phi_{ad}(\xi,T)=0$ at $\xi=0,-i$. Hence, the integrand on the RHS of \eq{eq:Flat iFT-add}
is analytic in the strip, and one may move the line of integration to any line 
$\{\xi\ |\ \Im\xi=\om_1\}, \om_1\in (\mum(T),\mup(T))$.
The proof for puts and covered calls is essentially the same. 
\end{proof}
The integral on the RHS of \eq{eq:Flat iFT-add} is calculated using Flat iFT.
If $\mup(T)-\mum(T)>>1$, and $\om_1(T)=(\mup(T)+\mum(T))/2$ is chosen, the half-width
of the strip of analyticity used to derive the recommendation for the choice of the step $\ze$ and $\ze$
increase significantly, and the number of terms of the simplified trapezoid rule and CPU time decrease,
also significantly.

Natural choices for $\Phi_{ad}$ are the characteristic functions in the following models: 
\begin{enumerate}[(1)]
\item
 the BM with the characteristic exponent $\psi(\xi)=\sg^2\xi^2/2-i\mu\xi$; $\sg>0$,
 $\mu=r-\sg^2/2$;
 \item
 Normal Inverse Gaussian process  (NIG) \cite{B-N} or the generalization of NIG 
 (tempered stable L\'evy processes (NTS) constructed in \cite{B-N-L}), with the same or wider strip of analyticity;
 \item
 in applications to rough Heston model, it is feasible that the use of $\Phi_{ad}$
 in the Heston model with the same parameters $\ga, \theta, \nu, \rho$ can be advantageous.
 \end{enumerate}
 We call the resulting method with the choices (1) and (2) {\em Flat iFT-BM} and {\em Flat iFT-NIG} 
 (more generally, {\em Flat iFT-NTS}) methods.
 In the numerical examples in the paper, we use the simplest variant: Flat iFT-BM. 
 In our numerical examples that we considered, the analogs: Lewis-BM and SINH-BM mejods do not bring advantages as
 compared with the Lewis and SINH-methods.
 We live to the future the study of possible advantages of choices (2) and (3).

\subsection{Summation by parts in the inifinite trapezoid rule}\label{ss: summation by parts}
For the explicit formulas, see \cite{Contrarian}. The summation by parts significantly decreases
the product $\ze N$ necessary to satisfy the given error tolerance if the strike is not close to
the spot.  Hence, it is natural to separate the region of strikes into two regions: close to the spot,
where Flat iFT-BM (or Flat iFT-NIG) is used, and the region farther from the spot, where, in addition, the
summation by parts is used.

\subsection{SINH-acceleration}\label{ss: SINH}
In the real-analytic interpretation \cite{carr-madan-FFT},  choices of different lines
of integration are choices of different {\em dampening factors}. In Complex Analysis, one observes that the Fourier transform $\hf$ of a sufficiently regular function $f$ is an analytic function in a wide region $\cU_0$ of the complex plane and meromorphic function in a wider 
region $\cU$.  We choose $\cU$ so that $\hf(\xi)\to 0$ sufficiently fast as 
$\xi\to \infty$ remaing in $\cU$. The inverse Fourier transform can be calculated deforming the line of integration into any sufficiently regular curve in $\cU_0$; crossing poles, one can reduce
to the integral over any sufficiently regular curve in $\cU$ (plus residues at the poles crossed
in the process of deformation). In the case of the Heston model, under additional restriction on
the parameters, it is  proved in \cite{Lucic}  
that $\Phi(T,\xi)$ is analytic in $\cU=\bC\setminus i((-\infty,\mum(T)]\cup [\mup(T),+\infty))$, where $\mum(T)<-1<0<\mup(T)$; in  \cite{paraHeston}, this fact is proved for jump-diffusion generalizations of the Heston model, with more than one factor driving the dynamics of the volatility process,
and algebraic equations for $\mum(T)$ and $\mup(T)$ were derived. 
For wide classes of affine jump-diffusion processes, it is proved in \cite{pitfalls} that $\Phi(\xi,T)$ is an analytic 
function on the union  $\cU_0(\mum(T),\mup(T),\gam,\gap)$ of a strip
$S_{(\mum(T),\mup(T))}$, where $\mum(T)<-1<0<\mup(T)$, and a cone
$\cC_{\gam,\gap}:=\{\xi=\rho e^{i\varphi}\ |\ \varphi\in (\gam,\gap) \vee 
 \varphi\in (\pi-\gam, \pi-\gap) \}$, where $\gam\in (-\pi/2,0), \gap\in (0,\pi/2)$
 (typically, $\ga_\pm=\pm\pi/4$), and decays as $\xi\to \infty$ remaining in the cone. 
 Once the existence of such a strip and cone is established, we choose a deformation of the contour of integration  into a contour $\cL_{\om_1,b,\om}:=\chi_{\om_1,b,\om}(\bR)$, where the conformal map $\chi_{\om_1,b,\om}$
 ({\em sinh-deformation}) is defined by
 \bbe\label{eq:sinh}
 \chi_{\om_1,b,\om}(y)=i\om_1+b\sinh(i\om+y),
 \ee
and $\om_1\in \bR$, $b>0$ and $\om\in (\gam,\gap)$. The parameters of the deformation
 are chosen so that in the process of deformation, the contour remains in the domain of analiticity
 of $\Phi(\xi,T)$, and the singularities at $0$ and $-i$ are not crossed. The deformation being made,
 we change the variable $\xi=\xi(y)=\chi_{\om_1,b,\om}(y)$ in \eq{EuroCall}
 \bbe\label{EuroPricesinh}
V(S_0, K;T) = -\frac{bKe^{-rT}}{2\pi}\int_{\bR}\frac{e^{i\xi(y)\ln(S_0/K)}\Phi(\xi(y),T)}{\xi(y)(\xi(y)+i)}\cosh(i\om+y)dy,
\ee
and apply the simplified trapezoid rule: 
\bbe\label{EuroPricesinhtrap}
V(S_0, K;T) = -\frac{b\ze Ke^{-rT}}{\pi}\Re\sum_{j=0}^N e^{i\xi(j\ze)\ln(S_0/K)}g(j\ze,T)(1-\de_{0j}/2),
\ee
where
$
g(y,T)=\frac{\Phi(\xi(y),T)}{\xi(y)(\xi(y)+i)}\cosh(i\om+y)$. Explicit recommendations for the choice
of the parameters of the deformation $\om_1,b,\om$ and parameters $\ze,N$ of the simplified trapezoid rule
are derived in \cite{SINHregular}. We add several useful details.
\begin{enumerate}[I.]
\item
Find $\mu_\pm(T)$ and $\ga_\pm$. 
\item
If $S_0\le K$, use $\om\le 0$ and calculate the price of either the call or covered call; 
if $S_0\ge K$, use $\om\ge 0$ and  calculate the price of either the put or covered call.
\item
\begin{enumerate}[(a)]
\item
 If the call is priced, set $\lm=\mum(T), \lp=-1$, $\om=\gam/2$, $d_0=-\om$.
\item
If the put is priced, set $\lm=0, \lp=\mup(T)$, 
$\om=\gap/2$, $d_0=\om$.
\item
If the covered call is priced, set $\lm=-1,\lp=0$. If $S_0< K$, set
$\om=\gam(T)/2$, $d_0=-\om$.
If $S_0> K$, set
$\om=\gap(T)/2$, $d_0=\om$.
\item
For ATM options, it is optimal to set $\om=(\gam+\gap)/2$, $d_0=(\gap-\gam)/2$.
\end{enumerate}
\item
Choose $k_d<1$ close to 1, e.g., $k_d=0.9$, and set $d=k_dd_0$, $\ze=2\pi d/\ln(100/\eps)$,
\[
b=\frac{\lp-\lm}{\sin(\om+d)-\sin(\om-d)},\ \om_1=\frac{\lm\sin(\om+d)-\lp\sin(\om-d)}{\sin(\om+d)-\sin(\om-d)}.
\]

\item
As in the case of Flat iFT, to choose $N$ so  that the truncation error is sufficiently small, it is necessary to know the rate of decay
of $\Phi(\xi,T)$ as $\xi\to\infty$ along the contour of integration. 
Let $\Phi(\xi,T)=\exp[\phi(\xi,T)]$, and let 
an upper bound \eq{upperphi0} for $\Re \phi(\xi,T)$ be known.
 In the $y$-coordinate, the series decays as
$(K\ze b/\pi)e^{-g(|\xi(y_j)|,T)}/|\xi(y_j)|$. Since $|\xi(y)|$ increases as an exponential function of $y$ as $y\to\pm\infty$,  the truncation error is smaller than the last term of the truncated sum if $g(|\xi(y_j)|,T)$ is larger.
We find the positive solution $\La_0$ of the equation $
e^{-g(\La_0,T)}/\La_0 = b\pi\eps/(K\ze)$, and set $\La=\ln(2\La_0/(Kb))$,  $N=\mathrm {ceil}\,\La/\ze$.
\end{enumerate}
\begin{rem}\label{rem:choice of ze}{\em The recommendation $\ze=2\pi d/\ln(100/\eps)$
 presumes that $||f||_{S_{(\lm,\lp)}}$, the analogue \eq{Hardynorm} of the Hardy norm of the integrand, is bounded by 100.  A safer alternative which we used in several publications is to use the approximation 
$||f||_{S_{(\lm,\lp)}}\approx |f(i(\om+d))|+|f(i\om-d))|$.
}
\end{rem}

\subsection{Ad-hoc bound for $\phi$ and choice of $N$ in the rough Heston model}\label{ss:bound_psi}
Possibly, the bound informally derived in this section can be rigorously proved but
it requires a subtle work with joint asymptotic expansions as $t\to+\infty$ and $\xi\to \infty$ in a cone around
$\bR$. The formally proved  asymptotic expansion in  \cite{GatheralRadoicic2019} (all terms of the expansion in \cite{GatheralRadoicic2019}  bar the leading term lead to divergent integrals when substituted into
the fractional Riccati equation)
implies that, for $\xi$ fixed and 
$t\to+\infty$, the leading term of the asymptotics of $h(\xi,t)$ is the same as in the Heston model:
\bbe\label{aspsitinf}
h(\xi,t)\sim h^\infty(\xi):=(\ga\nu)^{-2} [-\ga(i\rho\nu\xi-1)-[(\xi^2+i\xi)(\ga\nu)^2+\ga^2(i\rho\nu\xi-1)^2]^{1/2}].
\ee
As $\xi\to \infty$ along any ray in the right half-plane, 
\bbe\label{hinfinf}
h^\infty(\xi)= -\xi \frac{i\rho+\sqrt{1-\rho^2}}{\ga\nu}+ O(1).
\ee
Set $\ka_\infty=\sqrt{1-\rho^2}/(\ga\nu)$. Motivated by \eq{hinfinf},
we surmise that  there exist a cone $\cC$ round $\bR_+$ and $R_0,T_1>0$ such that
for all $\xi\in \cC$ s.t. $|\xi|\ge R_0$ and all $t>T_1|\xi|^{-1/\al}$
\bbe\label{asboundh1}
\Re h(\xi,t)\le -\frac{\ka_\infty}{2}\Re\xi.
\ee
If $T_0>0$ is sufficiently small, $t\in (0,T_0|\xi|^{-1/\al})$, and $(\cC\ni)\xi\to \infty$,
\bbe\label{as0h}
\Re h(\xi,t)\sim -\frac{t^\al}{2\Ga(\al+1)}\Re\xi^2.
\ee
Let $\xi = y e^{i\om}$, where $y>0$ and $\om\in (-\pi/4,\pi/4)$. From the definitions \eq{eq:g1g2} of $g_1,g_2$,
it is immediate that simple approximate upper bounds for $\Re g_j, j=1,2$, hence, for $\Re\phi$  can be obtained
using \eq{asboundh1} and \eq{as0h} separately, and taking the maximum of the results.
We obtain
\bbe\label{ash}
\Re \phi(\xi,t)\le -\min\{(G_1(t)/2) \cos\om y,G_2(t) \cos(2\om) y^2\},
\ee
where  
\beqa\label{psiG1}
G_1(t)&=&\ka_\infty (\theta\ga t+v t^{1-\al}/(1-\al)),\\
\label{psiG2}
G_2(t)&=&\theta\ga\frac{ t^{1+\al}}{2\Ga(2+\al)}+v t.
\eqa
The accuracy of the approximation \eq{hinfinf} and bound \eq{ash} increases as $|\xi|$ increases 
but is far from perfect. See Fig.~\ref{CurvePhiSet1T10} for an illustration.

Given a small error tolerance $\eps>0$, we derive an approximation to the truncation parameter $\La$ letting $C=10$, and

1) find the positive solution $\La_{01}$ of the equation $((G_1(t)/2)\cos\om) y+\ln y- E=0$, where $E=\ln(CK/(\pi\eps)$, using the Newton method with the initial approximation 1;

2) find the positive solution $\La_{02}$ of the equation $(G_1(t)\cos(2\om))y^2+\ln y- E=0$ making the change of the variable
$y_1=y^2$ and using the Newton method with the initial approximation 1;

3) set $\La_0=\max\{\La_{01},\La_{02}\}$ and $\La=\ln(2\La_0/b)$, $N=\mathrm{ceil}\,\La/\ze$.


 \begin{figure}
\begin{tabular}{cc}

 \begin{subfigure}[h]{0.45\textwidth}

 \centering
    \includegraphics[width=0.9\textwidth,height=0.4\textheight]{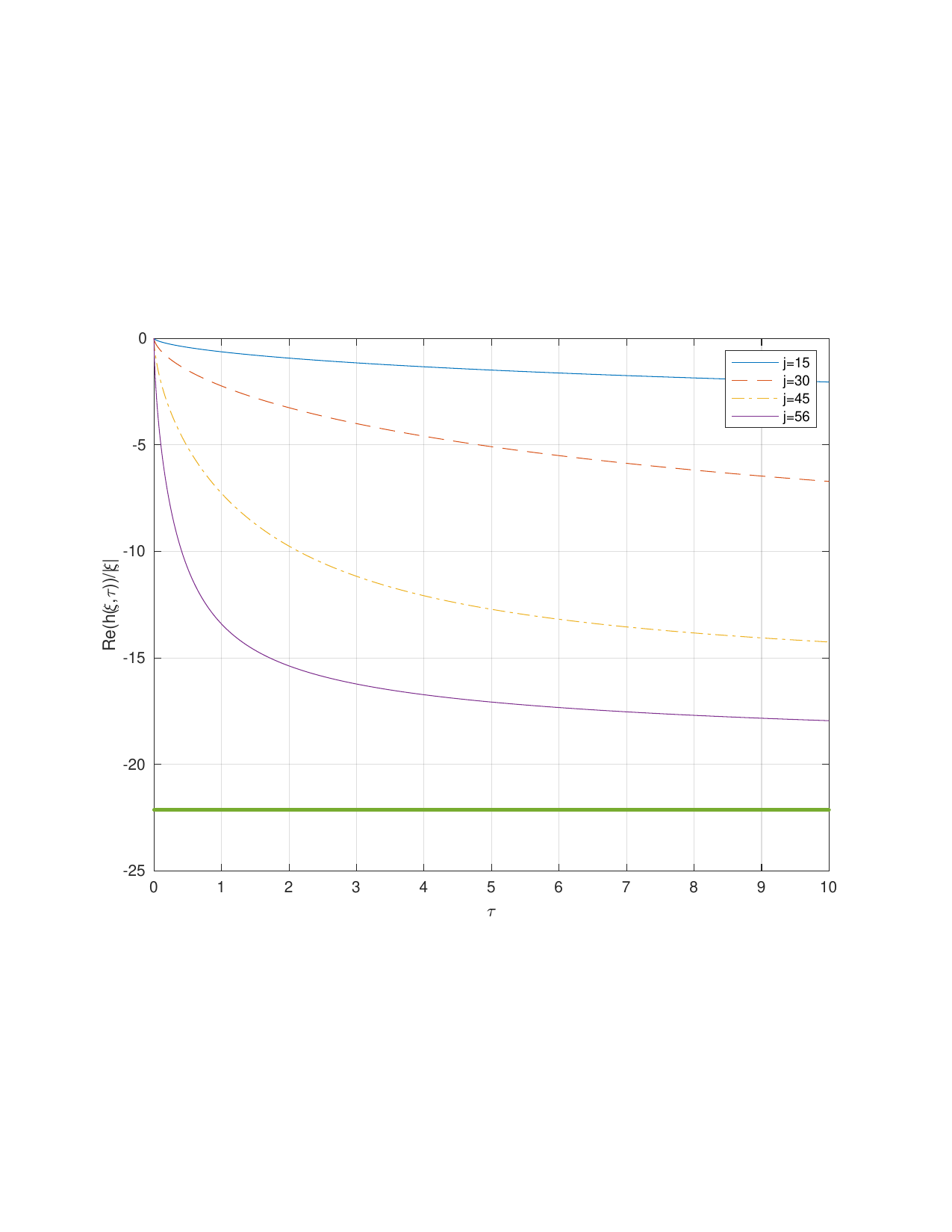} 
    \caption{}\label{CurveHSet1T10}
\end{subfigure}
&
\begin{subfigure}[h]{0.45\textwidth}
\centering
    \includegraphics[width=0.9\textwidth,height=0.4\textheight] {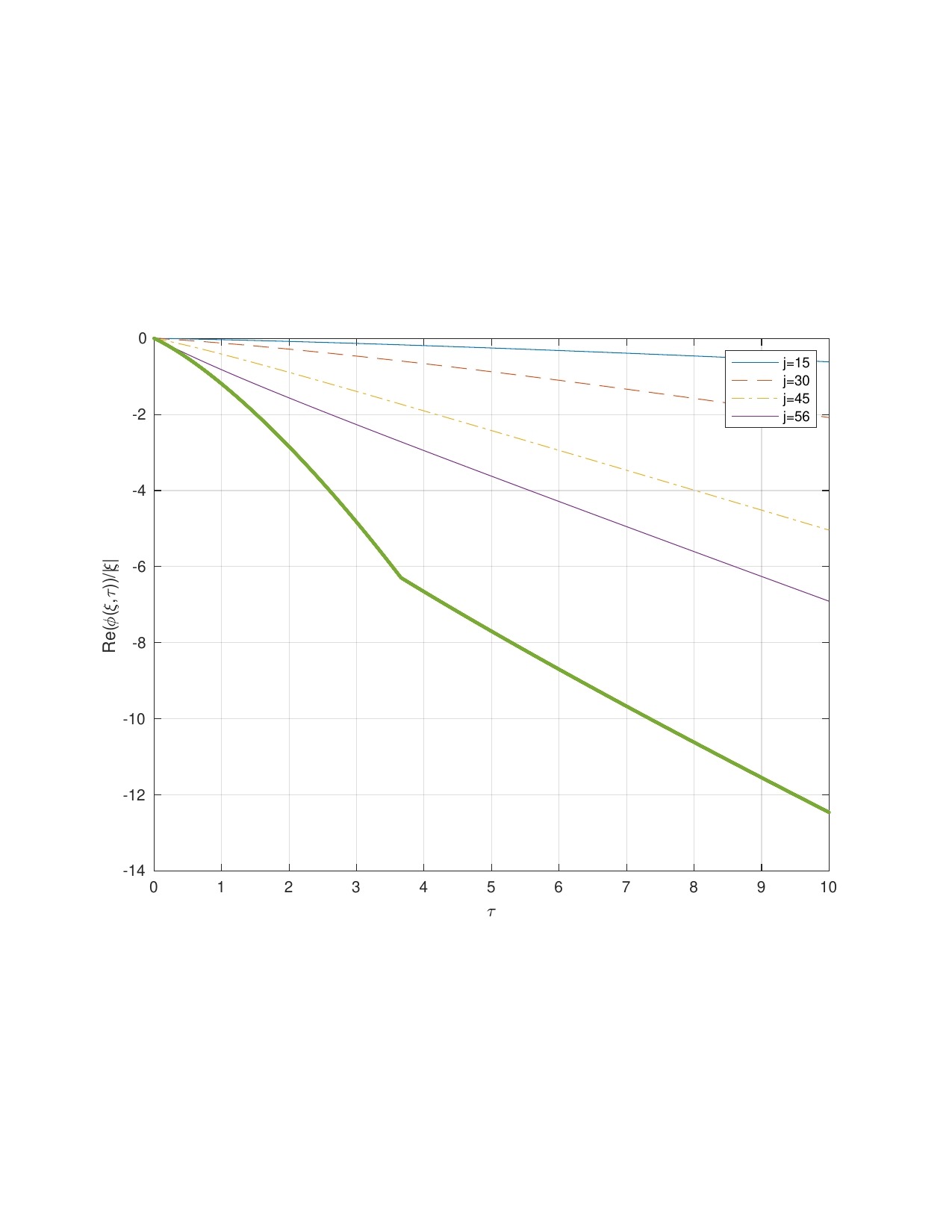}
    \caption{} \label{CurvePhiSet1T10}
\end{subfigure}

\end{tabular}
\caption{Panel (A). Dots: $h=\ka_\infty=\sqrt{1-\rho^2}/(\ga\nu)$, other lines: $h=h(\xi_j,t)/|\xi_j|$, 
for $\xi_{15}=1.1299 - 0.5 i, \xi_{30}=4.3712 - 0.5 i, \xi_{45} =15.5301 - 0.5 i, \xi_{56} =39.1661 - 0.5 i $.
Panel (B). Dots: $t\mapsto -\max\{G_1(t) \cos\om y,G_2(t) \cos(2\om) y^2\}$ (the curve defined by the RHS of
\eq{ash} is higher and gives a more accurate bound), other lines: $t\mapsto \Re\phi(\xi_j)/|\xi_j|$.
Parameters from \cite[Example 5.1]{EuchRosenbaum2019}.
}
\label{CurvePhiSet1T10Set1}
\end{figure}

\subsection{Conformal bootstrap principle}\label{ss: conformal bootstrap}
The conformal deformation method (sinh-acceleration in particular) allows one to
accurately assess the total error of the method comparing the two prices $V^j$, $j=1,2$, given by \eq{EuroPricesinhtrap} with two 
different contour deformations and different $N, \ze$. If the contours are not close, the terms in one sum are evaluated at points  on one curve that are 
 far from the points on the other curve. Hence, if the number of terms is several dozen or more and the difference $V^1-V^2$ is of the order of $10^{-m}$, where $m=3,4,\ldots, $ then the probability that the difference of the exact (unknown) price $V$
 from either of $V^1, V^2$ is greater than $10^{-m+2}$ is essentially 0. We used
 this observation in \cite{SINHregular,Contrarian,EfficientLevyExtremum,EfficientDoubleBarrier2}.
In the case of the rough Heston model, the existence of a cone of analyticity is unknown. Although certain results for  a strip of analyticity are available \cite{GerholdGersteneckerPinter2019}, the rate of decay of the characteristic function
at infinity is unknown. Hence, the scheme applied to the Heston model and affine SV models \cite{paraHeston,pitfalls,SINHregular} cannot be justified rigorously, even if Flat iFT method is applied.
In other complicated SV models, the strip (or tube domain in the multi-factor models) and cone of analyticity 
where the characteristic function decays is also unknown, and difficult to find. To resolve this difficulty,
we suggest the following principle.

\vskip0.1cm
\noindent
{\sc Conformal bootstrap principle I.} {\em Assume that a union $\cU$ of a strip and cone of analyticity
of the characteristic function $\Phi(\xi)$ is known, and  $\Phi(\xi)$ can be calculated with an (almost) machine precision.
Construct  at least two admissible conformal deformations of
the line of integration $\cL^0$ such that  the contours $\cL_j=\chi_j(\cL^0), j=1\ldots, n,$ are not close
and diverge at infinity, and calculate the approximations $V^j$ to the price  
using   
the corresponding changes of variables and simplified trapezoid rule
with several dozens of terms and more.

If  $|V^j-V^k|<10^{-m}$ for $j,k\in 1,\ldots, n$,
where $m$ is not too small, e.g., $m\ge 4$, 
then each of $V^j$ satisfies the error tolerance $10^{-m+2}$ with probability almost 1.}

Assume that $\Phi(\xi)$ is evaluated numerically, and the numerical procedure defines the approximation $\Phi_{ap}(\xi)$
as an analytic function. If the error of the approximation is unknown, then Conformal bootstrap principle I does not notice the error
of the approximation   $\Phi(\xi)\approx \Phi_{ap}(\xi)$. Furthermore, it is possible that a region of analyticity $\cU$ 
and the rate of decay of $\Phi(\xi)$ as $\xi\to \infty$ remaining in $\cU$ are unknown as well (this is the case for the rough Heston model).
Then we use
\vskip0.1cm
\noindent
{\sc Conformal bootstrap principle II.}  {\em Assume that we have two or more numerical procedures for
evaluation of $\Phi(\xi)$ for $\xi$ in a union $\cU$ of a strip and cone. Let $\Phi_{ap,j}(\xi), j=1,2, $ be the approximations.
At least one of the functions $\Phi_{ap,j}$ may not be an analytic function.

Then, if we use different  $\Phi_{ap,j}$ to evaluate the integrals over different contours $\cL_j$, and after   
the corresponding changes of variables and application of the simplified trapezoid rule
with several dozen of terms and more, 
the results agree with the accuracy $10^{-m}$ where $m$ is not small, e.g., $m\ge 7$, 
then 
\begin{enumerate}[(1)]
\item
$\Phi$ is analytic in a simply connected region $\cU_0\subset \cU$ containing the chosen contours;
\item
each of the results satisfies the error tolerance $10^{-m+2}$ with probability almost 1.
\end{enumerate}
}
\begin{rem}\label{rem:analyticAdams}{\rm 
The approximations produced by the fractional Adams method and Modification II
are analytic functions, and if we do not apply Modification III, we obtain the agreement between the results
of the order of E-14-E-13 even in situations when application of Modification III shows errors of the order of E-08.
 Errors of the sinh-acceleration method in the examples in Sect.~\ref{s:numer} are, essentially, the errors of the modifications
of the Adams method.}
\end{rem} 

\section{Numerical examples}\label{s:numer}
  For an independent verification
of the accuracy of our method, the reader may consult Tables \ref{table:T=2}-\ref{table:T=1/252},
where our results are compared with the results in \cite{Pages2007}.  In the tables, we show the errors and relative errors w.r.t. benchmark prices calculated using much finer and longer grids; the prices shown are the prices calculated using the rough and short grids described in the table notes. 
The calculations in the paper were performed in MATLAB 2017b-academic use, on
a MacPro Chip Apple M1 Max Pro chip (3.2-GHz processor) with 10-core CPU, 24-core GPU, 16-core
Neural Engine 32GB unified memory, 1TB SSD storage.

\subsection{Examples in the case of a large error tolerance}\label{ss:large_err_tolerance}
Consider the example with the parameters \eq{parEuRos}.
We apply the sinh-acceleration with the parameters chosen for a given error tolerance $\eps$ as prescribed in Sect.~\ref{ss: SINH} and Modification III of the Adams method  with two iterations and $M$ chosen by hand. We consider options of maturities $T=0.5$ and $T=1/12$ (half a year and one month), the strikes are in the interval $[0.8, 1.2]$, the spot is 1.
In these examples and in many other numerical experiments,  the prescriptions in Sect.~\ref{ss: SINH}
lead to a certain overkill: to satisfy the desired error tolerance, the step size can be larger, and the number of terms smaller than recommended. 
For $T=0.5$ (see Table \ref{table:rough T=0.5}), we use
the prescription for the error tolerance $\eps=0.01$ with a marginally smaller $N=7$, and $M=9$.
The relative errors of prices (resp., implied volatilities) are smaller than 0.008, and implied volatilities are smaller than 0.004; the OTM put and call prices are calculated separately, and the total CPU time is 0.84 msec. We show the prices for 9 strikes but the most time consuming block of the program (more than 95\% of the CPU time), namely,
the evaluation of the characteristic exponent $\phi(\xi_j,t_k)$ for $\xi_j$ on the grids used in the sinh-acceleration method (one grid for OTM and ATM puts, the other one for OTM calls) and $t_j=jT/M$, $j=1,2,\ldots, M$, can be used to calculate vanilla  prices at thousands of points $(K,T)\in [0.8, 1.2]\times [0.25,0.5]$  in 1-2 msec. For $t_j$ not on the grid, interpolation can be efficiently used because $\phi(\xi,t)$ is of the class $C^\infty(0,T_*(\xi))$ in $t$ on any interval $(0,T_*(\xi))$ where $\phi(\xi,t)$ exists. 
Note that the relative errors of implied volatilities are smaller than the ones of prices, which explains why
it is preferred to calibrate to implied volatilities rather than to prices: the goodness of fit improves. For option of a shorter maturity $T=1/12$ and deep OTM options, the relative errors of prices are approximately 15 times larger than the ones of implied volatilities (see Table \ref{table:rough T=1/12 refined}).

In the case $T=1/12$, we produce two tables. The accuracy of the same order of magnitude as
in the case $T=0.5$ can be achieved only if we use smaller $\eps=0.001$ to choose the parameters of the sinh-acceleration, hence, smaller $\ze$ and larger $N=12$. Since the errors of the fractional Adams method increase as $T$ decreases, we need to use a larger $M=20$, and the CPU time increases almost two-fold.
However, the relative errors of the order of 0.5\% are only for strikes close to the spot (see Table \ref{table:rough T=1/12}).
To increase the accuracy in the tails, it is necessary to make the contours of integration curved so that
the oscillating factor becomes fast decreasing one and decrease $\eps=10^{-5}$; $\ze$ decrease and $N$ increase further. Since the derivatives of  $\phi(\xi,t)$ w.r.t. $t$ increase as $\xi$ moves to infinity along non-horizontal rays, we need to use $M=40$ (see Table \ref{table:rough T=1/12 refined}); the total CPU time is 2.7 msec. In Table \ref{table:rough T=0.5, Lewis and iFT-BM}, we demonstrate the performance of the Lewis and Flat iFT-BM methods for $T=0.5$, and in Table \ref{table:T=5}, the performance of the sinh-acceleration,
Lewis and Flat iFT-BM methods for options of moderately large maturity $T=5$. Note that if the Lewis method is used, it is necessary to use  $M$ in the modified Adams method significantly larger than in the other two cases because among the nodes $u_j$ in the Lewis method there are nodes too close to 1, hence,
the log-characteristic function needs to be evaluated for $\xi$ too large in the absolute value. If the sinh-acceleration and Flat iFT-BM are used and the error tolerance is large, then only $\xi$ of a moderate size appear, and modifications of the Adams method with small $M$'s can satisfy the desired error tolerance.

In Table \ref{table: COS1_K}, we show similar results for a different set of parameters of the rough Heston model, and larger $T=1$. We also show that the relative errors of prices calculated using COS method are significantly larger than the errors of prices produces by the method of the paper at the CPU cost thousands of times smaller. 

\subsection{Relative errors of several methods for options of moderate and small maturities}
In Tables \ref{table:rel_errors_moderate}, \ref{table:rel_errors_short}, \ref{table:implvol_short},
we show the relative errors of prices of OTM and ATM vanillas and implied volatilities and the CPU time, the sinh-acceleration, hybrid, Flat iFT, Flat iFT-BM and Lewis methods being used. As we explained and demonstrated above,  CM and COS methods are less accurate, more time consuming and more complicated than 
Flat iFT and Flat iFT-BM respectively. Since we wish to compare the other methods with the hybrid one,
and we use the prices provided in \cite{RoughNotTough}, we choose the parameters of each scheme so that
the errors are of approximately the same order of magnitude as in \cite{RoughNotTough} or somewhat smaller,
and compare the CPU times.
The reader observes that for options of moderate maturities, the accuracy
E-06 can be achieved fast with sinh-acceleration, Flat iFT-BM and Lewis methods using comparable number of terms and CPU time, the sinh-acceleration being the fastest. The accuracy of the order of E-09 is very difficult if possible to achieve
unless the sinh-acceleration is used. For short maturity options, OTM ones especially, reliable calculations
are possible only using the sinh-acceleration, and even for the latter, the region around the strike where even marginally
reliable calculations are possible shrinks as the time to maturity approaches 1 day. 

\subsection{Comparison with the hybrid method}\label{s:hybrid}
In \cite{RoughNotTough}, a hybrid method based on the asymptotic expansion of the solution of the fractional
Riccati equation near 0 and the Richardson-Romberg extrapolation \cite{Pages2007} farther from 0 is derived. It is demonstrated that
the Adams method may require extremely fine grids to achieve even moderate accuracy; the CPU cost becomes prohibitively large. 
The modifications of the Adams method that we use are faster and more accurate
than the procedure in \cite{RoughNotTough}. The CPU time documented in \cite{RoughNotTough}
(implemented in ``C++ using a standard laptop with a 3.4-GHz processor") is several hundred msec for
each pair (maturity, strike). The CPU times in the present paper, for the same accuracy as in \cite{RoughNotTough},
are in the range 100-250 msec, for the whole volatility surface from 1 month to 2 years, even if the option prices are evaluated at several hundred thousand of points in the time-strike space. For options of shorter maturities,
much finer $t$-grids and longer $\xi$-grids, different for OTM calls and puts are needed; the CPU time is  1-2 sec. 
 However, for options of moderate maturities
in a narrower interval, e.g. from $T=0.5$ to $T=2$, less than 100 msec. suffice to satisfy the error tolerance of the 
order of E-06. For options of moderate maturities, the call options prices
published in \cite{RoughNotTough} are in  almost perfect agreement with the results produced by the method of the present paper (for the digits presented in \cite{RoughNotTough};
see Tables \ref{table:T=2}-\ref{table:rel_errors_short}. We produce the results with the accuracy E-10 and better). 
The agreement decreases as time $T$ to maturity decreases, and, contrary to the expectation that
a method based on the asymptotic expansion of the solution near 0 should perform better, we claim that the errors
of the hybrid method of \cite{RoughNotTough} are larger, for OTM options especially. The explanation is two-fold:
first, in \cite{RoughNotTough}, an inherently inaccurate CM method is used, and, secondly,
to accurate price short maturity options, the fractional Riccati equation must be accurately solved for
large values of the spectral parameter $\xi$. But the asymptotic expansion in \cite{RoughNotTough}
is valid in the region that shrinks as $|\xi|\to \infty$. How to efficiently resolve the second issue remains an important
open problem; the inherent errors of CM method are analyzed in Sect.~\ref{s:FT}. We expect that if an accurate procedure
for the numerical Fourier inversion is used instead of the CM method, then,
for options of moderately small maturities and not far from the spot so that only moderately large $\xi$ appear in
the pricing formula, the asymptotic method
\cite{RoughNotTough} becomes more accurate (but still more time consuming) than the method of the present paper.


\section{Application of sinh-acceleration and Conformal Bootstrap principle to calibration}
\label{s:calibration}
The examples and analysis of various numerical methods in the paper and in  \cite{paraHeston,one-sidedCDS,HestonCalibMarcoMeRisk}  indicate that if 
the parameters of the numerical pricing scheme are fixed then a calibration procedure using the scheme
can find ``a good fit" only in a rather narrow region of the parameter space, where the scheme is sufficiently accurate.
Other parts of the parameter space are filtered out, and
 with a sizable probability,
``a good fit" will be found close to the boundary of the region, where the ``true calibration error" of the model
and the error of the pricing scheme almost cancel out (sundial calibration and ghost calibration). 
Thus, to increase the calibration quality, it is necessary to increase the region in the parameter space, where the pricing scheme produces sufficiently accurate results,
and filter out possible errors at the boundary of the region. The sinh-acceleration and Conformal Bootstrap principle
can be used to achieve both goals as follows.

{\sc Pre-calibration Step I.} A subset $\Theta$ of the parameter space of the model is selected, where ``a good fit" is expected;
a universal reasonably fast pricing algorithm cannot be even moderately accurate for all points in the parameter space. If it is necessary to consider
a very wide subset $\Theta$, then it is necessary to divide $\Theta$ into several subregions, and use the scheme below for each subset. 

Let $Obs$ be the set of pairs $(K,T)$ 
in the data set.
An accurate and not unnecessary time consuming pricing
is possible only if the $(K,T)$-plane is divided into several subregions as well, and the parameters of the numerical scheme are chosen
for each subregion  separately. For illustration, assume that the options of maturities from 1 day to 5 years
are chosen as the inputs for the calibration, and the spot price is normalized to 1. 
Then we consider time intervals $[T_j, T_{j+1}]$, $j=1,2,3$, where 
$T_1:=1/252<T_2:=1/12<T_3:=0.5<T_4=5$. 
  For $j=1,2,3$, we choose $k_j$ so that, for all $(K,T)\in Obs$ such that $T\in [T_j,T_{j+1}]$
and $\ln K\in (-k_j,k_j)$, the OTM option prices are larger than $10^{-6}$; if $|\ln K|>k_j,$ the OTM option prices
are smaller than $10^{-5}$. If such $k_j$ do not exist, the number of time intervals needs to be increased.
We choose $k^+_j$ so that, if $\pm \ln K>k^+_j$, the OTM option price is very small, e.g., smaller than $10^{-8}$; these prices are deleted from the data. Set $U^0_j=\{(K,T)\ |\ |\ln K|\le k_j\}$, $U^\pm_j=\{(K,T)\ |\  k_j\le \pm \ln K\le k^+_j\}$, and  choose
2 sets of the parameters of the sinh-deformation  for each $U\in\{U^0_j, U^\pm_j, j=1,2,3\}$. Note that the choice of $k_j$ is dictated by the properties of the numerical scheme that we use. In our numerical experiments, several set of the parameters of the model worked sufficiently well and fast when the prices for $(K,T)\in U^0_j$ (where the prices are not too small) were calculated,
and different sets were necessary to use in the regions where prices were smaller than $10^{-5}$, to satisfy a small tolerance for the relative error.

\begin{rem}{\rm As our numerical examples indicate, for options of moderate and large maturities, Flat iFT-BM or Flat iFT-NIG or Flat iFT-Heston can be used as well, with the summation by parts if $U$ is separated from
the ray $\{S_0\}\times (0,+\infty)$. The CPU time remains small, and if a strip of analyticity of the characteristic function $\Phi(\xi,T)$ where $\Phi(\xi,T)$ decays at infinity is known, we can use two lines in the strip of analyticity instead of two sinh-deformed curves. The reliability of Conformal Bootstrap principle is weaker in this case, though.}
\end{rem}
{\sc Pre-calibration Step II.}  Admissible choices depend on the analytical properties of
the characteristic function $\Phi(\xi,T)$. The first crucial precalculation step is the choice of admissible strips and cones. The strip $S_{(-1,0)}$ can be used always but for accurate pricing OTM options close to maturity, 
strips of the form $S_{(\lm,-1)}$ and $S_{(0,\lp)}$, where $\lm<-1, 0<\lp$, are highly advantageous to use.
To choose the cone of analyticity $\cC_{\gam,\gap}$, the following factors must be taken into account. 
If large (in absolute value) $\gam,\gap$ are admissible then the grid in the $\xi$-space sufficient for an accurate Fourier inversion can be made shorter but since such a grid involves $\xi$ with the large ratio $\Im\xi/\Re\xi$,
an accurate numerical solution of the fractional Volterra equation requires very large $M$ lest the numerical solution blows up. Since the latter solution is much more time consuming than the numerical Fourier inversion, it is highly advisable to use small $\om$  in the sinh-acceleration procedure, (we used $\om=\pm 0.1, 0.2$, hence, $\ga^\pm=\pm 0.2, 0.4$) and try to increase $\om$ and $\pm\ga^\pm$ only when necessary,
for OTM options of short maturities. We cannot exclude the cases when the opening angle of the cone of analyticity is smaller, hence, the numerical method with $\om=\pm 0.2$ does not work (although we did not encounter such a case in our numerical experiments), and smaller $\om$ must be used. 
This step can be done using Conformal Bootstrap principle with moderately large error tolerance, the reason being
that if one of the contours is outside the domain of analyticity or very close to the boundary, the numerical method will not work at all or the difference between the two results will be very large.

{\sc Pre-calibration Step III.}  We fix two pairs $(\om^\ell, d^\ell), \ell=1,2,$ 
satisfying conditions in Sect. \ref{ss: SINH}, and calculate $\om_1^\ell, b_\ell$. Given the error tolerance $\eps$, we calculate  $\ze_\ell$ using the ad-hoc approximation in Remark \ref{rem:choice of ze}. 

 Next, for each $U$, we fix the set $\cK\cT(U)$ of 9 points $(K,T)$: 4 corners of $U$, 4 in the middle of each side, and 1 in the center.
For each quadruple $(U,\om_1^\ell, b_\ell, \om^\ell)$, we design 
\begin{enumerate}[1)]
\item
a map $\cN(U,\om_1^\ell, b_\ell, \om^\ell; \cdot): \Theta \ni \theta \mapsto (0,+\infty)$ and 
\item
a map $\cM(U,\om_1^\ell, b_\ell, \om^\ell; \cdot): \Theta \ni \theta \mapsto (0,+\infty)$
\end{enumerate}
such that, for a large random sample $\Theta_0\subset \Theta$, and 9 points $(K,T)\subset \cK\cT(U)$,
the differences between the OTM option prices $V^\ell, \ell=1,2,$ calculated for each pair  $(\theta, (K,T))
\subset \Theta_0\times \cK\cT(U)$ using the method of the paper with the parameters
$\om_1^\ell, b_\ell, \om^\ell$, $N=\mathrm{ceil}\,\cN(U,\om_1^\ell, b_\ell, \om^\ell; \theta)$ and 
$M=\mathrm{ceil}\,\cM(U,\om_1^\ell, b_\ell, \om^\ell;\theta)$, do not exceed the error tolerance, in absolute value.
The maps $\cN$ and $\cM$ can be constructed either precalculating the values at points of a moderately fine multi-grid in $\Theta$ or designing  an appropriate deep neural network. We believe that it is unnecessary to choose $\cK\cT(U)$ using a randomization procedure because the dependence of sufficiently good $M$ and $N$ on $(K,T)\in \cK\cT(U)$ is fairly regular.

{\sc Calibration.}
After the maps $\cN$ and $\cM$ are constructed, the sinh-acceleration method gives a reliable pricing map from $\Theta$ to
the set of OTM option prices. The pricing map can be used either in standard search procedures (the prices for different $U$'s can be calculated in parallel) or, as in
\cite{HorvathMuguruzaTomas2021,Romer2022},  to train the network to calculate the implied volatilities at the points of a chosen grid in $(K,T)$ plane, and then use  interpolation to calculate the implied
volatilities for pairs $(K,T)$ in the data set. However, interpolation introduces sizable (sometimes, large) errors,
close to maturity especially. In particular,   spurious wings of the volatility curves may appear.

Hence, we suggest to train the network to find $N$ and $\phi(\xi,t_\ell)$ for each subset $U=U^\pm_j, U^0_j,$ and each $\xi$ on the two grids in the dual space, and $t_\ell$ on a sufficiently fine grid (grids depend on $U$). After that, for each pair $(K,T)\in Obs\cap U$,
two prices $V(\om^n_1,b^n,\om^n; \ze, N^n;K,T)$, $n=1,2,$ can be calculated faster than in the standard procedure
for the Heston model.

\section{Conclusion}\label{s:concl}
In the paper, we analyzed two crucial components for pricing vanilla options in the rough Heston model using the Fourier transform technique: the numerical evaluation of the characteristic function $\Phi(\xi,T)$ using the fractional Adams method, and
popular numerical Fourier inversion methods. We 
suggested several improvements of the fractional Adams method, which significantly increase the accuracy of calculations.
We showed that if the sinh-acceleration is applied to the Fourier inversion,  the vanilla prices at hundreds of thousands of points can be evaluated in a fraction of a second, for a wide region of the strike-time to maturity space and error tolerance smaller than E-06;
other popular methods are less accurate and slower. 
We demonstrated that, for short maturity options,
the grids necessary for accurate calculations must be significantly larger than for options of moderately large maturities, 
and popular methods become very inaccurate.
The common belief
that the same set of parameters of the numerical scheme can be used for all pairs $(K,T)$ of interest
(as in CM and COS methods) leads to incorrect results, in calibration procedures especially. Moderately accurate calculations become possible only in a region of moderate and moderately small times to maturity and not too far from the spot.
This effect was demonstrated earlier in the context of the calibration of the Heston model \cite{paraHeston,HestonCalibMarcoMeRisk}, where $\Phi(\xi,T)$ is known explicitly.
If $\Phi(\xi,T)$ can be evaluated only numerically, the total errors can snowball even in the case of the standard diffusion SV models
\cite{pitfalls}. In the case of the rough Heston model, the errors of the evaluation of $\Phi(\xi,T)$ are larger,
and, in the result, the calibration results in \cite{EuchRosenbaum2019} are seriously incorrect: the correct ATM skew
for the calibrated parameters is several times lower than the one shown in \cite{EuchRosenbaum2019}, and the 
volatility curves are different as well. We produced correct volatility  curves for an example in  \cite{KamuranEmreErkan2020}  where  COS method is used; the relative errors of COS-prices are in the range 5\%-22\%. We explained that CM method can produce spurious volatility smiles, and changing the dampening factor, one can obtain smiles of different shapes and choose the smile one likes. Both COS and CM methods are more complicated, introduce
additional errors and slower than the methods that we use.
The ``advantages" of CM method: 
referring to the universal prescription of the CM method,
one can pretend that no analysis of the properties of the integrand in the pricing formula are
necessary;
one can produce nice volatility smiles of different shapes playing with the parameters
of the method even if the correct smile is, in fact, an almost straight slope\footnote{a modified adage: if you torture
a numerical method long enough, it will confess to anything.}; 
important features of models with jumps can be
(artificially) reproduced in a diffusion model. A marginal gain of COS method is an increase of 
the strip of analyticity of the integrand, but the same gain can be achieved much simpler (Flat iFT-BM and Flat iFT-NIG
methods), without introducing additional errors, at a much smaller CPU cost.
We calculated the implied volatility curves for the parameter sets in two papers which use the Lewis method,
and demonstrated that even in a very favorable case of maturities in the range 0.444-2 years,
the application of the Lewis method results in incorrect smiles; one expects that, in the same papers, calibration results for short maturity options would be significantly worse. 
In view of these observations and the fact that in the majority of publications the details are 
lacking and only the name of a method (CM, COS or Lewis) is given,
 we believe that a majority of empirical calibration results are not reliable. In any case,  without
the explicit description of the numerical method and its parameters, one can doubt the veracity of the comparison of the performance
of various models. Certainly, in many cases that we considered, the curves and surfaces calculated using popular methods are
incorrect.

Accurate analysis of errors of each of popular methods is possible only if the key analytical properties of 
$\Phi(\xi,T)$ are known. In the case of the rough Heston model, for the first step, namely, the solution of the fractional Riccati equation, the necessary theoretical results are lacking. Therefore, any numerical procedure
is a conditional one:  essentially, one presumes that $\Phi(\xi,T)$ is sufficiently nice so that the integral converges and is sufficiently regular so that the numerical method of choice is accurate. To overcome this difficulty,
we formulate and use Conformal Bootstrap principle, and explain how this principle can be applied to construct 
reliable pricing and calibration procedures. The principle and procedures can be applied to other models where 
the necessary properties of $\Phi(\xi,T)$ are unknown or are difficult to theoretically derive.


\appendix

\section{}\label{s:tech}
\subsection{Grids depending on $\xi$ }\label{ss:xi0dependentgrids} The accuracy of calculations can be increased
using grids depending on $\xi$. To understand what a proper dependence of the grid on a (large in absolute value)
$\xi$ is, we take  $\xi=re^{i\varphi}$, where $r>>1$ and $\varphi\in (-\pi/4,\pi/4)$. Set $t_1=t r^{1/\al}$, $h_1(r,\varphi,t_1)=r^{-1}h(re^{i\varphi}, t_1 r^{-1/\al})$,
 substitute $t=t_1 r^{-1/\al}$ and $h(\xi,t)=rh_1(r,\varphi,t_1)$ into \eq{Volterra} and change the variable $s=r^{-1/\al}s_1$.
 We obtain the Volterra equation for $h_1$:
 \bbe\label{VolterraR}
h_1(r,\varphi,t_1)=\frac{1}{\Ga(\al)}\int_0^{t_1}(t_1-s_1)^{\al-1}F_1(r,\varphi, h_1(r,\varphi,s_1))ds_1,
\ee
where
\bbe\label{eqF1}
F_1(r,\varphi, h_1)=-\frac{1}{2}(e^{i2\varphi}+ie^{i\varphi}r^{-1})+\ga(ie^{i\varphi}\rho\nu -r^{-1})h_1+\frac{(\ga\nu)^2}{2}h_1^2.
\ee
Since $F_1(r,\varphi, h_1)$ is uniformly bounded as a function of $r$, the equation \eq{VolterraR} can be integrated
accurately if $t_1$ is not too large; if $t_1$ is large, the interpolation errors accumulate. 
Therefore, in a region $t\le A|\xi|^{-1/\al}$, where $A$ is moderately large, we solve \eq{Volterra} using a grid with the step of the order of $|\xi|^{-1/\al}$, and in the region $t\in [A|\xi|^{-1/\al}, T]$, we use
a grid independent of $\xi$.
\footnote{We construct grids that are unions of two uniform grids for simplicity. One
can use more complicated grids. The only essential requirement is that the step on $[0,A|\xi|^{-1/\al}]$ must be much finer
than the ones on $[A|\xi|^{-1/\al},T]$.}
This requires the straightforward recalculation of the coefficients in fractional Adams procedures:

For $k=0,1,\ldots, M_\xi-1$ and $j=0,1,\ldots, k$, calculate
\beqast
a_{k+1,k+1}&=&\frac{1}{\Ga(\al+2)}(t_{k+1}-t_k)^{\al}, \\
a_{0,k+1}&=&\frac{1}{\Ga(\al+2)}((\al+1)t_k^\al+\frac{1}{t_1}\left((t_{k+1}-t_1)^{\al+1}-t_{k+1}^{\al+1}\right),\eqast
and, in the cycle $j=1,1,\ldots, k$, calculate
\beqast
a_{j,k+1}
&=&\frac{1}{\Ga(\al+2)}\left\{\frac{(t_{k+1}-t_{j-1})^{\al+1}}{t_j-t_{j-1}}+
\frac{(t_{k+1}-t_{j+1})^{\al+1}}{t_{j+1}-t_{j}}\right.
\\
&&-(t_{k+1}-t_j)^{\al+1}\left.
\left[\frac{1}{t_{j+1}-t_j}+\frac{1}{t_{j}-t_{j-1}}\right]\right\}
\eqast

\subsection{Reasons for the popularity of CM method}\label{DistortionCM} 
First, a choice of $\om_1<-1$ means that the call option prices are calculated. The call price curve being convex,
the interpolation increases the call prices, which become positive in the deep OTM region even where
the calculated numerically oscillating price of the OTM options is negative. If the put-call duality is used
to price put options, the prices of deep OTM puts also increase and become positive. Hence, it becomes easier 
to satisfy the no-arbitrage condition. If $\om_1\in (-1,0)$, hence, the covered call is evaluated, then there is no convexity,
and, in the deep OTM region where the prices are very small, the prices calculated using Flat iFT can be negative.
The second ``useful" effect of the inaccurate calculations using the CM method is as follows. The resulting
implied volatility curve does look like a proper smile even when the correct curve is  (a segment of) an essentially straight line.
Numerical examples that we produce show that with the CM choices of the line of integration $\om_1=-1.1$ and $\om_1=-1.5$, Flat iFT  can produce
a smile instead of an approximately straight slope as well; the interpolation increases the effect.
The third effect stems from the wide-spread belief that the universal prescription in \cite{carr-madan-FFT} can be 
applied for pricing in any model and wide regions in the $(K,T)$ space, without a proper analysis of the strip of analyticity and rate of decay of $\Phi(\xi,T)$. We demonstrate that accurate calculations are possible only with  appropriate choices
of the parameters of the scheme for several regions in the $(K,T)$-plane. In the
result, the CM method
produces, typically, errors of the order of E-07 or worse, hence, is the CM method is not applicable for pairs $(K/S_0,T)$ 
\begin{enumerate}[1)]
\item
with small $T$, when $\Phi(\xi,T)$ decays very slowly, and the truncation error is large;
\item
far from the tails, where the OTM option prices are small, and the integrand highly oscillates;
\item close to maturity, where
even marginally accurate calculations are possible only in a very small vicinity of the spot;
\item
 for $T$ from a moderately long intervals, if the strip of analyticity shrinks as $T$ increases.
 (This is an effect typical for the Heston model; one expect that the same effect can be observed for the rough Heston model);
 \item
as the fixed $\om_1$ is getting closer to the boundaries of the admissible interval $(\mum(T),\mup(T))$,
 the discretization error explodes.   
 \end{enumerate}

\section{Figures and tables}\label{s:figures and tables}


\begin{figure}
\begin{tabular}{cc}
\begin{subfigure}[h]{0.45\textwidth}

 \centering
    \includegraphics[width=0.9\textwidth,height=0.4\textheight]{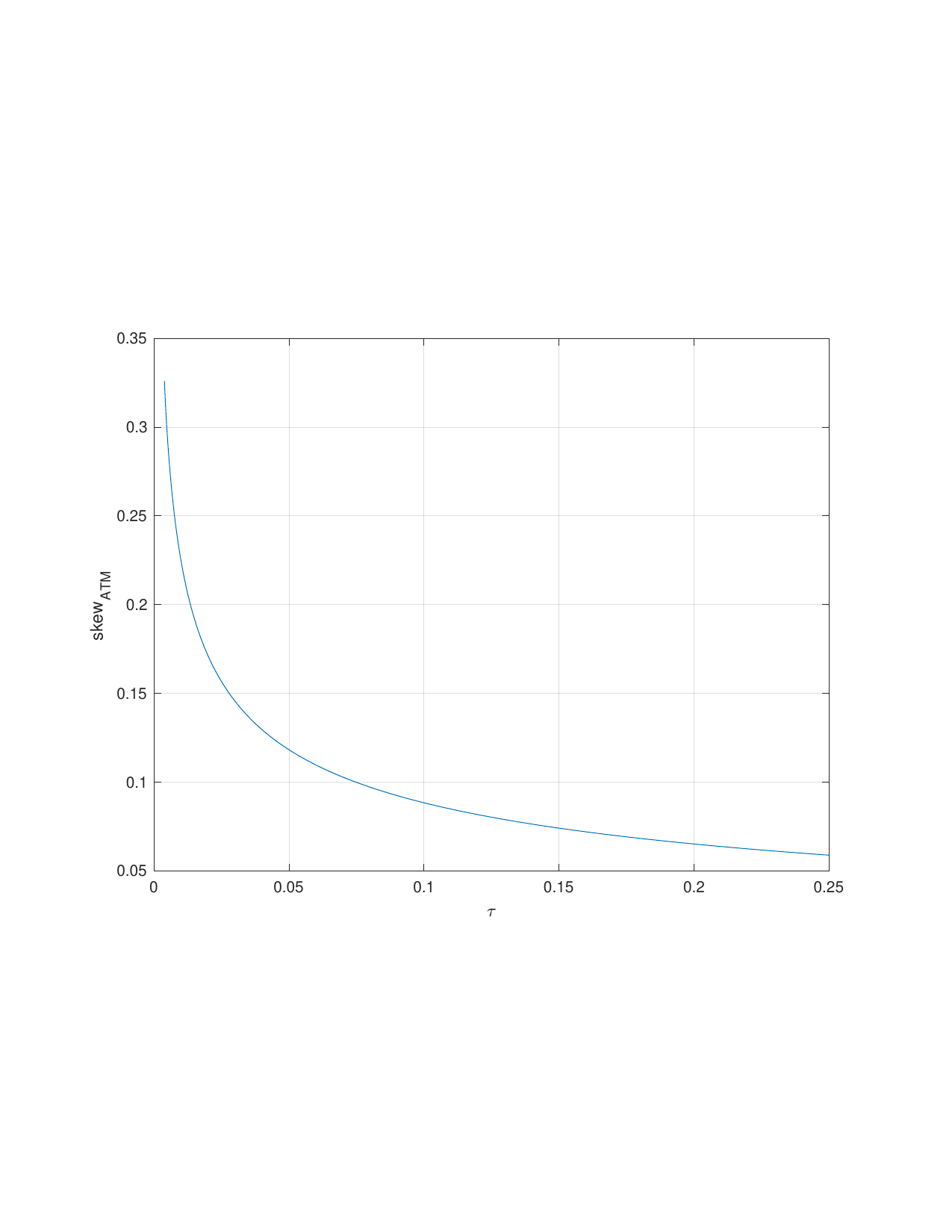}
    \caption{}\label{SkewSet1SinhXi}
\end{subfigure}
&
\begin{subfigure}[h]{0.45\textwidth}
\centering
    \includegraphics[width=0.9\textwidth,height=0.4\textheight]{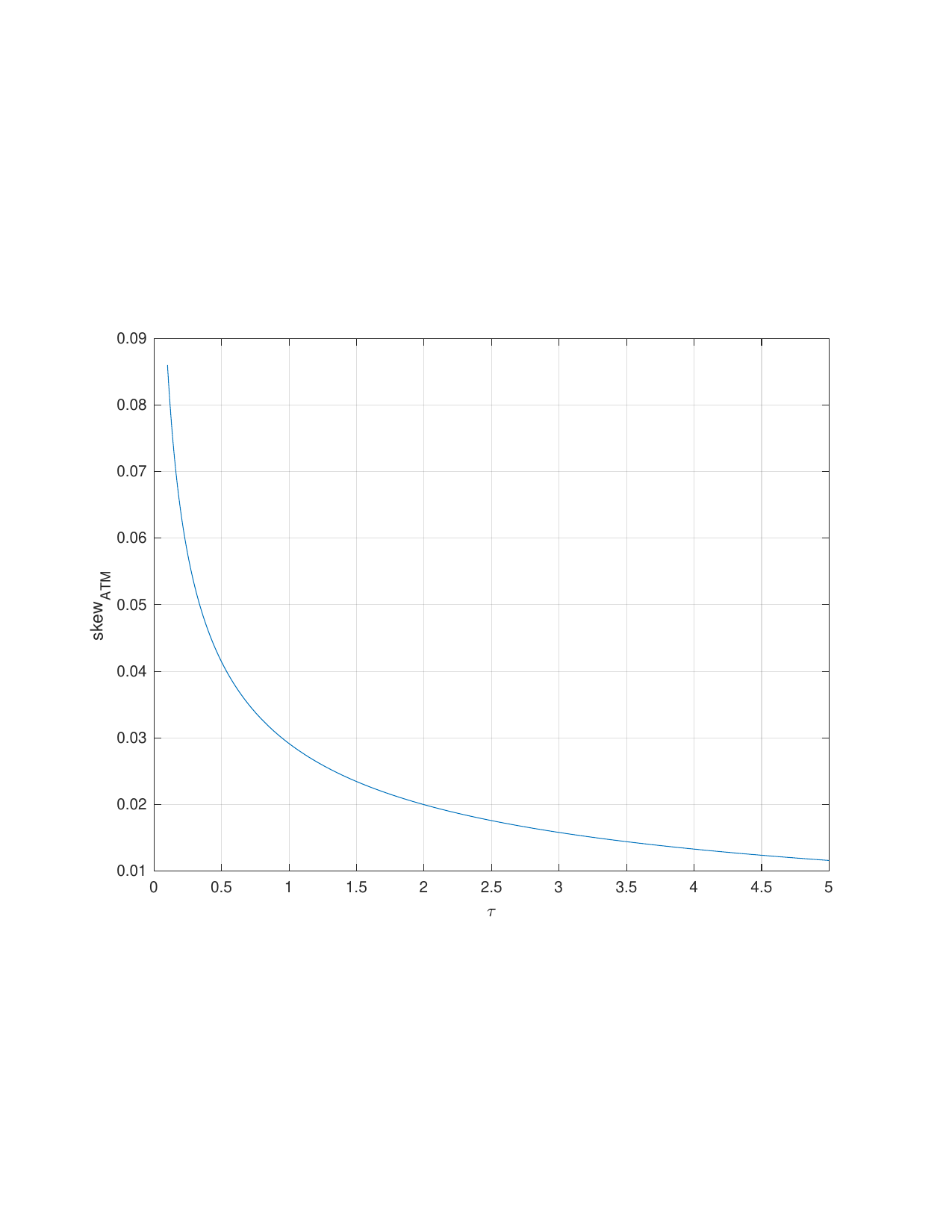}\caption{} \label{SkewSet1SinhXiLong}

\end{subfigure}

\end{tabular}
\caption{ATM  skew;
the parameters are in \eq{parEuRos}.
 } 
\label{Set1Skews}
\end{figure}

\begin{figure}
    \includegraphics[width=1\textwidth,height=0.6\textheight] {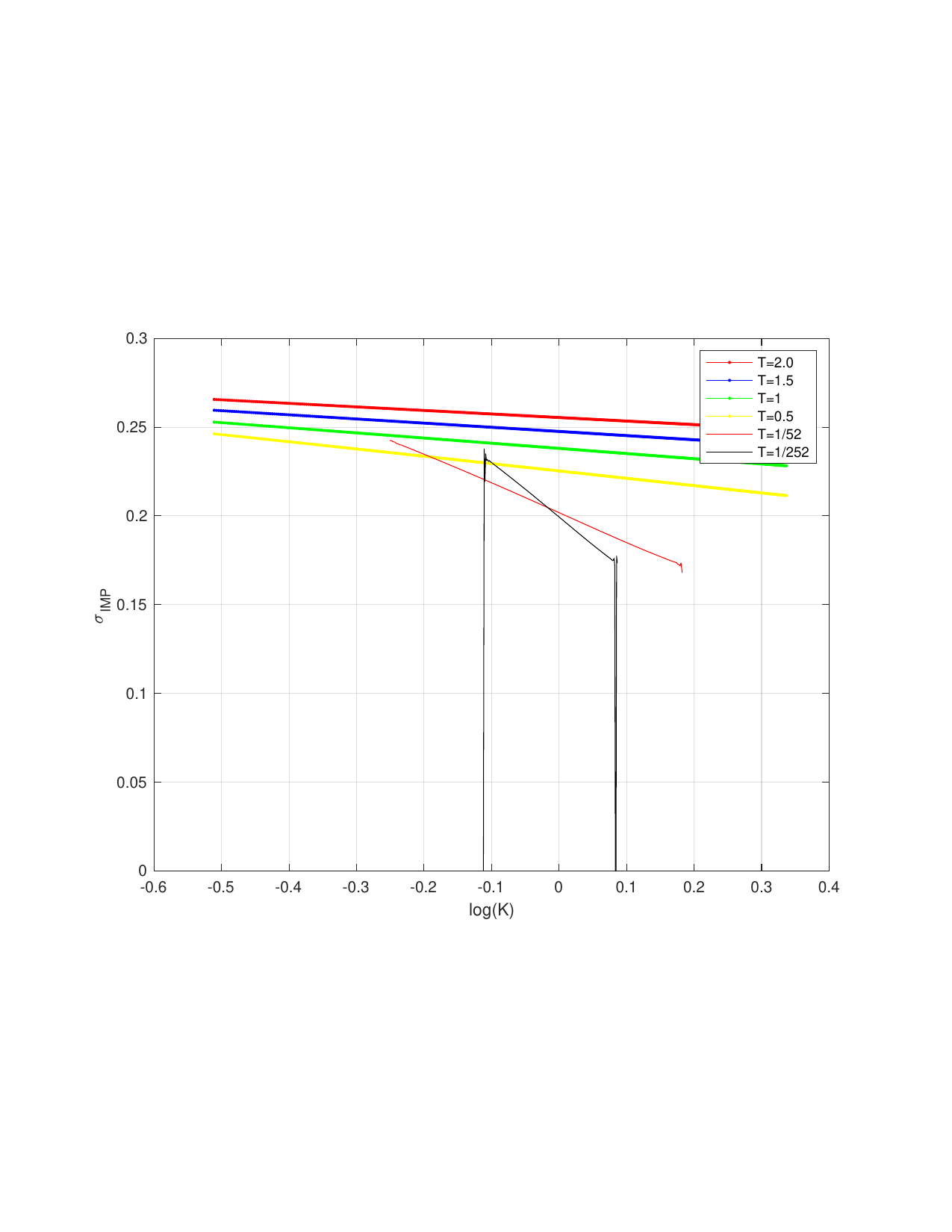}
    \caption{Implied volatility curves; the parameters are in \eq{parEuRos}.} \label{Set1Curves}
\end{figure}

\begin{figure}
    \includegraphics[width=1\textwidth,height=0.6\textheight] {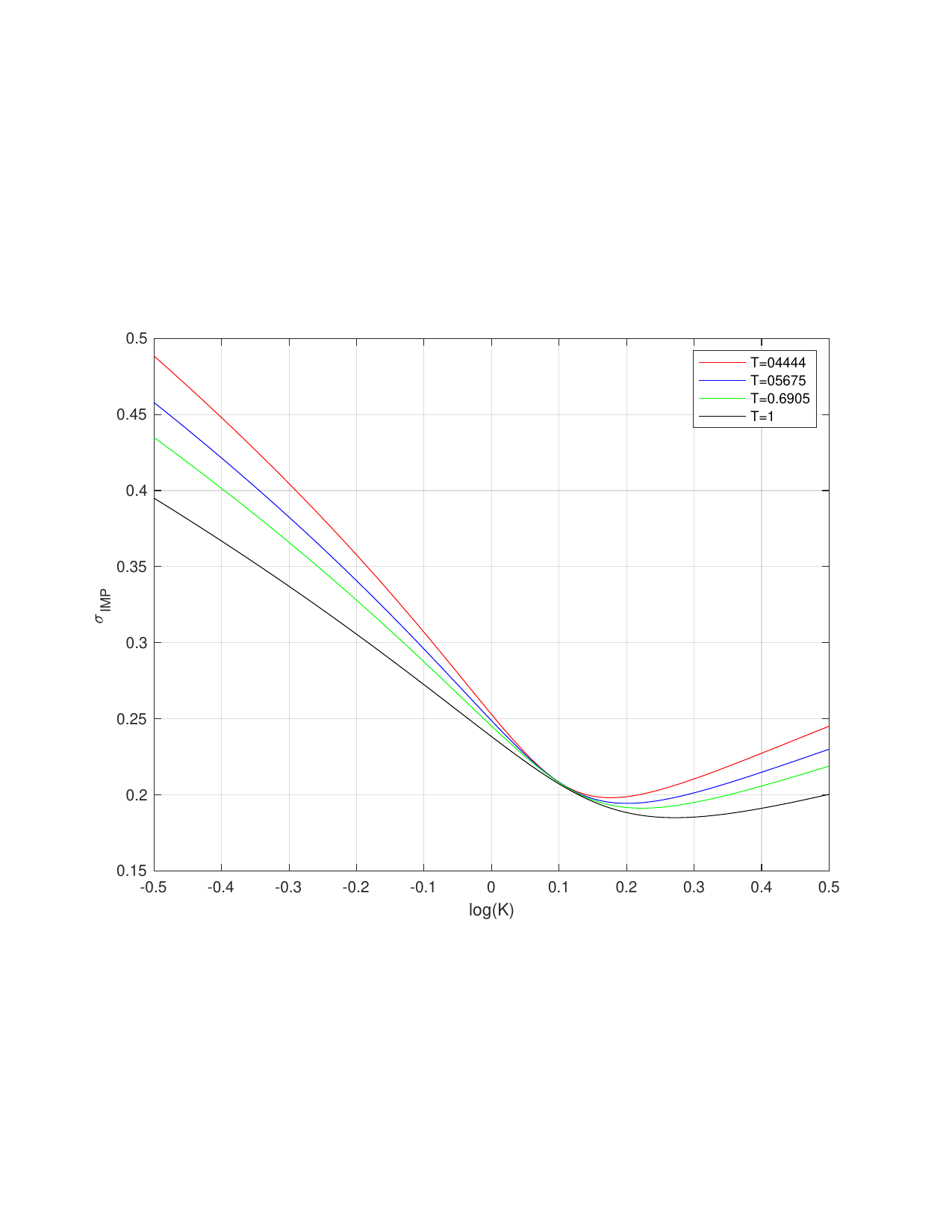}
    \caption{Implied volatility curves. The parameters  $\al=0.512,$
   $\ga=0.88,$
    $\rho=-0.7$,
    $\nu=0.96,$
    $\theta=0.016$,
    $v=0.148$, are the result of calibration to the real data in \cite[p.27]{Imperial2020}. The implied volatilities calculated using the Lewis and Adams methods and shown on Fig.~2.7 in \cite{Imperial2020} are somewhat different, in the tails especially, where the true difference between the empirical implied volatilities and the ones calculated in the rough Heston model with the calibrated parameters is significantly larger than shown on Fig.~2.7 in \cite{Imperial2020} differ from the correct curves shown above, in the tails especially. Note that on Fig.~2.7 in \cite{Imperial2020}, the range of log-strikes is asymmetric, and depends on maturity: $\ln K\in [-0.3, 0.35]$  for maturities $T=0.6905$ and $T=1$, and   $\ln K\in [-0.25, 0.35]$ for $T=0.4444$ and $T=0.5675$. A natural guess is that the results of calculations in the symmetric range $\ln K\in [-0.35, 0.35]$ are unsatisfactory. } \label{ImperialCurves}
\end{figure}

\begin{figure}
\begin{tabular}{cc}
\begin{subfigure}[h]{0.45\textwidth}

 \centering
    \includegraphics[width=0.9\textwidth,height=0.4\textheight]{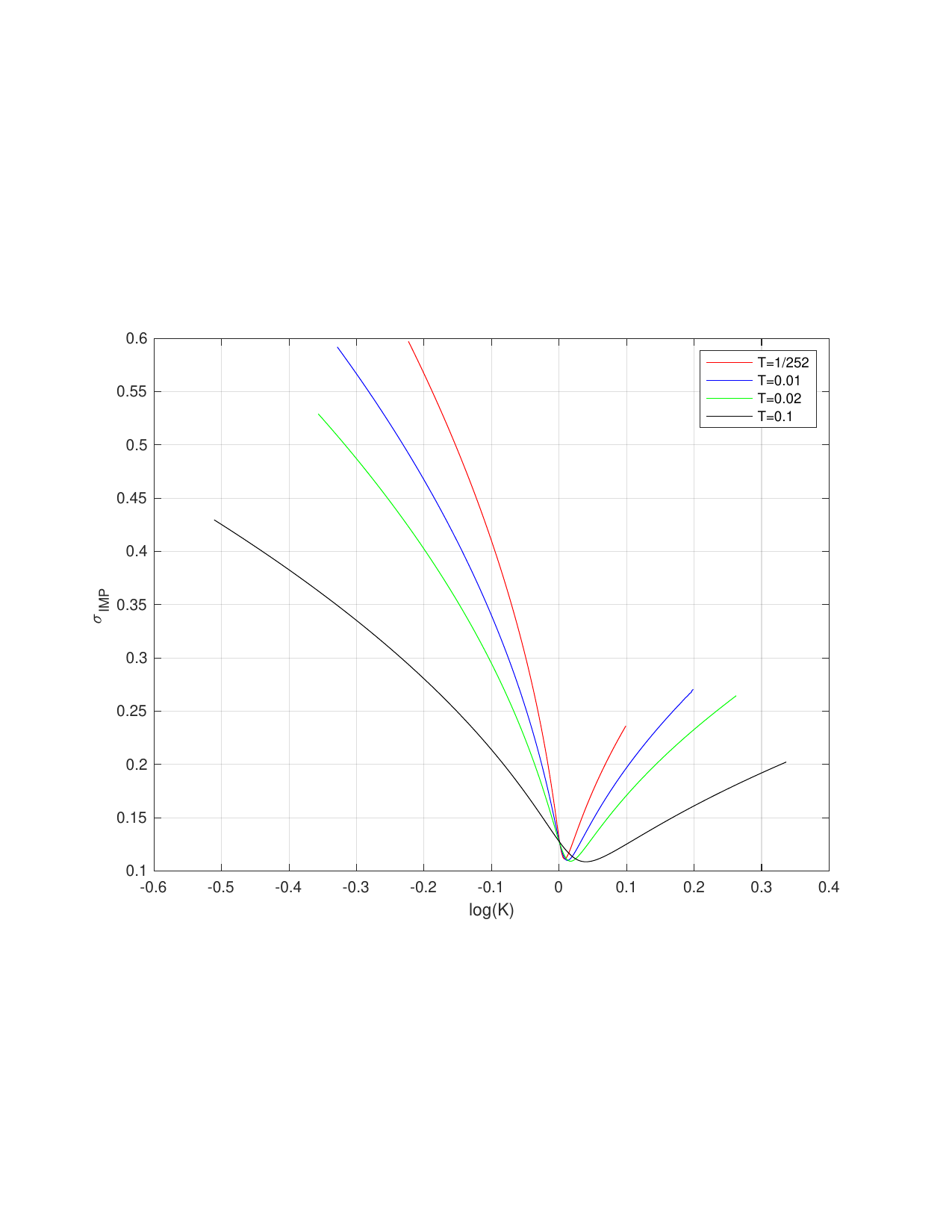}
    \caption{}\label{Set2Short}
\end{subfigure}
&
\begin{subfigure}[h]{0.45\textwidth}
\centering
    \includegraphics[width=0.9\textwidth,height=0.4\textheight]{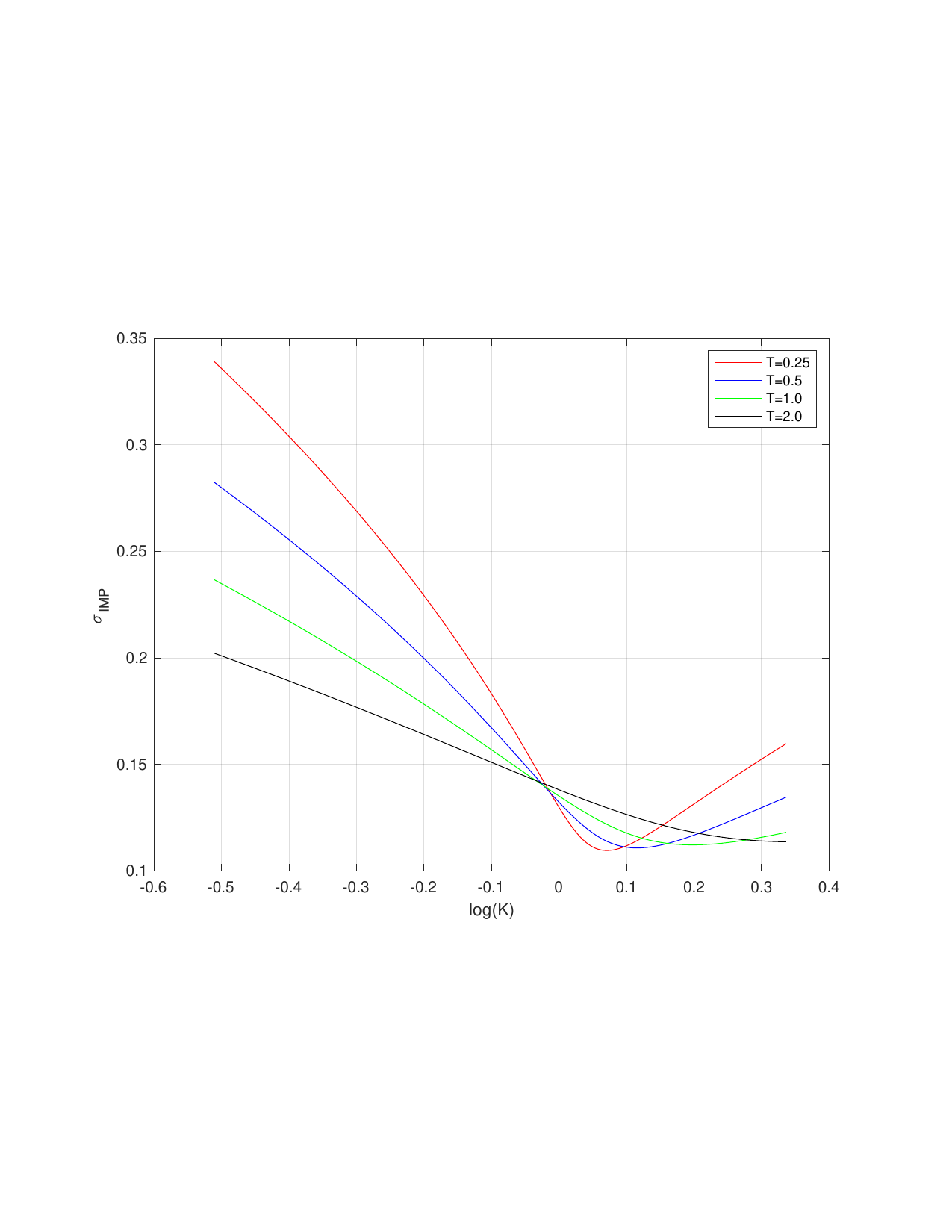}\caption{} \label{Set2CurvesLogKMod}
\end{subfigure}

\end{tabular}
\caption{Implied volatility curves in the rough Heston model  (Example in \cite[Sect. 6.2]{KamuranEmreErkan2020}); parameters $\al=0.6$, $\ga=2$, 	$\rho=-0.6$,
$\theta=	0.025$, $\nu=0.2$, $v_0=0.025$; $S_0=1$. 
 } 
\label{Set2Curves}
\end{figure}

\begin{figure}
\begin{tabular}{cc}

 \begin{subfigure}[h]{0.45\textwidth}

 \centering
    \includegraphics[width=0.8\textwidth,height=0.35\textheight]{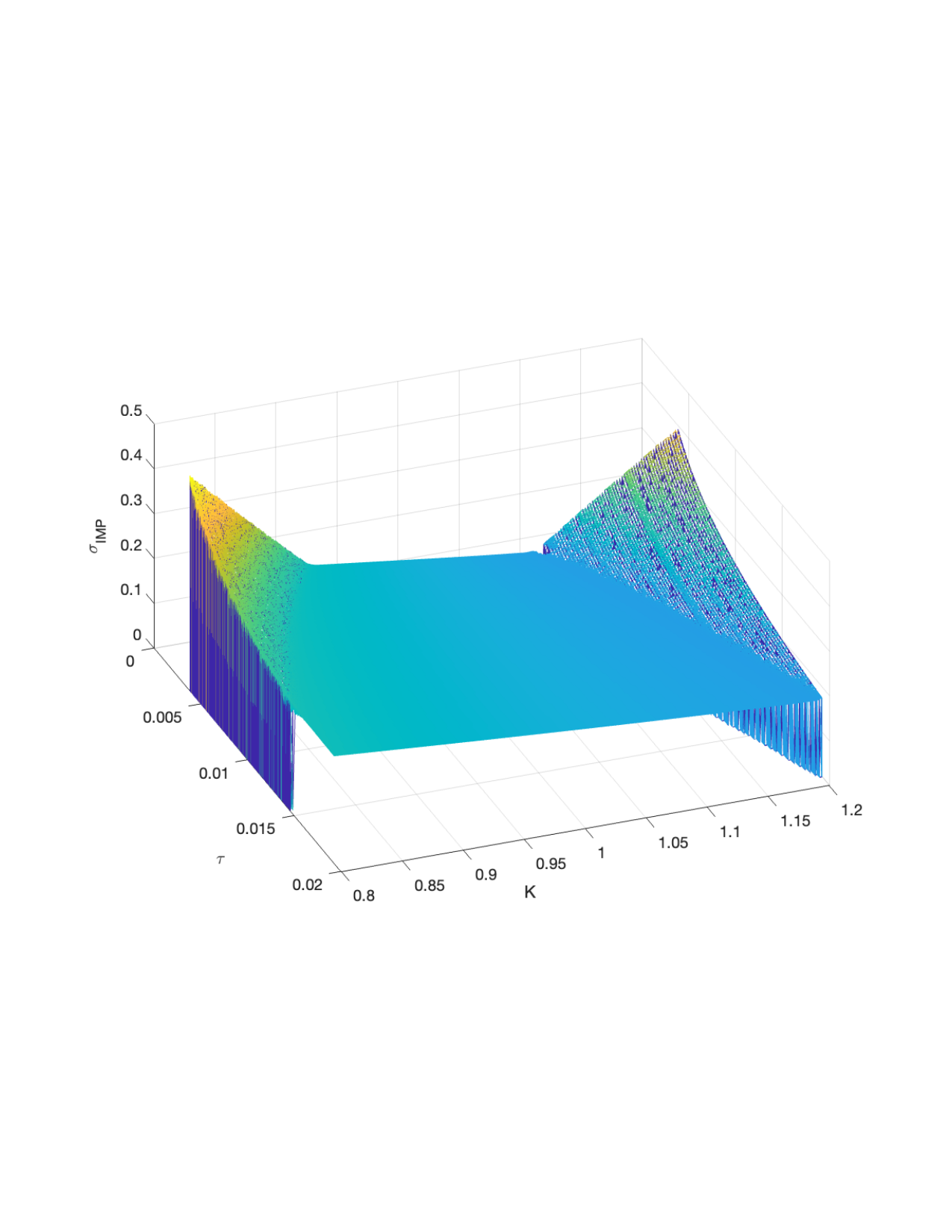}
    \caption{SINH}\label{Set1SurfaceB}
\end{subfigure}
&
\begin{subfigure}[h]{0.45\textwidth}
\centering
    \includegraphics[width=0.8\textwidth,height=0.35\textheight]{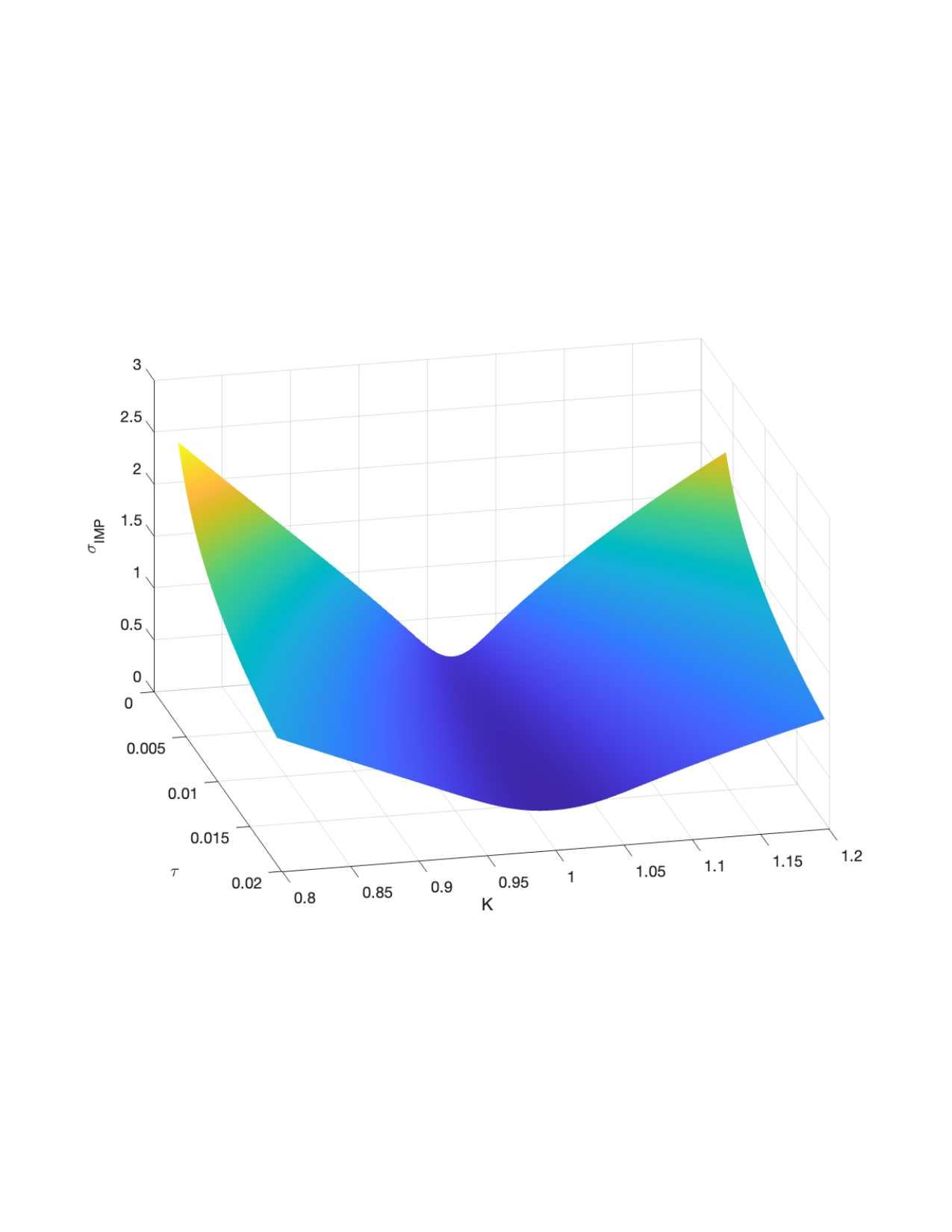}\caption{CM, $\om_1=-1.1$} \label{Set1CMsurfaceT152om1m11}
\end{subfigure}
\\ 
\begin{subfigure}[h]{0.45\textwidth}
\centering
    \includegraphics[width=0.8\textwidth,height=0.35\textheight]{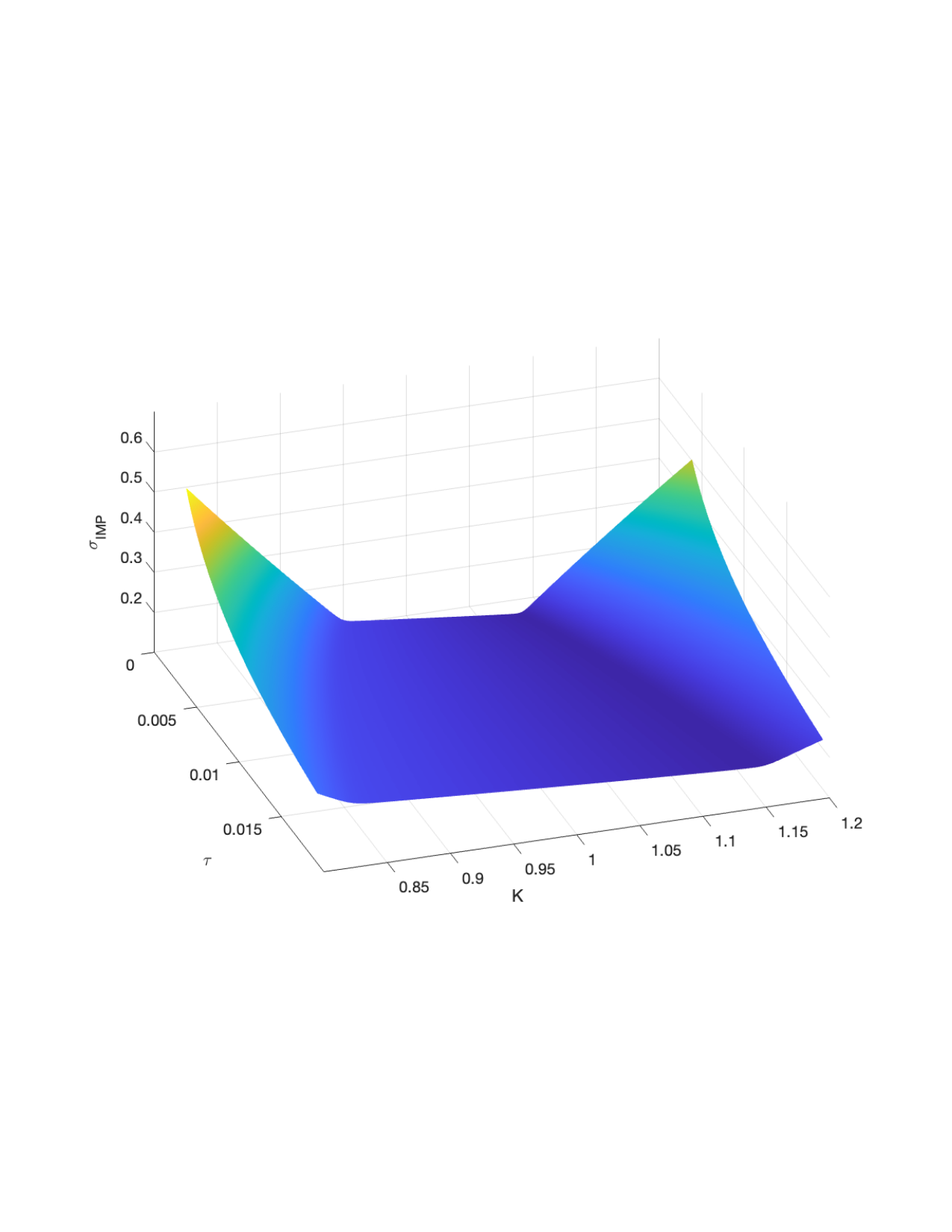} \caption{CM, $\om_1=-1.5$}\label{Set1CMsurfaceT152om1m15} \end{subfigure}
&
\begin{subfigure}{0.45\textwidth}
   \centering
    \includegraphics[width=0.8\textwidth,height=0.35\textheight]{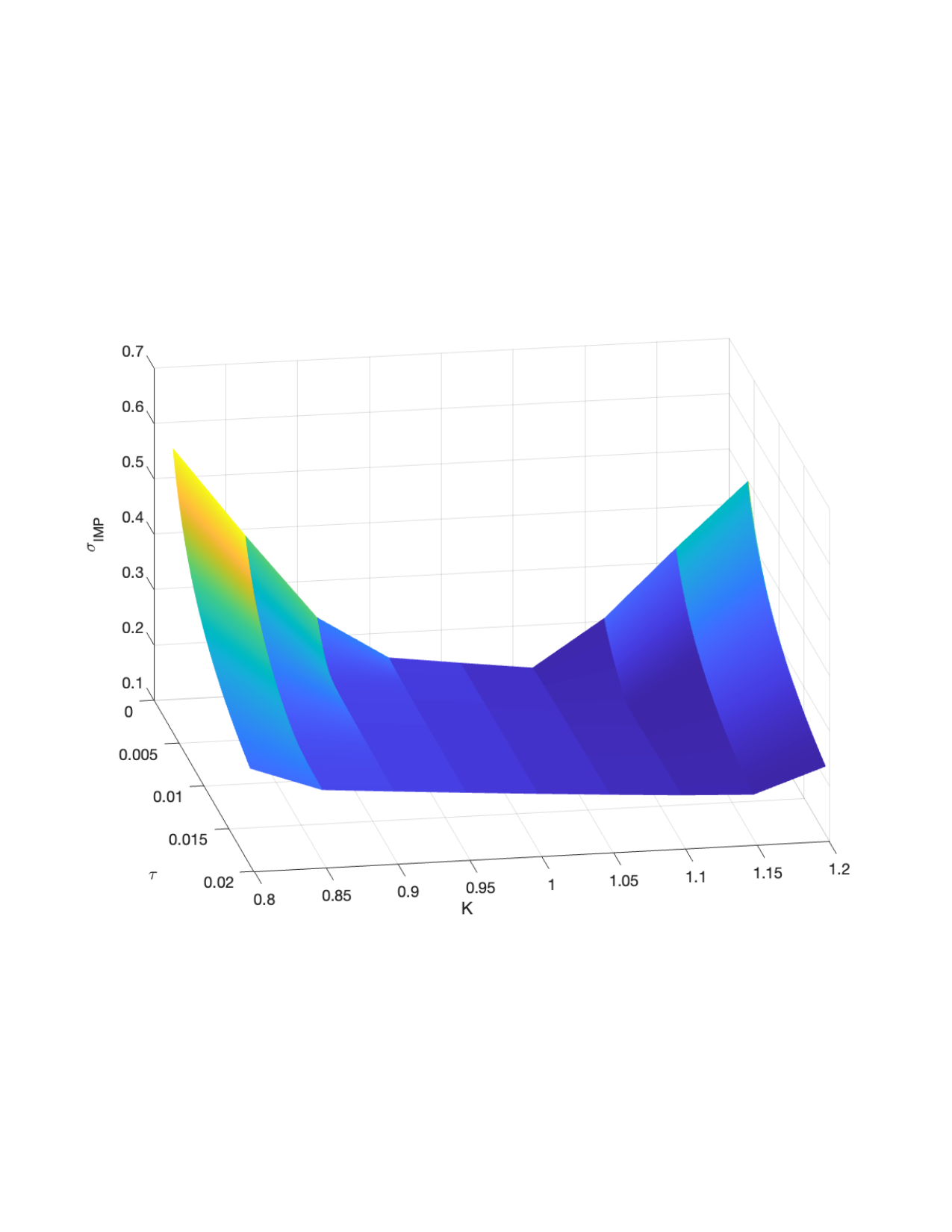}
    \caption{CM, $\om_1=-1.5$ }\label{Set1CMsurfaceT152om1m15R}
\end{subfigure}
\end{tabular}
\vskip-0.2cm
\caption{\small Implied volatility surfaces in the rough Heston model 
\cite[Example 5.1]{EuchRosenbaum2019}.
for time to maturity in the range (1 day, 1 week); spot $S_0=1$. If the price is outside the no-arbitrage bounds, $\sg_{IMP}$ is set to 0.
Panel (A): surface is calculated using the SINH-acceleration and 
 the modified Adams method, the parameters are as in Fig.~\ref{Set2ImpSurfacesT152}. 
 Irregular parts of the surface are where the OTM vanilla prices are smaller than E-10.
 Panels B-D: Flat iFT is used with $\ze=0.125, N=8,192$ and $\om_1=-1.1, -1.5, -1.5$, respectively, and the modified Adams method with $M=2000$. 
Irregular parts of the surface are where the OTM vanilla price is smaller than E-06. 
Panel (D) shows the effect of the interpolation: implied volatilities are
calculated at points of a sparse grid, in the result, the interpolated surface is higher than the one on Panel (C),
and the smiles are more regular. }
\label{Set1ImpVolsurfacesXiT152}
\end{figure}

\begin{figure}
\begin{tabular}{cc}
 \begin{subfigure}[h]{0.45\textwidth}

 \centering
    \includegraphics[width=0.9\textwidth,height=0.4\textheight]{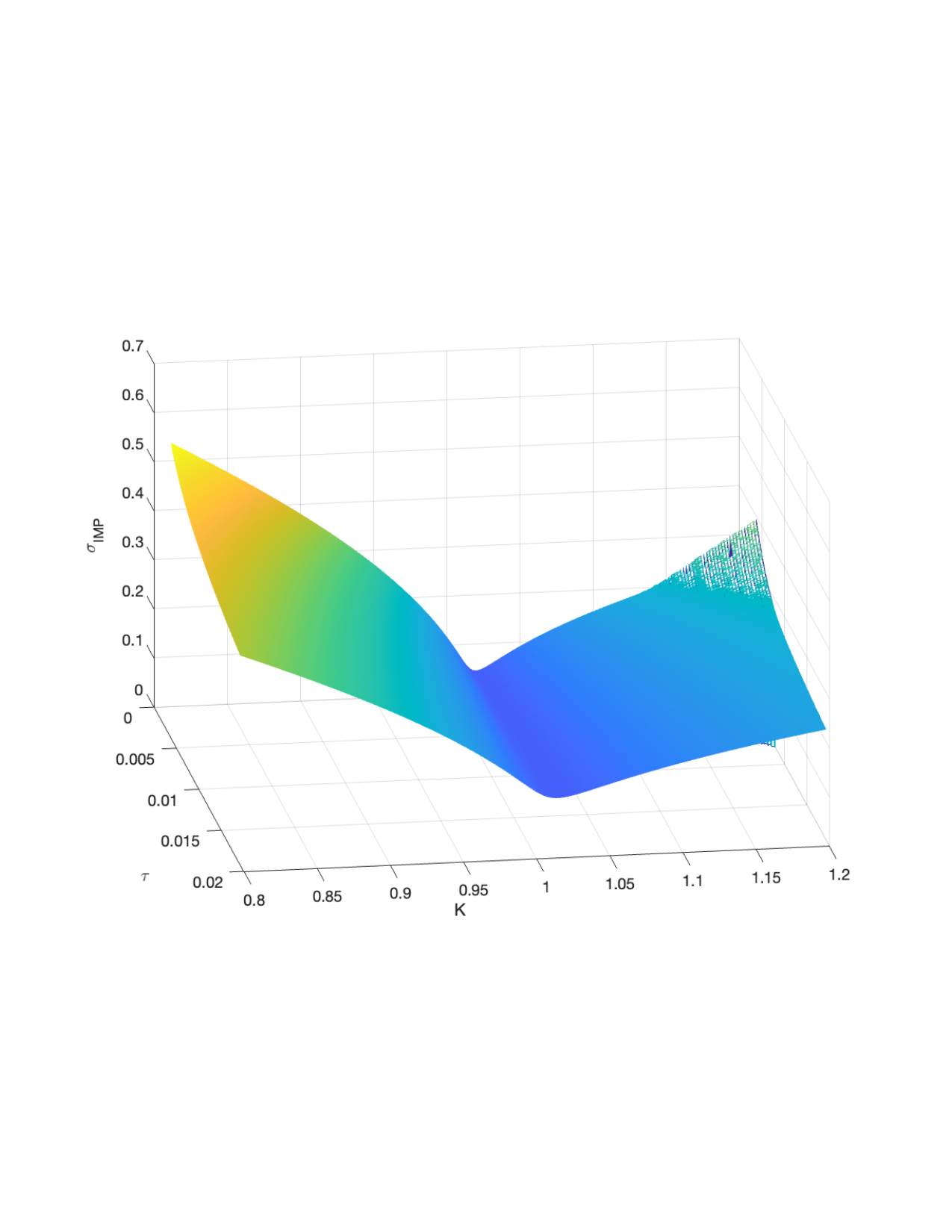}
    \caption{}\label{Set2SurfaceB}
\end{subfigure}
&
\begin{subfigure}[h]{0.45\textwidth}
\centering
    \includegraphics[width=0.9\textwidth,height=0.4\textheight]{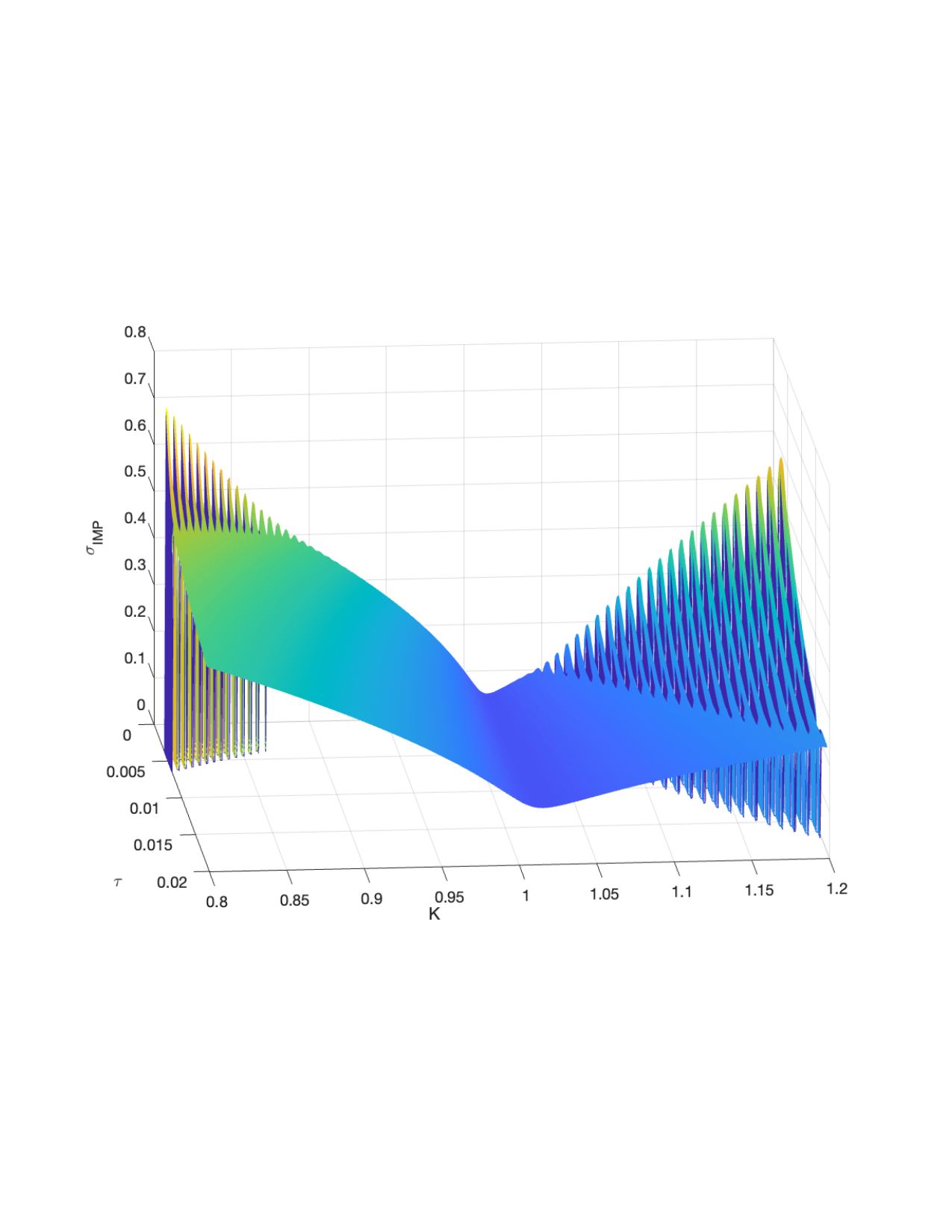}\caption{} \label{CMSet2T152om1m15}
\end{subfigure}
\end{tabular}
\caption{\small Volatility surface in the rough Heston model with parameters $\al=0.6,
    \ga=2,
    \theta=0.0225,
    \nu=0.2,
    \rho=-0.6,
    v=0.0225$, for time to maturity in the range (1 day, 1 week); spot $S_0=1$. If the price is outside the no-arbitrage bounds, $\sg_{IMP}$ is set to 0.
Panel (A): calculated using the sinh acceleration with different contour deformations for puts and calls
and simplified trapezoid rule with
 the same step $\ze=0.074956525$ and $N=94$.
 CPU time for evaluation
at all 401,1000 points is 1.67 sec., the average over 1000 runs (for the Heston model, the CPU time
is 29.1 msec.) The irregular part of the surface is where the OTM option price is smaller than E-10.
Panel (B): using Flat iFT with $\om_1=-1.5, \ze=0.125, N=8,192$. 
The irregular part of the surface is where the OTM option price is smaller than 5E-06.
In both cases, 
 the modified Adams method is used, with $M=1000$ and step $\De=1/52000$ (resp., $M=2000$, 
 and $M=2000$, $\De=1/104000$). Larger $M$ is needed to decrease the accumulated error of evaluation
 at 8,192 nodes. With $M=1000$, the irregular part is significantly larger.
 }
 
\label{Set2ImpSurfacesT152}
\end{figure}

\begin{table}
\caption{\small Dependence of implied volatilities (rounded) in the rough Heston model   on the numerical scheme.  
Example in \cite[Sect. 6.2]{KamuranEmreErkan2020}; parameters $\al=0.6$, $\ga=2$, $\rho=-0.6$,	
$\theta=	0.025$, $\nu=0.2$, $v_0=0.025$). Spot $S=1$, maturity $T=1/52$ years (1 week).}
{\tiny
\begin{tabular}{c|ccccccccc}
\hline\hline
$K$ & 0.8	& 0.85 &	0.9 &	0.95	& 1 &	1.05 &	1.1	& 1.15 &	1.2
\\
SINH & 0.4269 &	0.3686 &	0.3039&	0.2274 &	0.1280 &	0.1313 &	0.1687 &	0.2053 &	0.2053\\
iFT(0.25, 4096) & (*) & 0.3390 &	0.3009 &	0.2269 &	0.1280 &	0.1260 & (*) & (*) &(*) \\
FFT(0.25,4096) & (*) & (*) & 0.3000 &	0.2270 &	0.1279 &	0.1263 & (*) & (*) & (*) \\
iFT(0.125,9182) & 0.4273 &	0.3687 &	0.3039 &	0.2274 &	0.1280 &	0.1313 &	0.1687 &	0.2236 & (*)\\
FFT(0.125,9182) & (*) & 0.3539 &	0.3030 &	0.2274 &	0.1280 &	0.1315 &	0.1694 &	0.2175 &(*)\\
\end{tabular}
}
\begin{flushleft}{\tiny SINH - method of the present paper, $\om=0.2$ for puts, $\om=-0.2$ for calls.\\
iFT$(\ze,N)$: iFT with $\om_1=-0.5$ (Lewis-Lipton choice) and uniform grid, step $\ze$, $N$ terms.\\
FFT$(\ze,N)$: version of CM method based on FFT and interpolation, with $\om_1=-0.5$, step $\ze$, $N$ terms.\\
(*): price outside the no-arbitrage bounds.\\
\vskip-0.2cm
$\sg_{IMP}(1.2)$ in SINH-line is unreliable because the absolute value of the OTM option price is smaller than $10^{-12}$.
}
\end{flushleft}
\label{table: iFT-FFT}
\end{table}

\begin{table}
\caption{\small Prices and implied volatilities of OTM and ATM options in the rough Heston model with the parameters \eq{parEuRos}, and absolute and relative errors of prices and implied volatilities. Sinh-acceleration with $N=7$ is used.
Spot $S_0=1000$, $T=0.5$.
}
{\small 
\begin{tabular}{c|ccccccccc}
\hline\hline
$K$ & 800 & 850 & 900 & 950 & 1000 & 1050 & 1100 & 1150 &1200\\\hline
$V$ & 6.1236 & 	12.7531 &	23.9424 & 	40.7426 &	 63.6803 &	42.7574 &	 27.4488 &	16.7859 &	9.7636\\
$err_V$ & 0.012 &	-0.0054 & 	-0.0056&	0.12 &	0.18 &	0.21 &	0.20 &	0.11 &	0.0022\\
$rel.err_V$  & 0.0019 &	-0.0042 &	-0.0023 &	0.0026 &	 0.0029&	 0.0049 &	0.0072 &	 0.0064 &	0.00023
\\\hline
$\sg_{IMP}$ & 0.23466 &	0.23171 & 	0.22967 &	0.22786 &	0.22598 &	0.22406 &	0.22219 &	0.22007 &	0.21780\\
$err_{\sg} $ & 1.2E-04 &	-3.4E-04 &	-2.6E-05 &	4.1E-04 &	6.5E-04	& 7.5E-04 &	8.0E-04 &	5.3E-04 &	1.5 E-05\\
$rel.err_{\sg} $ &  5.0E-04 &	-0.0015 &	-1.1E-04 &	0.0018 & 0.0029 &	0.0034 &	 0.0036 &	0.0024 & 	6.6E-05
\\\hline
\end{tabular}
}
\begin{flushleft}{\tiny
Parameters of the sinh-acceleration for OTM and ATM puts: $\om_1=0.5$, $b=	0.7700$, $\om=	0$, $\ze=0.4822$, $N=7$.\\
Parameters of the sinh-acceleration for OTM calls: $\om_1=-1.5$, $b=	0.7700$, $\om=	0$, $\ze=0.4822$, $N=7$.\\
 Modification III of the fractional Adams method: number of iterations 2, $M=9$.
 \\
Total CPU time, average over 1000 runs: 0.58 msec.
 
}
\end{flushleft}
\label{table:rough T=0.5}
 \end{table}
 
 \begin{table}
\caption{\small Lewis and Flat-iFT methods. Prices of OTM and ATM options in the rough Heston model with the parameters \eq{parEuRos}, and absolute and relative errors of prices. Spot $S_0=1000$, $T=0.5$.
}
{\small 
\begin{tabular}{c|ccccccccc}
\hline\hline
$K$ & 800 & 850 & 900 & 950 & 1000 & 1050 & 1100 & 1150 &1200\\\hline
A & 6.2118 &	12.8047 & 	23.8779 & 	40.5500 & 	63.4207 &	42.4882 & 	27.2053 &	16.6537 &	9.7632\\
errors & 0.1000 &	-0.0025 &	-0.070 &	-0.086 &	-0.077 &	-0.062 &	-0.046 &	-0.026&	0.0018\\
  rel.errors & 0.016 &	-2.0E-04 &	-0.0029	& -0.0021 &	-0.0012 &	-0.0015	& -0.0017&	-0.0016 &	1.8E-04
\\\hline
B & 6.1135 &	12.8035 &	23.9368 &	40.6247 &	63.4921 &	42.5497 &	27.2501 &	16.6735&	9.7519\\
errors & 0.0018 &	-0.0037 &	-0.011 &	-0.012 &	-0.0054 &	-4.1E-04 &	-0.0014 &	-0.0063&	-0.0095 \\
 rel.errors & 2.8E-04 &	-2.9E-04 &	-4.6E-04 &	-2.8E-04 &	-8.6E-05 &	-9.6E-06	& -5.2E-05 &	-3.8E-04	&-9.7E-04 \\\hline
\end{tabular}
}
\begin{flushleft}{\tiny
A: Lewis method with change of variables $\xi=-0.5i+u/(1-u)$, 20-point Gauss-Legendre quadrature and Modification II of the fractional Adams method with 2 iterations and $M=9$. Total CPU time, average over 1000 runs: 1.9 msec.\\
B: Flat iFT-BM method. $\om_1=-0.5, \ze=2.7629$, $N=7$ and Modification II of the fractional Adams method with 2 iterations and $M=9$. Total CPU time, average over 1000 runs: 3.1 msec.

}
\end{flushleft}
\label{table:rough T=0.5, Lewis and iFT-BM}
 \end{table}
 
 \begin{table}
\caption{\small Long maturity case: $T=5$. SINH-acceleration, Lewis and Flat-iFT methods. Prices of OTM and ATM options in the rough Heston model with the parameters \eq{parEuRos}, and absolute and relative errors of prices. Spot $S_0=1000$.
}
{\small 
\begin{tabular}{c|ccccccccc}
\hline\hline
$K$ & 800 & 850 & 900 & 950 & 1000 & 1050 & 1100 & 1150 &1200
\\\hline
A & 141.1813 &	166.1588 &	192.9274 &	221.3824 &	251.4193 &	232.9352 &	215.8300 &	200.0077 &	185.3770\\
err. & -0.013 &	-0.019 &	-0.0230 &	-0.026 &	-0.028 &	-0.029 &	-0.029 &	-0.029 &	-0.028
\\
  rel.err. & -9.38E-05 &	-1.1E-04 &	-1.2E-04 &	-1.2E-04 &	-1.1E-04	&-1.2E-04 &	-14.E-04 &	-1.5E-04 &	-1.5E-04
\\\hline
B & 141.1035 &	166.0792 &	192.8534 &	221.3209 &	251.3769 &	232.91891 &	215.8475 &	200.0673 &	185.4877\\
err. & -0.089 & 	-0.096 &	-0.094 &	-0.085 &	-0.068 &	-0.043 &	-0.0093 &	0.033 &	0.085
 \\
 rel.err. & -6.3E-04 &	-5.8E-04 &	-4.9E-04&	-3.8E-04 &	-2.7E-04 &	-1.8E-04 &	-4.3E-05	&1.7E-04&	4.6E-04
   \\\hline
C & 141.5498 &	166.3816 &	192.9561 &	221.1782 &	250.9531 &	231.3579&	214.6252 &	199.1734 &	184.8998 \\
err. & 0.36 &	0.21 &	0.0083 &	-0.23	 & -0.49 &	-1.6 &	-1.2 &	-0.86 &	-0.50\\
  rel.err. & 0.0025&	0.0012 &	4.3E-05 &	-0.0010 &	-0.0020 &	-0.0069 &	-0.0057 &	-0.0043 &	-0.0027
   \\\hline

\end{tabular}
}
\begin{flushleft}{\tiny
A: Lewis method with  change of variables $\xi=-0.5i+u/(1-u)$, 15-point Gauss-Legendre quadrature and Modification II of the fractional Adams method with 2 iterations and $M=50$. Total CPU time, average over 1000 runs: 21.2 msec.\\
B: Flat iFT-BM method with $\om_1=-0.5, \ze=2.7629$, $N=4$ and Modification II of the fractional Adams method with 2 iterations and $M=15$. Total CPU time, average over 1000 runs: 1.12 msec.\\
C: SINH-acceleration with $\om_1=0.5,	b=0.7699, \om=0, \ze=0.3858, N=5$ for OTM and ATM puts 
and $\om_1=-1.5,	b=0.7699, \om=0, \ze=0.3858, N=5$ for OTM calls. Modification III of the fractional Adams method with 2 iterations and $M=10$ is used. Total CPU time, average over 1000 runs: 0.58 msec.

}
\end{flushleft}
\label{table:T=5}
 \end{table}

 \begin{table}
\caption{\small Prices and  implied volatilities of OTM and ATM options in the rough Heston model with the parameters \eq{parEuRos}, and absolute and relative errors of prices and implied volatilities.
Sinh-acceleration with $N=12$ is used.
Spot $S_0=1000$, $T=1/12$.
}
{\small 
\begin{tabular}{c|ccccccccc}
\hline\hline
$K$ & 800 & 850 & 900 & 950 & 1000 & 1050 & 1100 & 1150 &1200\\\hline

$V$ & 0.0200 &	 0.1140 &	1.1556 & 	6.7265 & 	23.901 &	6.8184 &	 1.2122 &	0.14036 &	0.01296 \\
$err_V$ & 0.015 &	0.0058 &	0.012 &	0.0029 &	0.012 &	0.0093 &	0.0070 &	0.015 &	0.0056\\
$rel.err_V$  & 2.77 & 	0.054 &0.010	 & 4.3E-04 &	5.1E-04 &	0.0014 &	0.0058 &	0.123 &	0.077
\\\hline
$\sg_{IMP}$ & 0.25248 & 	0.22387 &	0.21776 &	0.21235 &	0.20764 &	0.20309 &	0.19887 &	0.19719 &	0.19915\\
$err_{\sg} $ &0.024 &	0.0013 &	4.5E-04 & 	3.7E-04 &	1.1E-04 &	1.1E-04 &	2.3E-04 &	0.0026 &	0.0085\\
$rel.err_{\sg} $ &  0.11 &	0.059 &	0.0021 &	1.7E-04 & 	5.1E-04 &	5.5E-04 &	0.0012 &	0.014 &	0.044

\\\hline
\end{tabular}
}
\begin{flushleft}{\tiny
Prices and absolute and relative errors of prices and implied volatilities are in units of $10^{-3}$, rounded. 
\\
Parameters of the sinh-acceleration for OTM and ATM puts: $\om_1=0.5$, $b=	0.7700$, $\om=	0$, $\ze=0.3858$, $N=12$.\\
Parameters of the sinh-acceleration for OTM calls: $\om_1=-1.5$, $b=	0.7700$, $\om=	0$, $\ze=0.3858$, $N=12$.\\
 Modification III  of the fractional  Adams method: number of iterations 2, $M=20$.
 \\
Total CPU time, average over 1000 runs: 0.93 msec.
 
}
\end{flushleft}
\label{table:rough T=1/12}
 \end{table}
 
 \begin{table}
\caption{\small A refined parameter choice for maturity $T=1/12$ years. Prices and implied volatilities of OTM and ATM options in the rough Heston model with the parameters \eq{parEuRos}, and absolute and relative errors of prices and implied volatilities. Spot $S_0=1000$.
}
{\small 
\begin{tabular}{c|ccccccccc}
\hline\hline
$K$ & 0.80 & 0.85 & 0.90 & 0.95 & 1.00 & 1.05 & 1.10 & 1.15 &1.20\\\hline
$V$ & 0.00538 &	0.10826 &	1.14357 & 	6.72340 & 	23.8966 &	 6.80906 &
	1.20540 &	0.125 & 	0.00742 \\
$err_V$ & 7.5E-05 &	3.0E-05 &	-2.0E-04 &-5.9E-04 &-6.1E-04 &-1.9E-04 &	1.1E-04	&1.2E-04
&	1.0E-04\\
$rel.err_V$  & 0.014 &	2.7E-04 &	-1.8E-04 &	-8.8E-05	& -2.5E-05 &	-2.8E-05 &	9.0E-05
&	9.5E-04 &	0.014
\\\hline
$\sg_{IMP}$ & 0.22826 &	0.22257 &	0.21732 &	0.21231 &	0.20753&	0.20297 &	0.19865 &	0.19456 &	0.19088\\
$err_{\sg} $ & 2.3E-04 &	1.4E-05 &	-1.4E-06 &	-2.3E-06 &	-1.7E-06 &	-3.6E-07
&	3.5E-06 &	1.9E-05 &	2.0E-04\\
$rel.err_{\sg} $ & 0.0010 &	6.4E-05 &	-6.4E-06 &	-1.1E-05	& -8.2E-06 &	-1.8E-06
&	1.8E-05 &	1.0E-04 &	0.0010
\\\hline
\end{tabular}
}
\begin{flushleft}{\tiny
Prices and absolute and relative errors of prices and implied volatilities are in units of $10^{-3}$, rounded. 
\\
Parameters of the sinh-acceleration for OTM and ATM puts: $\om_1=0.5$, $b=	0.7700$, $\om=	0.2$, $\ze=0.2054$, $N=24$.\\
Parameters of the sinh-acceleration for OTM calls: $\om_1=-1.5$, $b=	0.7700$, $\om=	-0.2$, $\ze=0.2054$, $N=24$.\\
 Modification III  of the fractional  Adams method: number of iterations 2, $M=40$.
 \\
Total CPU time, average over 1000 runs: 2.7 msec.
 
}
\end{flushleft}
\label{table:rough T=1/12 refined}
 \end{table}
 
 \begin{table}
\caption{\small Errors of COS method. ``Call" prices (rounded) in the rough Heston model; parameters are $ \al=0.6,  \ga= 0.1,\ \theta = 0.3156,  \nu = 0.331, \rho=-0.681,  v = 0.0392$,
$r=0.3$, $S_0=100$, $T=1$.
 Errors shown are for OTM ``put" and ATM and OTM  ``call"  options.}
 {\small
\begin{tabular}{c|lll}
\hline\hline
$K$ & 80 & 100 & 120\\\hline
$BB$ & 16.21398847353 & 7.08261889490 & 2.638141762   \\\hline
$V_{fast}$&16.21390583 & 7.07778568 &   2.634570606 \\
$Err(V_{fast})$ &-8.0E-05 & -0.0048 &  -0.0036    \\
$RelErr(V_{fast})$ &-5.7E-05  & -6.8E-04  & -0.0014  \\\hline
$V_{COS}$ & 16.1349 & 6.6198  & 2.0529 \\
$Err_{COS}$ & -7.9E-03 & -0.46  & -0.59 \\
$ RelErr_{COS}$ &-5.7E-03 & -6.5E-03& -0.22 \\\hline
\end{tabular}
}
\begin{flushleft}{\tiny $BB$: sinh-acceleration and modified Adams method of the present paper; $M=31,623$,  
$\om_1=-0.5$, $b=0.7699$, $\om=0$, $\ze=0.11481$, $N=62$, $M=31,623$. For $T=1$ and $K=80,100,120$, absolute errors are smaller than $10^{-10}$. \\
$V_{COS}$: prices (Table 1 on \cite[p.52]{KamuranEmreErkan2020}) calculated using COS method with $N=160$ and the fractional Adams method with 2000 terms;
CPU time in the range 1-1.1 sec. for each pair.
\\
$V_{fast}$,
parameters of the sinh-acceleration for OTM and ATM puts: $\om_1=0.5$, $b=	0.7700$, $\om=	0$, $\ze=0.4822$, $N=7$.\\
$V_{fast}$, parameters of the sinh-acceleration for OTM calls: $\om_1=-1.5$, $b=	0.7700$, $\om=0$, $\ze=0.4822$, $N=7$.\\
 Modification III  of the fractional  Adams method: number of iterations 2, $M=7$.
 \\
Total CPU time, average over 1000 runs: 0.74 msec.

}
\end{flushleft}
\label{table: COS1_K}
\end{table}

\begin{table}
\caption{\small Moderate maturities, spot $S_0=1$. Relative errors (rounded) of calculations of OTM and ATM puts $(K\le 1$) and OTM calls $(K>1$)  in the rough Heston model with the parameters \eq{parEuRos} and CPU time (in msec., the average over 1000 runs)  for several numerical schemes.  
 }
{\tiny
\begin{tabular}{c|rrrrrrrrr|r}
\hline\hline
$T=2$ & & & & & & & & &  & Time\\\hline
$K$ & 0.80 & 0.85 & 0.90& 0.95 & 1.00 & 1.05 & 1.10 & 1.15 & 1.20 &\\\hline
SINH & -1.5E-05 &	-1.2E-05 &	-9.6E-06 &	-8.0E-06 &	-6.7E-06 &
-7.9E-06 &	-9.3E-06 &	-1.1E-05 &	-1.3E-05 & 169.6\\
$V_H$ & 3.9E-06 &	-2.6E-06	& 1.4E-06	& 3.8E-06	& 1.3E-06 & 1.1E-06	
&-1.8E-06	& 1.6E-06 &	-3.1E-06 &\\
Flat iFT-BM & -7.2E-06 &	-8.6E-06 &	-8.6E-06 &	-7.3E-06 &	-5.5E-06
& -5.2E-06 &	-4.46E-06	&-3.6E-06	 &-3.2E-06 & 97.7\\
Flat iFT & 2.7E-07 &	-1.6E-06 &	-2.4E-06	& -2.2E-06 &	-1.5E-06
& -1.0E-06 &	-4.9E-08 &	1.0E-06 &	1.8E-06 & 661.5\\
Lewis 30 & 4.4E-06	& -1.9E-07 &	-1.4E-06 &	-1.3E-06 &	-8.3E-07
& -5.7E-07 &	-2.4E-07 &	8.5E-08 &	3.7E-07 & 400.8
\\\hline
\end{tabular}
}

\begin{flushleft}{\tiny 
SINH: $\om_1=-0.5$, $b=0.769884522$, $\om=0$, 	$\ze=0.285754315$, $N=12$, Modification II with $M=317$.
\\
$V_H$: hybrid method of \cite{RoughNotTough} \\
Flat iFT-BM: $\sg_0=1$, $\om_1=-0.5$, 	$\ze=0.717626524$, $N=	16$, Modification II with $M=317$.
 \\
Flat FT: $\om_1=-0.5	$,	$\ze=0.109637386$, $N=110$ , Modification II with $M=317$.
\\
Lewis 30: Lewis method and Gauss-Legendre quadrature with 30 terms, Modification II with $M=317$.
\\
CPU time is for the evaluation of $\Phi(\xi_k, \tau_m)$, for $k=0,1,\ldots, N$, $m=1,\ldots, 317$.\\
Flat iFT-BM  is used with the parallelization w.r.t. $\xi$.\\
SINH, Flat iFT and Lewis method are used without the parallelization w.r.t. $\xi$. \\
For the Lewis method, the nodes and weights are precalculated.\\
\vskip-0.2cm
For $V_H$, the CPU time is in the range 593-667 msec. per strike.  
}
\end{flushleft}

{\tiny
\begin{tabular}{c|rrrrrrrrr|r}
\hline\hline
$T=1$ & & & & & & & & & & Time \\\hline
SINH & -2.1E-05 &	-1.5E-05 &	-1.1E-05 &	-8.0E-06 &	-6.1E-06	&
-7.8E-06 &	-1.0E-05 &	-1.3E-05 &	-1.7E-05 & 295.6\\
$V_H$ & -1.8E-05 &	9.9E-06 &	-4.5E-06 &	7.2E-06 &	-1.9E-06 &
-6.7E-07 &	-1.1E-05 &	-1.0E-05	& -1.9E-05 &\\

Flat iFT-BM & 9.3E-06 &	3.5E-06 &	-2.1E-06 &	-3.3E-06 &	-1.6E-06 &
7.3E-07 &	4.1E-06 &	6.0E-06 &	3.6E-06 & 102.9\\

Flat iFT & 1.1E-05 &	3.6E-06 &	-3.0E-06	& -4.0E-06 &	-1.8E-06 &
1.1E-06 &	5.1E-06 &	7.2E-06 &	4.14E-06 & 980.8\\

Lewis 30 & 1.2E-04 &	3.4E-05 &	-1.706 &	-7.7E-06 & -5.1E-06 & 
-3.4E-06 &	-7.4E-07 &	2.3E-06 &	5.8E-06 & 144.1
\\\hline
\end{tabular}
}
\begin{flushleft}{\tiny 
SINH: $\om_1=-0.5$, $b=0.769884522$, $\om=0$, 	$\ze=0.285754315$, $N=14$, Modification II with $M=399$.
\\
$V_H$: hybrid method of \cite{RoughNotTough} \\
Flat iFT-BM: $\sg_0=0.5$, $\om_1=-0.5$, 	$\ze=0.717626524$, $N=	22$, Modification II with $M=317$. \\
Flat FT: $\om_1=-0.5	$,	$\ze=0.0877$, $N=200$, Modification II with $M=317$.\\
Lewis 30: Lewis method and Gauss-Legendre quadrature with 30 terms, Modification II with $M=317$.\\
For $V_H$, the CPU time is in the range 548-582 msec. per strike. 
 
}
\end{flushleft}

{\tiny
\begin{tabular}{c|rrrrrrrrr|r}
\hline\hline
$T=0.5$ & & & & & & & & & & Time \\\hline
SINH & 4.4E-05 &	5.1E-05 &	-7.9E-06	& -1.9E-05 &	-2.7E-06 &
-3.5E-06 &	-37E-06	& -5.1E-06 &	-8.6E-06 & 329.6\\

$V_H$ & 3.3E-05 &	-2.0E-05 &	2.8E-06 &	-5.2E-06 &	-8.6E-06 &
-1.7E-06 &	-1.9E-05 &	7.7E-06 &	-4.3E-05 & \\

Flat iFT-BM & 4.4E-05 &	5.1E-05 &	-7.9E-06 &	-1.9E-05 &	-2.7E-06 &
1.8E-05 &	2.6E-05 &	-1.2E-05 &	-9.4E-05 & 107.3\\

Flat iFT & -1.4E-05 &	1.0E-05 &	2.1E-06 &	-4.4E-06 &	-2.1E-06 &
2.5E-06 &	6.2E-06 &	-1.73E-06 &	-2.2E-05 & 1,192.3\\

Lewis 35 & 7.7E-04 &	1.2E-04 &	-3.2E-05 &	-1.9E-05 &	-2.9E-06 &
1.5E-06 &	2.2E-06 &	5.8E-06 &	2.9E-05 & 465.4
\\\hline
\end{tabular}
}
\begin{flushleft}{\tiny 
SINH: $\om_1=-0.5$, $b=0.769884522$, $\om=0$, 	$\ze=0.1836992027$, $N=23$, Modification II with $M=317$.
\\
$V_H$: hybrid method of \cite{RoughNotTough} \\
Flat iFT-BM: $\sg_0=0.5$, $\om_1=-0.5$, 	$\ze=0.789389176$, $N=	30$, Modification II with $M=317$. \\

Flat FT: $\om_1=-0.5	$,	$\ze=0.0877$, $N=200$, Modification II with $M=317$.\\

Lewis 35: Lewis method and Gauss-Legendre quadrature with 35 terms, Modification II with $M=317$. \\

For $V_H$, the CPU time is in the range 666-689 msec. per strike. 

}
\end{flushleft}

\label{table:rel_errors_moderate}

 \end{table}
 
  \begin{table}
\caption{\small Short maturities,  spot $S_0=1$. Relative errors (rounded) of calculations of OTM and ATM puts $(K\le 1$) and OTM calls $(K>1$)  in the rough Heston model with the parameters \eq{parEuRos} for several numerical scheme and CPU time (in msec., the average over 1000 runs). 
  }

{\tiny
\begin{tabular}{c|rrrrrrrrr|r}
\hline\hline
$T=1/12$ & & & & & & & & & & Time \\\hline
$K$ & 0.80 & 0.85 & 0.90& 0.95 & 1.00 & 1.05 & 1.10 & 1.15 & 1.20 & \\\hline
SINH & 1.4E-05 &	-2.1E-04 &	-1.2E-05	& -2.9E-07 &	-1.8E-06 &
-5.8E-07	& 5.8E-06	& -1.3E-04 & 2.9E-03 & 415.7\\

$V_H$ & -0.057 &	-0.0018 &	3.5E-04	& -8.7E-05 &	-3.3E-05 &
-3.3E-05 &	-1.35E-05	& -2.4E-04&	7.8E-03 & \\

Flat iFT-BM & -0.092 &	1.5E-03 &	1.2E-04 &	-5.0E-05 &	1.6E-05 &
-3.4E-05 &	-1.1E-04 &	5.2E-04	& -0.15 & 133.3\\

Flat iFT & -5.6 &	0.13 &	0.017 &	-0.0083& 0.0024 &
-9.4E-04 &	-0.047 &	0.42 &	2.7 & 2,341.2\\

Lewis 80 & 0.045 &	-0.0073 &	1.7E-04 &	-5.8E-06 & 6.9E-08 &	-3.8E-07	& 
4.0E-06 &	-3.3E-05 &	-0.013 & 1,062.3
\\\hline
\end{tabular}
}
\begin{flushleft}{\tiny 
SINH: $\om_1=-0.5$, $b=0.769884522$, $\om=0$, 	$\ze=0.1836992027$, $N=28$, Modification II with $M=317$.
\\
$V_H$: hybrid method of \cite{RoughNotTough} \\
Flat iFT-BM: $\sg_0=0.5$, $\om_1=-0.5$, 	$\ze=0.717626524$, $N=	80$, Modification II with $M=317$. \\

Flat FT: $\om_1=-0.5	$,	$\ze=0.0877$, $N=450$, Modification II with $M=317$.\\
Lewis 80: Lewis method and Gauss-Legendre quadrature with 80 terms, Modification II with $M=317$. \\
For $V_H$, the CPU time is in the range 410-423 msec. per strike.

}
\end{flushleft}

{\tiny
\begin{tabular}{c|rrrrrrr|r}
\hline\hline
$T=1/52$ & & & & & & & &  Time \\\hline
$K$ &  0.85 & 0.90& 0.95 & 1.00 & 1.05 & 1.10 & 1.15 &\\\hline
SINH & -0.42 &	1.5E-03 &	-1.6E-05 &	-6.6E-06 &
	-2.3E-04	& -0.043 & -205 & 154.8 \\
$V_H$ & 	(**) &	0.013 & 0.085 &	0.016 &
0.096 &	0.32 &	0.82&	 \\
Flat iFT-BM & 
26.5 &	2.8E-03 &	-1.1E-04 &	-9.4E-07 &
1.3E-04 & 	0.075 &	1,030& 339.3\\
Flat iFT & 1,167 &	0.71 &	3.9E-04 &	1.5E-04 &
	1.7E-03 &	-1.7 & -49,413 & 1,664.2\\
Lewis 100 & 25,177 &	1.2 &	 3.5E-04 &	4.3E-07 &
6.3E-05 &	0.60 &	-119,127 & 187.8
\\\hline
\end{tabular}
}
\begin{flushleft}{\tiny 
At $K=0.8$ and $K=1.2$, the prices of OTM options are smaller than $10^{-12}$, and the benchmark prices
cannot be calculated using double precision arithmetic.\\

SINH, puts: $\om_1=0.325762041$, $b=1.014615984$, $\om=0.2$, 	$\ze=0.145086905$, $N=38$, Modification II with $M=100$\\
SINH, calls: $\om_1=-1.325762041$, $b=1.014615984$, $\om=-0.2$, 	$\ze=0.145086905$, $N=38$, Modification II with $M=100$.
\\
$V_H$: hybrid method of \cite{RoughNotTough}; (**): the call price in \cite{RoughNotTough} implies that the price of the put is 0. \\
Flat iFT-BM: $\sg_0=0.5$, $\om_1=-0.5$, 	$\ze=0.717626524$, $N=	200$, Modification II with $M=100$. \\

Flat FT: $\om_1=-0.5	$,	$\ze=0.07309159$, $N=1500$, modification II with $M=100$.\\
Lewis 100: Lewis method and Gauss-Legendre quadrature with 100 terms, Modification II with $M=100$. \\
The order of the errors of Flat iFT-BM, Flat FT and Lewis 100 does decrease if $N$ increases further. \\
\vskip-0.2cm
For $V_H$, the CPU time is in the range 125-164 msec. per strike. 
}
\end{flushleft}

{\tiny
\begin{tabular}{c|rrr|r}
\hline\hline
$T=1/252$  & & & &   Time \\\hline
$K$ &   0.95 & 1.00 & 1.05  &\\\hline
SINH & 	-2.7E-03 &	4.7E-07 & 9.0E-03 & 212.1\\
$V_H$ & 	11.2 &	1.7E-04 &
18.3 &		 \\
Flat iFT-BM & -0.51 &	1.E-04 & 18.8 & 557.8\\
Flat iFT & -17.8 &	3.1E-03 & -370 & 1,664.2\\
Lewis 100 & 6.3&	-2.2E-05 & 270 & 190.1 
\\\hline
\end{tabular}
}
\begin{flushleft}{\tiny 
At $K=0.80, 0.85, 0.90$ and $K=1.10, 1.15, 1.20$, the prices of OTM options are smaller than $10^{-12}$, and the benchmark prices
cannot be calculated accurately using double precision arithmetic.\\

SINH, puts: $\om_1=0.325762041$, $b=1.014615984$, $\om=0.2$, 	$\ze=0.145086905$, $N=46$, Modification II with $M=100$.\\
SINH, calls: $\om_1=-1.325762041$, $b=1.014615984$, $\om=-0.2$, 	$\ze=0.145086905$, $N=46$, Modification II with $M=100$.
\\
$V_H$: hybrid method of \cite{RoughNotTough}. \\
Flat iFT-BM: $\sg_0=0.5$, $\om_1=-0.5$, 	$\ze=0.717626524$, $N=	350$, Modification II with $M=100$. \\

Flat FT: $\om_1=-0.5	$,	$\ze=0.07309159$, $N=1500$, Modification II with $M=100$.\\
Lewis 100: Lewis method and Gauss-Legendre quadrature with $N=100$ terms, Modification II with $M=100$. \\
The order of the errors of Flat iFT-BM, Flat FT and Lewis  does decrease if $N$ increases further.\\
\vskip-0.2cm
For $V_H$, the CPU time is in the range 154-196 msec. per strike.  
}
\end{flushleft}

\label{table:rel_errors_short}
 \end{table}
 
  \begin{table}
\caption{\small Implied volatilities for options of short maturities in Table~\ref{table:rel_errors_short}.}

{\tiny
\begin{tabular}{c|rrrrrrrrr}
\hline\hline
$T=1/12$ & & & & & & & & & \\\hline
$K$ &  0.80 & 0.85 & 0.90& 0.95 & 1.00 & 1.05 & 1.10 & 1.15 & 1.20\\\hline
BB & 0.2280 &	0.2226 &	0.2173 &	0.2123 &	0.2075 &	0.2030 &	0.1986 &	0.1945 &	0.1907\\
SINH & 0.2280 &	0.2225 &	0.2173 &	0.2123 &	0.2075 &	0.2030 &	0.1986 &	0.1945 &	0.1907  \\
$V_H$ & 	0.2271 &	0.2225 &	0.2173 &	0.2123&	0.2075 &	0.2030 &	0.1986 &	0.1944 &	0.1907	 \\
Flat iFT-BM & 
0.2265 &	0.2226 &	0.2173 &	0.2123 &	0.2075 &	0.2030 &	0.1986 &	0.1947 &	0.1884\\
Flat iFT & (**) &	0.2257 &	0.2181 &	0.2116 &	0.2080 &	0.2030 &	0.1968 &	0.2029 &	0.2116 \\
Lewis 100 & 0.2243 &	0.2226 &	0.2173 &	0.2123 &	0.2075 &	0.2030 &	0.1986 &	0.1945 &	0.1911

\\\hline
\end{tabular}

{\tiny
\begin{tabular}{c|rrrrrrrrr}
\hline\hline
$T=1/52$ & & & & & & &  & & \\\hline
$K$ & 0.8 & 0.85 & 0.90& 0.95 & 1.00 & 1.05 & 1.10 & 1.15  & 1.20\\\hline
BB &  0.2383 &	0.2288& 	0.2195&	0.2105 &0.2018
&0.1935 &	0.1857 & 0.1786 & 0.1737\\
SINH &  
0.2450 &	0.2288 &	0.2195 &	0.2105 &	0.2018&	0.1935 &	0.1857 &	0.1786&	0.1703\\
$V_H$ & (**) & 	(**) &	0.2197 &	0.2138 &	0.2051 &	0.1968 &	0.1889 &	0.1818 & 0.1843 \\
Flat iFT-BM & (*) & 0.2600 &	0.2196 &	0.2105 &	0.2018 &	0.1935 &	0.1866 &	0.2291 & 0.2929\\
Flat iFT& 0.4029 & 0.3147 &	0.2280 &	0.2107 &	0.2019 &	0.1936&(*) &	(*)  & 0.3071\\
Lewis 100& 0.5883 & 0.3928&	0.2321 &	0.2106 &	0.2018&0.1935 &	0.1913 &	(*)  & (*)
\\\hline
\end{tabular}
}

{\tiny
\begin{tabular}{c|rrr}
\hline\hline
$T=1/252$  & & &     \\\hline
$K$ &   0.95 & 1.00 & 1.05  \\\hline
BB & 0.2154 &	0.1994 &	0.1841\\
SINH & 	0.2154 &	0.1994 &	0.1841 \\
$V_H$ & 	0.2552 &	0.1994 &	0.2174 \\
Flat iFT-BM & 0.2068 &	0.1994 &	0.2178 \\
Flat iFT & (*)	&0.2000 &	(*) \\
Lewis 100 & 0.2456 &	0.1994 &	0.2661  
\\\hline
\end{tabular}
}

\begin{flushleft}{\tiny BB: benchmark.\\
(*): the price outside the no-arbitrage bounds.\\
(**): the put price is smaller than $10^{-12}$.\\
}
\end{flushleft}
}
\label{table:implvol_short}

\end{table}

 \begin{table}
\caption{\small Call options in  the rough Heston model with the parameters \eq{parEuRos}, and the parameters and errors
of sinh-acceleration and hybrid methods.  Spot $S_0=1$, $T=2$.
}
{\small 
\begin{tabular}{c|c|cr|cr}
\hline\hline
$K$ & BB &    $V_{fast}$ & Err & $V_H$ & Err \\
0.8    &     
0.254300800136
     & 0.254299995 & -8.0E-07 & 
0.254301 &  2.0E-07 \\	   
 0.85    &  
 0.222091202277
    & 0.222090343& -8.6E-07 & 
 0.222091 &  -2.0E-07\\
 0.9     &   0.192897881619
      &   0.192896973 & -9.1E-07 &
0.192898 & 1.2E-07 \\
 0.95    &  
 0.166675570337
       &   0.166674623 & -9.5E-07 & 
0.166676 &  4.3E-07 \\
 1.00  &    
 0.143318830614
    &            0.143317856 & -9.7E-07&
0.143319 &  1.7E-08\\
 1.05 &     
 0.122675884585
  &                0.122674898 & -9.9E-07 &
0.122676 &  1.2E-07\\
1.10 &       
0.104562201605
 &                 0.104561215 &  -9.9E-07 &
 0.104562 & -2.0E-07\\
 1.15 &     
 0.088772875331
  &                0.088771901 &  -9.7E-07 &
0.088773 &  1.2E-07 \\
 1.20  &     
 0.075093239373
 &               0.075092286 & -9.6E-07 &
0.075093 & -2.4E-07\\
\\\hline
\end{tabular}
}
\begin{flushleft}{\tiny
BB: sinh-acceleration (parameters: $\om_1=-0.5, b=0.7699, \om=0, \ze=0.1006, N=66$), and Modification II of the Adams method with $M=20000$, CPU time 191 sec, 1 run. The differences of prices with the ones obtained with sinh-acceleration (puts and calls are calculated separately, $\om=\pm 0.2$, $\om_1>0$ and $\om_1<0$, respectively, $\ze=0.8901$, $N=67$)  and Modification II of Adams method with $M=20000$ are less than E-12.
\\
$V_{fast}$: sinh-acceleration, parameters $\om_1=-0.5$, $b=0.769884522$, $\om=0$,  $\ze=0.2858$, $N=12$, and Modification II of Adams method with $M=317$, CPU time 69.6 msec, average over 1000 runs.\\
$V_H$: hybrid method of \cite{RoughNotTough} (prices are rescaled). CM method is used with $\om_1=-2.1$;
$\ze$ and $N$ are not specified. \\
\vskip-0.2cm The CPU time is in the range 593-667 msec. per strike. 
}
\end{flushleft}

\label{table:T=2}
 \end{table}

 \begin{table}
\caption{\small Prices of call options  in the rough Heston model with the parameters \eq{parEuRos}, and the parameters and errors
of different numerical schemes.  Spot $S_0=1$, $T=1$.
}
{\small
\begin{tabular}{c|c|cr|cr}
\hline\hline
$K$ & BB &   $V_{fast}$ & Err & $V_H$ & Err \\
0.8    &             
0.221366383102
   &0.221365934 & -4.5E-07 & 
0.221366&  -3.8E-07 \\	   
 0.85    &    
 0.183528673034        &0.183528167& -5.1E-07 & 
 0.183529 &  3.3E-07\\
 0.9     &         
 0.149672232338
   &  0.149671686& -5.5E-07 &
0.149672 & -2.3E-07 \\
 0.95    &  
  0.120058500285
         &   0.120057930 & -5.7E-07 & 
0.120059 &  5.0E-07 \\
 1.00  &
 0.094737183593
    &          0.094736601 & -5.8E-07&
0.094737 &  -1.8E-07\\
 1.05 & 
 0.073563056962
  &              0.073562473 & -5.8E-07 &
0.073563 &  -5.7E-07\\
1.10 & 
0.056234603424
 &              0.056234026 &  -5.8E-07 &
 0.056234 & -6.0E-07\\
 1.15 & 
 0.042343450278
 &             0.042342884 &  -5.6E-07 &
0.042343 &  -4.5E-07 \\
 1.20  & 0.031424605687 &             0.031424055 & -5.5E-07 &
0.031424 & -6.1E-07\\
\\\hline
\end{tabular}
}
\begin{flushleft}{\tiny
BB: parameters of the scheme but $N$ and approximately the same CPU time and errors are as in the case $T=2$;
$N=72$.\\
$V_{fast}$: 
sinh-acceleration, parameters $\om_1=-0.5$, $b=0.7699$, $\om=0$,  $\ze=0.285754315$, $N=49$, and Modification II of the Adams method with $M=399$. CPU time 75.9 msec, average over 1000 runs.
\\
\vskip-0.18cm
$V_H$: hybrid method of \cite{RoughNotTough} (prices are rescaled). CPU time is in the range $548-582$ per strike.}
\end{flushleft}

\label{table:T=1}

 \end{table}

 \begin{table}
\caption{\small Prices of call options in the rough Heston model with the parameters \eq{parEuRos}, and the parameters and errors
of different numerical schemes.  Spot $S_0=1$, $T=0.5$.
}
{\small
\begin{tabular}{c|c|cr|cr}
\hline\hline
$K$ & BB &    $V_{fast}$ & Err & $V_H$ & Err \\
0.8    &      
0.206111802644
    & 
       0.206111447
    & -3.6E-07 & 
0.206112 &  2.0E-07 \\	   
 0.85    &  
 0.162807253574
    & 
      0.162806725
    & -5.3E-07 & 
 0.162807 &  -2.5E-07\\
 0.9     &   
 0.123947936854
          &            0.123947253  
          & -6.8E-07 &
0.123948 & 6.3E-08 \\
 0.95    &  
 0.090636214294
        &
          0.090635436          
        & -7.8E-07 & 
0.090636 &  -2.1E-07 \\
 1.00  &    0.063497549406
         &             0.063496759       
         & -7.9E-07&
0.063497 &  -5.5E-07\\
 1.05 &     
 0.042550077891
      &          0.042549356   
      & -7.2E-07 &
0.042550 &  -7.8E-07\\
1.10 &     
0.027251523720
       & 
          0.027250926     
       &  -6.0E-07 &
 0.027251 & -5.2E-07\\
 1.15 &     
 0.016679874270
        & 
           0.016679422  
        &  -4.5E-07 &
0.016680 &  1.3E-07 \\
 1.20  &    
 0.009761422351
       & 
          0.009761109
       & -4.2E-08 &
0.009761 & -4.2E-07\\
\\\hline
\end{tabular}
}
\begin{flushleft}{\tiny
BB: parameters of the  scheme but $N$ and, approximately, CPU time and errors are as in the case $T=2$;
$N=76$.
\\
$V_{fast}$: sinh-acceleration, parameters $\om_1=-0.5$, $b=0.7699$, $\om=0$,  $\ze=0.1338$, $N=49$, and Modification II of the Adams method with $M=120$. CPU time 25.9 msec, average over 1000 runs.
\\
\vskip-0.18cm
$V_H$: hybrid method of \cite{RoughNotTough} (prices are rescaled). CPU time is in the range $666-689$ msec. per strike.} \end{flushleft}

\label{table:T=0.5}

 \end{table}

 \begin{table}
\caption{\small Prices of call options in the rough Heston model with the parameters \eq{parEuRos}, and the parameters and errors
of different numerical schemes.  Spot $S_0=1$, $T=1/12$.
}
{\small
\begin{tabular}{c|c|rr|rr}
\hline\hline
$K$ & BB  &   $V_{fast}$ & Err & $V_H$ & Err \\
0.8    &    0.200005303706    
      & 0.2000053034 & -2.0E-10 & 
0.200005 &  -3.0E-07 \\	   
 0.85    &   0.150108198525
         & 0.1501081720 & -2.6E-08 & 
 0.150108 &  -2.0E-07\\
 0.9     & 
   0.101143602025
 
  &         0.1011436020 & -3.5E-08&
 0.101144 & 4.0E-07 \\
 0.95    &  
 0.056723587708
     &  0.0567235335 & -5.4E-08 & 
0.056723 &  -5.9E-07 \\
 1.00  &    
 0.023896784255
     &         0.0238966874& -9.6E-08&
0.023896 &  -7.8E-07\\
 1.05 &   
 0.006809088971
  &          0.0068090654& -2.3E-08 &
0.006809&  -8.9E-07\\
1.10 &      
0.001205286641
 &          0.0012052942&  7.6E-09 &
 0.001205 & -2.9E-07\\
 1.15 &       0.000124980334 &        0.0001249659 &  -2.4E-08 &
0.000124 &  -9.8E-07 \\
 1.20  &   7.3234038E-06 & 7.345E-06 & 2.1E-08 &
7.32E-06 & -3.4E-09\\
\\\hline
\end{tabular}
}
\begin{flushleft}{\tiny
BB: parameters of the scheme but $N$ and, approximately,  CPU time and errors  are as in the case $T=2$;
$N=77$. \\
$V_{fast}$: 
sinh-acceleration, parameters $\om_1=-0.5$, $b=0.7699$, $\om=0$,  $\ze=0.1840$, $N=28$, and Modification II of the Adams method with $M=156$. CPU time 76.9 msec, average over 1000 runs.\\
\vskip-0.18cm
$V_H$: hybrid method of \cite{RoughNotTough} (prices are rescaled). CPU time is in the range $410-423$ msec. per strike. }
\end{flushleft}

\label{table:T=1/12}

 \end{table}

 \begin{table}
\caption{\small Prices of call options in the rough Heston model with the parameters \eq{parEuRos} and the parameters and errors
of different numerical schemes.  Spot $S_0=1$, $T=1/52$.
}
{\small
\begin{tabular}{c|r|rr|rr}
\hline\hline
$K$ & BB & $V_{fast}$ & Err &    $V_H$ & Err \\
0.8    &      0.200000000000031              & 
0.2000000000001
 & 8.3E-14 & 0.2 &  -3.1E-14 \\	   
 0.85    &   0.150000000807382     
   &0.1500000008074 & -6.7E-13  & 0.15 &  -8.1E-10\\
 0.9     &    0.100001974464705   
   &.           0.100001974456335 & -8.4E-12 &
0.10002 & 2.6E-08 \\
 0.95    &     
         0.050452602188597
      & 0.050452601708825    
      & -4.8E-10  & 
0.050491 &  3.8E-05 \\
 1.00  &
        0.011166584429206
    & 0.011166583626682& 8.0E-10&
0.011347 &  1.8E-04\\
 1.05 &      
     0.000375237694910& 
     0.000375237900273
   & 2.0E-10 &
4.113E-04 &  3.6E-05\\
1.10 & 6.99705E-07  & 6.698E-07

&  3.7E-12 &
 9.22E-07& 2.2E-07\\
 1.15 & 3.748E-11  & 3.748E-11 &  8.6E-15 &
6.82E-11 &  3.1E-11 \\
 1.20  &6.19E-17 & 3.76E-17 & -2.4E-17 &
1.8E-15 & 1.8E-15\\
\\\hline
\end{tabular}
}
\begin{flushleft}{\tiny 	
BB: two sets of sinh-deformations for OTM and ATM puts and calls; Modification III of the Adans method with $M=20000$.
CPU times are 135.2 and 132.9 sec., 1 run. Sets of parameters of sinh-deformations:\\
 (1) $\om_1=0.325762041$, $b=	1.014615984$, $\om=0.2$, $\ze=0.069857441$, $N=93$;\\
 (2) $\om_1=-1.325762041$, $b=	1.014615984$, $\om=-0.2$, $\ze=0.069857441$, $N=93$;\\
The differences with prices obtained using similar deformations with $\om=\pm 0.1$ are less than E-11.\\
$V_{fast}$:
two sets of sinh-deformations and Modification III of the Adans method with $M=1000$.
CPU times are 588.6 and 582.2 msec., the averages over 1000 runs. Sets of parameters of sinh-deformations\\ 
 (1) $\om_1=0.429259757$, $b=	0.868680815$, $\om=0.1$, $\ze=0.1002$, $N=69$;\\
 (2) $\om_1=-1.429259757$, $b=	0.868680815$, $\om=-0.1$, $\ze=0.1002$, $N=69$;\\
 \vskip-0.2cm
 $V_H$: hybrid method of \cite{RoughNotTough} (prices are rescaled). The CPU time is in the range $125-164$ msec.
 per strike.}
\end{flushleft}

\label{table:T=1/52}

 \end{table}

 \begin{table}
\caption{\small Prices of calls options in the rough Heston model with the parameters \eq{parEuRos}, and 
the parameters and errors of  different numerical schemes.  Spot $S_0=1$, $T=1/252$.
}
{\small
\begin{tabular}{c|rr|r|r|r|r}
\hline\hline
$K$ & $BB$ &    $Err$ &  & $V_H$ & $Err_{V_H}$ \\
 0.95    &  
 0.05000024558
        &2.3E-11&  & 
0.050003&  1.4E-09 \\
 1.00  &    0.0050111443   & 1.4E-09 & &
0.005012 &  8.6E-07\\
 1.05 &     3.3118E-08 &  3.1E-10 & &
6.39E-07 &  6.1E-07
\\\hline
\end{tabular}
}
\begin{flushleft}{\tiny 	Prices of OTM puts for $K=0.8-0.9, 1.1-1.2$ are smaller than E-12. Prices of OTM calls are smaller than E-15.
\\
For $K=1.1,1.15,1.2$, \cite{RoughNotTough} lists prices of OTM calls of the order of E-07 (apparently, artifacts of CM method used),
and prices of OTM puts at $K=0.8, 0.85, 0.9$ are assigned value 0.\\
$BB$: two sets of sinh-deformations and Modification III of the Adans method with $M=10000$.
CPU times are 430 and 494 sec., 1 run. Sets of parameters: \\
(1) $\om_1=0.429259757$, $b=0.868680815$, $\om=0.1$, $\ze=0.075145263$, $N=110$;\\
 (2) $\om_1=-1.429259757$, $b=0.868680815$, $\om=-0.1$, $\ze=0.075145263$, $N=110$;\\
 $V_{fast}$: the results obtained with with $\om_1=0.325762041,-1.325762041$, $\om=0.2$, $b=1.014615984$, $\ze=0.07335$,	$N=89$, and Modification III of the Adams method with $M=1000$. CPU times are 672.2 and 722.0 msec, the average over 1000 runs.
\\

$V_H$: hybrid method of \cite{RoughNotTough} (prices are rescaled). CPU time is in the range $154-196$ msec. per strike.
\\ Different sets of parameters of the scheme give prices  with differences of the order 
5E-0.9 for $K=0.8$; 5E-10 for $K=0.85$ for OTM puts, and of the order of $E-10$ for OTM calls.
In \cite{RoughNotTough}, 
prices of OTM calls are of the order of $E-07$.

}
\end{flushleft}
\label{table:T=1/252}

 \end{table}

\end{document}